\tikzstyle{overbrace text style}=[font=\tiny, above, pos=.5, yshift=5pt]
\tikzstyle{overbrace style}=[decorate,decoration={brace,raise=5pt,amplitude=3pt}]
\newtheorem{theorem}{Theorem}[section]
\newtheorem{corollary}[theorem]{Corollary}
\newtheorem{lemma}[theorem]{Lemma}
\theoremstyle{definition}
\newtheorem*{comment*}{Comment}
\newcommand{\SW}{\text{SW}}
\newcommand{\CE}{\text{CE}}
\newcommand{\NE}{\text{NE}}
\newcommand{\PoA}{\text{PoA}}
\newcommand{\PoS}{\text{PoS}}
\newcommand{\bv}{\mathbf{v}}
\newcommand{\Tau}{\text{TE}}
\newcommand{\po}{$\mathbf{(P_0)}$}
\newcommand{\pOne}{$\mathbf{(P_1)}$}
\title{\bf Variety-Seeking Jump Games on Graphs}
\author[1]{Lata Narayanan}
\author[1]{Jaroslav Opatrny}
\author[1]{Shanmukha Tummala}
\author[2]{Alexandros A. Voudouris}
\affil[1]{Department of Computer Science and Software Engineering, Concordia University}
\affil[2]{School of Computer Science and Electronic Engineering, University of Essex}
\date{}
\begin{document}

\maketitle

\begin{abstract}
We consider a class of jump games in which agents of different types occupy the nodes of a graph aiming to maximize the {\em variety of types} in their neighborhood. In particular, each agent derives a utility equal to the {\em number of types different from its own} in its neighborhood. We show that the jump game induced by the strategic behavior of the agents (who aim to maximize their utility) may in general have improving response cycles, but is a potential game under {\em any} of the following four conditions: there are only two types of agents; or exactly one empty node;  or the graph is of degree at most 2; or the graph is 3-regular and there are two empty nodes. Additionally, we show that on trees, cylinder graphs, and tori, there is always an equilibrium. Finally, we show tight bounds on the price of anarchy with respect to two different measures of diversity: the social welfare (the total utility of the agents) and the number of colorful edges (that connect agents of different types).
\end{abstract}

\section{Introduction}
While analyzing residential segregation, \citet{schelling1969models,schelling1971dynamic} imagined a simple scenario in which agents of two different types are randomly assigned to the nodes of a graph representing a city. The agents are allowed to randomly {\em jump} to other available locations or {\em swap} locations with other agents whenever this increases the fraction of same-type neighbors they have, up to a threshold. Schelling experimentally showed that, in most cases, this random behavior of the agents leads to segregated neighborhoods consisting only of agents of one type. His work inspired researchers in many different disciplines to further study this model and generalize it to capture more complicated dynamics between agents of different types.  Recent work within the multi-agent literature has considered game-theoretic variants of Schelling's model in which agents act {\em selfishly} rather than randomly by aiming to maximize a {\em utility} function. This strategic behavior of the agents then defines a {\em game} between them and the objective is to analyze whether {\em equilibria} (stable assignments where agents simultaneously achieve the maximum possible utility they can) exist and what properties (related to segregation) they have. 

With few exceptions, most of the utility functions that have been proposed and studied over the years can be described as {\em similarity-seeking} or {\em homophilic} in the sense that agents prefer to be close to other agents of the same type~\citep{ijcai2022p22,bullinger2021welfare,chauhan2018schelling,echzell2019convergence,schelling-journal,kanellopoulos2020modified}. Such a behavior is well-justified in scenarios where agents aim to form groups that share similar interests or require similar skill sets to complete tasks. On the other hand, however, there many other important applications where the desideratum is {\em diversity}. For example, data from the General Social Survey \citep{smith2019general} (conducted in the US since 1950) show that a steadily increasing percentage of people prefer to live in diverse neighborhoods. In addition, many governments, businesses, and other institutions are actively promoting increased diversity as being beneficial to both societal harmony and efficiency. 

Motivated by applications like those mentioned above, some recent papers have considered utility functions which aim to model diversity as being beneficial for the agents. In particular, \citet{ijcai2022p12} and \citet{friedrich2023single} studied games with two types of agents and a single-peaked utility function which increases monotonically with the fraction of same-type neighbors in the interval $[0, \Lambda]$ for some $\Lambda \in (0,1)$, and then decreases monotonically. A different model was proposed by \citet{kanellopoulos2023tolerance} who focused on a utility function that assigns different weights to different types according to their position on a line, indicating different preferences over the types. More recently, \citet{Narayanan2023diversity} focused on jump games where the utility of an agent is the fraction of its {\em different-type} neighbors; essentially, this function is the complement of the one implied by Schelling's original model and used in a plethora of subsequent works. 

\subsection{Our Contribution}
An important aspect of the models studied in the aforementioned papers is that the agents are mainly concerned with the type difference between themselves and their neighbors, and not the difference or diversity {\em among} their neighbors. For example, according to the utility function considered by \citet{Narayanan2023diversity}, an agent is fully satisfied even if all its neighbors are of one type, as long as this type is different than its own. Whether this is a truly diverse neighborhood is up to debate. In this paper, we study jump games where agents are seeking {\em variety}: An agent's utility is defined as the {\em number of types different from its own} among its neighbors. Such a utility function was very recently studied for swap games by \citet{LNO25}. With this utility function, the maximum possible utility is the maximum between the number of types and the degree of the underlying graph. 

To be more specific, we consider jump games with $n$ agents that are partitioned into $k \geq 2$ different types and occupy the nodes of a graph $G=(V, E)$, where $|V| < n$. Each agent aims to maximize the number of different types in its neighborhood. We first show that such games may have an {\em improving response cycle}, even when $G$ is $3$-regular and $|V|=n+3$. This means that there is an initial assignment of the agents to the nodes of the graph so that if the agents jump one by one to empty nodes of the graph, they will eventually cycle back to that initial assignment; in other words, starting from such an assignment, an equilibrium cannot be reached. In spite of this impossibility, we show that the game is {\em potential} (that is, the Nash dynamics converges to an equilibrium) under {\em any} of the following conditions: $k=2$; $|V|=n+1$; the graph is of degree at most $2$; the graph is $3$-regular and $|V|=n+2$. Additionally, we show that there is always an equilibrium when the graph is a tree, or a cylinder, or a torus by carefully constructing one for these cases. 

We next switch to analyzing the quality of equilibria under two different objectives: The {\em social welfare} (defined as the total utility of the agents) and the {\em number of colorful edges} (defined as the number of edges between agents of different types). Both objectives are different measures of diversity and have been considered before by \cite{LNO25} for the case of swap games. We show tight bounds on the {\em price of anarchy}, which quantifies the loss in the social welfare or the number of colorful edges in the worst equilibrium. In particular, we show that the price of anarchy with respect to social welfare is $\Theta(n)$ for general games, and $\Theta(k)$ for symmetric types (that is, when all types are of the same cardinality). For colorful edges, we show that the price of anarchy is $\Theta(n)$  even for symmetric types, and $\Theta(\delta)$ when the graph is $\delta$-regular and the types are symmetric. In addition, for both objectives, we provide lower bounds on the price of stability, showing that there are instances in which the optimal assignment is not necessarily an equilibrium. 

\subsection{Related Work}
As already mentioned, many generalizations and variations of Schelling's random model have been studied via agent-based simulations in many different disciplines, including sociology \citep{Clark2008understanding}, economics \citep{Pancs2007spatial,Zhang2004residential}, and physics \citep{Vinkovic2006schelling}. In computer science, most of the work has focused on probabilistic analyses of Schelling's model~\citep{Barmpalias2014digital,Bhakta2014clustering,Brandt2012one,Immorlica2017exponential,Blasius2023flip} and also on the computational complexity of assignments with certain efficiency properties~\citep{bullinger2021welfare,deligkas2024parameterized}. 

Our work is related to a recent stream of papers that consider Schelling games induced when the agents do not act randomly but strategically by maximizing a utility function. The majority of papers in this literature study jump and swap games with similarity-seeking utility functions that one way or another depend on the ratio of same-type agents in the neighborhood, and consider questions related to the existence of equilibria, their computational complexity, and their quality as measured by the price of anarchy and the price of stability. Some of the first papers in this area include the works of \citet{chauhan2018schelling} and \citet{echzell2019convergence} who studied the convergence of the best response dynamics to equilibrium assignments, and the work of \citet{schelling-journal} who showed that equilibria may not always exist and, even when they do, they might be hard to compute. Much of the follow-up work provided stronger hardness results~\citep{Kreisel2024equilibria}, studied different utility functions aiming to model other natural agent behaviors~\citep{kanellopoulos2020modified,bilo2022topological}, and made different assumptions about how agents are related to each other~\citep{Bilo2023continuous,chan2020schelling}  

Some recent work on Schelling games has deviated from the standard assumption that agents aim to be close to their own type and have proposed different models in which the agents derive utility from other types of agents as well~\citep{ijcai2022p12,friedrich2023single,kanellopoulos2023tolerance,Narayanan2023diversity,LNO25}. The paper most related to ours is that of \cite{LNO25} who studied swap games with several different diversity-seeking utility functions, including the one we consider here, that is the number of different types in an agent's neighborhood. Among other results, in contrast to our work here where we show that there might exist improving response cycles, they showed that swap games are always potential no matter the structure of the underlying graph. They also showed tight bounds on the price of anarchy for objectives such as the number of colorful edges that we also consider, as well as some other ones.
Most of their bounds indicate that the price of anarchy depends on the degree of the graph and in many case tends to $1$ as the number of agents $n$ or the number of types $k$ grows; in contrast, our price of anarchy bounds suggest a linear dependency on $n$ or $k$ in most cases. 

\section{Preliminaries}
There is a set $N$ of $n\geq 2$ {\em agents} who are partitioned into $k \geq 2$ {\em types} $\mathcal{T} = \{T_1,\ldots,T_k\}$. We denote by $t(i)$ the type of agent $i$; that is, $t(i) = T$ if $i \in T$. We denote by $n_T$ the number of agents of type $T$; if $n_T = n/k$ for each type $T$, then the types are called {\em symmetric}. Each agent $i$ occupies a node of a connected graph $G = (V,E)$ with $|V| > n$. An {\em assignment} $\bv = (v_i)_{i \in N}$ specifies the node $v_i$ that each agent $i$ occupies in $G$, such that $v_i \neq v_j$ for different agents $i$ and $j$. Given an assignment $\bv$, we denote by $N_i(\bv)$ the {\em neighbors} of $i$, which is the set of all agents that occupy nodes adjacent to $v_i$ in $G$. Also, for any  node $v \in V$ and assignment $\bv$, we denote by $T_{v}(\bv)$ the set of agent types located at nodes adjacent to $v$. Given this, the {\em type-count} of node $v$ is $\tau_v(\bv)=|T_{v}(\bv)|$.
Depending on the formation of its neighborhood, each agent $i$ gains a {\em utility} $u_i(\bv)$ equal to the number of types different than $t(i)$ in $i$'s neighborhood.

An assignment $\bv$ is a {\em pure Nash equilibrium} (or, simply, {\em equilibrium}) if no agent can strictly increase its utility by unilaterally jumping to an {\em empty} node in the graph. That is, for every agent $i$ and empty node $v$ (according to $\bv$), $u_i(\bv) \geq u_i(\bv')$, where $\bv'$ is the same as $\bv$ with the only exception that $v_i' = v$. For an instance $I = (N,\mathcal{T},G)$, let $\NE(I)$ be the set of equilibrium assignments; note that this set might be empty.

We are interested in characterizing the classes of graphs for which equilibrium assignments are guaranteed to exist. To do so, we will in many cases show that the best-response Nash dynamics converges to an equilibrium by identifying an {\em ordinal potential function} $\Phi(\bv)$. Such a function has the following property: For any two assignments $\bv$ and $\bv'$ that differ only on the node occupied by an agent $i$, it holds that 
\begin{align*}
    ( u_i(\bv) - u_i(\bv') )\cdot ( \Phi(\bv) - \Phi(\bv') ) > 0.
\end{align*}
Essentially, this property requires that the function is strictly increasing whenever there is an agent with a deviating strategy that leads to strictly larger utility. Note that a strictly decreasing function can be also used as a potential since its monotonicity can be switched by changing its sign. A potential function can be defined only if there is no {\em improving response cycle} (IRC) in the Nash dynamics (that is, a sequence of assignments in which the first and last assignments are the same, and two consecutive assignments differ only on the node occupied by an agent who has larger utility in the latter assignment of the two). 

When equilibrium assignments exist, we are also interested in measuring their quality in terms of achieved diversity. To do this, we consider two objectives functions:
\begin{itemize}
    \item The {\em social welfare} ($\SW$), defined as the total utility of the agents:
    \begin{align*}
        \SW(\bv) = \sum_{i \in N} u_i(\bv).
    \end{align*}
    \item The {\em number of colorful edges ($\CE$)} which is the number of edges whose endpoints are occupied by agents of different types. 
\end{itemize}
For each objective $f\in \{\SW,\CE\}$, we define the {\em price of anarchy} as the worst-case ratio (over a class of instances $\mathcal{I}$) between the maximum possible $f$-value over all assignments and the {\em minimum} $f$-value over all equilibrium assignments:
\begin{align*}
   \PoA_f =  \sup_{I \in \mathcal{I}} \frac{\max_\bv f(\bv)}{\min_{\bv \in \NE(I)} f(\bv)}.
\end{align*}
Similarly, the {\em price of stability} is defined as the worst-case ratio (over a class of instances $\mathcal{I}$) between the maximum possible $f$-value over all assignments and the {\em maximum} $f$-value over all equilibrium assignments:
\begin{align*}
   \PoS_f = \sup_{I \in \mathcal{I}} \frac{\max_\bv f(\bv)}{\max_{\bv \in \NE(I)} f(\bv)}.
\end{align*}
Observe that, be definition, $\PoA_f \geq \PoS_f \geq 1$.  

\section{Existence of Equilibria}
In this section we focus the existence of equilibrium assignments. We first start with an impossibility: There exists a game with an improving response cycle in its Nash dynamics. This implies that the jump game we consider is not always a potential game, and thus the existence of an equilibrium assignment can not always been shown by constructing a potential function. 

\begin{theorem} \label{thm:IRC}
There exists a game with an improving response cycle in the Nash dynamics, even when the graph is $3$-regular and there are three empty nodes. 
\end{theorem}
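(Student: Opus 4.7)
The plan is to exhibit an explicit counterexample: a 3-regular graph $G$, a set of agents partitioned into $k\ge 3$ types with $|V|=n+3$, an initial assignment $\bv^0$, and a finite sequence of single-agent jumps $\bv^0 \to \bv^1 \to \cdots \to \bv^m = \bv^0$ such that each transition $\bv^t\to\bv^{t+1}$ strictly increases the utility of the moving agent. Because the game is potential when $k=2$ (as claimed elsewhere in the paper), I must use at least three types; $k=3$ should suffice, since it already allows each agent's utility to take values in $\{0,1,2\}$ and hence gives enough room for non-monotonic interactions between agents.

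I would first pick the smallest 3-regular graph that can plausibly support such a cycle. With exactly three empty nodes, a 6-vertex 3-regular graph (the triangular prism or $K_{3,3}$) leaves only three agents, which is likely too rigid since each type then has a unique representative and utility changes are highly constrained. So I would begin with 8-vertex candidates such as the 3-cube $Q_3$ or the 4-prism, yielding $5$ agents and allowing type multiplicities such as $(2,2,1)$ or $(3,1,1)$, and fall back to 10-vertex graphs like the Petersen graph if these turn out to be too symmetric. The conceptual asymmetry I plan to exploit is that when agent $i$ jumps from $u$ to $v$, the change in $i$'s utility depends only on $T_v(\bv)\setminus\{t(i)\}$ and $T_u(\bv)\setminus\{t(i)\}$, but the jump simultaneously alters the type-counts $\tau_w(\bv)$ seen by every other agent $w$ adjacent to $u$ or $v$, which changes their future incentives. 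I would design the initial assignment so that three agents rotate around a short loop through the three empty nodes, with each mover landing in an empty slot whose neighborhood momentarily gains a type it was missing while the vacated slot loses one it had.

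The main obstacle will be verifying \emph{strict} improvement at every transition: it is easy to produce sequences in which some jump is neutral (the moving agent sees exactly the same set of types before and after), which immediately breaks the cycle into disjoint components of the Nash dynamics. If a hand-constructed candidate fails on some step, I would iteratively adjust the graph, the type multiplicities, or the initial placement; for very small $n$ a brute-force search over all assignments and best-response graphs is feasible and would confirm existence conclusively. Once a valid cycle is found, I would present it as a figure together with a table listing the assignment and the utility vector at each step, explicitly exhibiting that the moving agent's utility strictly increases in each transition and that $\bv^m=\bv^0$.
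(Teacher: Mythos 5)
Your proposal correctly identifies the only viable strategy for this statement -- the theorem is purely existential and its proof \emph{is} an explicit example -- but as written it contains no proof at all. Everything you give is a search plan: candidate graphs ($Q_3$, the $4$-prism, the Petersen graph), candidate type multiplicities, a heuristic about why $k\ge 3$ is needed, and a promise to ``iteratively adjust'' or brute-force until a cycle is found. No concrete graph is fixed, no initial assignment is specified, no sequence of jumps is listed, and no utility is ever computed. The burden of the theorem lies entirely in exhibiting and verifying one such instance, and that step is exactly the one you defer. Saying that a small example ``should'' exist and could be confirmed by exhaustive search does not discharge it; until the table of assignments and utilities you describe in your last sentence is actually produced and checked, the statement remains unproved.

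For comparison, the paper's proof does precisely what you outline but carries it out: it fixes a $3$-regular graph (a cylinder) with three empty nodes $d,e,f$ and agents of three types, and walks through a six-step cycle in which a red agent jumps $a\to d$ (utility $1\to 2$), then a blue agent $b\to e$, then a green agent $c\to f$, after which the displaced red, blue, and green agents successively jump back to $a$, $b$, $c$, each time strictly improving from $1$ to $2$, returning to the initial assignment. Your qualitative intuition -- that each mover lands in an empty slot whose neighborhood gains a missing type while the vacated slot degrades for the next mover -- matches the mechanism of the paper's example, and your worry about neutral (non-strict) jumps is the right thing to check. But to turn your proposal into a proof you must commit to one specific graph and assignment and verify the strict inequality at every one of the transitions, as the paper does.
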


\begin{proof}
Consider the improving response cycle shown in Figure~\ref{fig:IRC}. In the first assignment in the left column, the red agent occupying node $a$ has utility $1$ (as it only has blue neighbors) and prefers to jump to node $d$ to improve its utility to $2$. 
This leads to the second assignment in the left column where the blue agent at node $b$ has now utility $1$ (as it only has green neighbors) and incentive to jump to node $e$ to increase its utility to $2$. 
This leads to the third assignment in the left column where the green agent at node $c$ has utility $1$ (as it only has red neighbors) and incentive to jump to node $f$ to increase its utility to $2$. 
This leads to the third assignment in the right column where the red agent at node $d$ has utility $1$ again, and would now prefer to jump back to node $a$ where it can get utility $2$. 
This leads to the second assignment in the right column where the blue agent at node $e$ has utility $1$ and would also now prefer to go back to node $b$ to get utility $2$. 
This leads to the first assignment in the right column where the green agent at node $f$ has been left with only red neighbors and now has incentive to jump to node $c$ again to get utility $2$, thus completing the cycle.
\end{proof}

\begin{figure}[t]
\centering
    \includegraphics[scale=0.8]{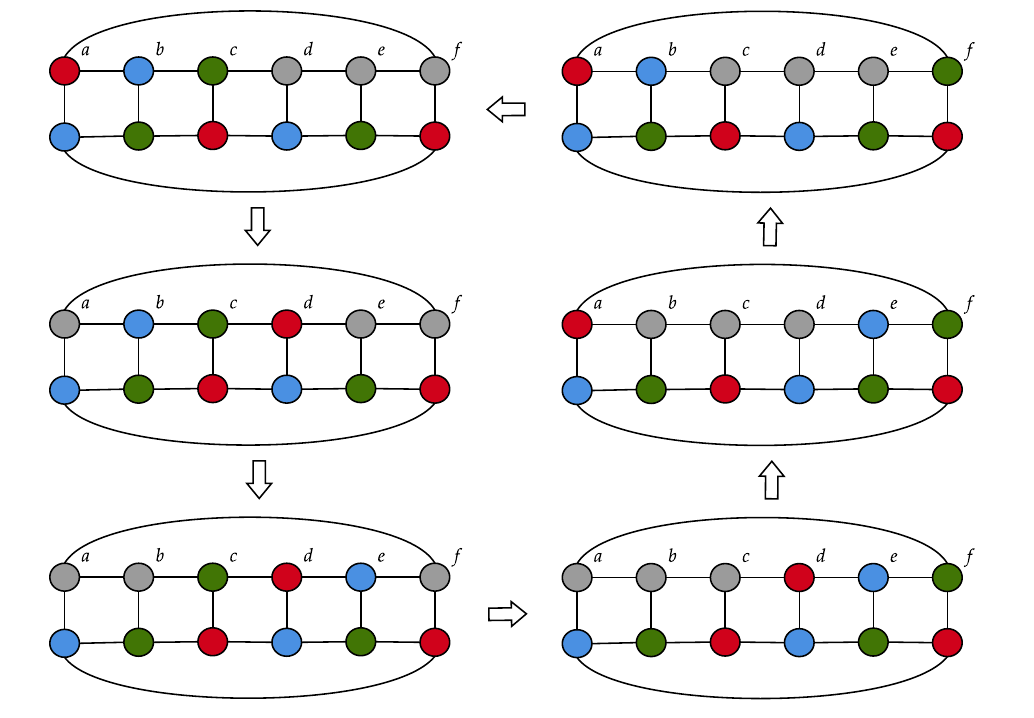}
    \caption{A game with an improving response cycle.}
    \label{fig:IRC}
\end{figure}

Next, we identify several cases where the game is indeed potential, and thus there is always at least one equilibrium assignment. In particular, we show that this is true when 
(1) there are only two types of agents, 
(2) there is a single empty node, 
(3) the graph is of degree at most $2$, and 
(4) there are two empty nodes and the graph is $3$-regular.  

\begin{theorem} \label{thm:existence:2agents}
The game is potential when there are only two types of agents.
\end{theorem}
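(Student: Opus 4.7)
The plan is to exhibit an ordinal potential function, which rules out improving response cycles and makes the game potential. The natural candidate is the number of colorful edges itself: I would set $\Phi(\bv) := \CE(\bv)$.

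The first step is to exploit the two-type restriction. With $k=2$, the utility of any agent is binary: it equals $0$ if the agent has no neighbor of the opposite type, and $1$ otherwise. Consequently, every strictly improving jump moves the deviating agent from utility $0$ to utility $1$. I would fix such a jump by a (say) Red agent from its current node $v$ to an empty node $w$, and observe that only edges incident to $v$ or to $w$ can change their colorful status.

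The main step is a short edge-bookkeeping argument. Partition the affected edges into three groups: edges $\{v,x\}$ with $x \neq w$, edges $\{w,y\}$ with $y \neq v$, and the single edge $\{v,w\}$ if it exists. Before the jump, the first group contributes $B_v$, the number of Blue neighbors of $v$, while the second contributes $0$ since $w$ is empty; after the jump, these contributions become $0$ and $B_w$ respectively. The edge $\{v,w\}$, if present, contributes $0$ both before (as $w$ is empty) and after (as $v$ is now empty). Hence $\Phi(\bv') - \Phi(\bv) = B_w - B_v$. The utility hypotheses give $B_v = 0$ (old utility $0$) and $B_w \geq 1$ (new utility $1$), so $\Phi$ strictly increases on every improving jump. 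Since $\Phi \leq |E|$ is bounded, $\Phi$ is an ordinal potential and the game is potential.

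I do not anticipate a serious obstacle. The crucial structural fact is that binary utilities force every improving move to be a $0 \to 1$ transition for the moving agent, which via the bookkeeping above forces a net gain in colorful edges around it. The only subtle point is the case where $v$ and $w$ are adjacent, handled by noticing that $\{v,w\}$ has an empty endpoint in both configurations. It is worth emphasizing that this argument is inherently two-type: for $k \geq 3$, an agent can strictly improve its utility (say from $1$ to $2$) by picking up a new type in its neighborhood while simultaneously losing colorful edges, which is precisely the phenomenon that enables the cycle exhibited in \Cref{thm:IRC}.
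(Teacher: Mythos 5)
Your proof is correct, and it rests on the same key structural observation as the paper's---with only two types, utilities are binary, so every improving jump is a $0\to 1$ move by an agent with no opposite-type neighbors---but it runs a genuinely different potential. The paper uses the social welfare $\SW(\bv)$: since the mover had only same-type neighbors, its departure changes no one's utility, its arrival can only keep or raise the utilities of its new neighbors, and its own utility rises by $1$, so $\SW$ strictly increases. You instead use $\CE(\bv)$, with the local edge bookkeeping $\CE(\bv')-\CE(\bv)=B_w-B_v\geq 1$ (your handling of a possible edge $\{v,w\}$, which has an empty endpoint both before and after, is the right way to close that case). The two potentials are not the same function (for $k=2$, $\SW$ counts agents with at least one opposite-type neighbor, $\CE$ counts bichromatic edges), but each yields a valid generalized ordinal potential. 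The $\SW$ argument is slightly more ``agent-centric'' and immediately gives that the welfare-maximizing assignment among reachable ones is stable, while your $\CE$ argument is more local and is closer in spirit to the colorful/monochromatic-edge potentials the paper deploys later (e.g., $\Phi=2\CE+c$ in Theorem~\ref{thm:existence:degree2} and the monochromatic-edge counts in Lemma~\ref{lem:variety:potential:3-degree-1}); your closing remark correctly identifies why neither argument survives $k\geq 3$.
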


\begin{proof}
We argue that the social welfare (the total utility of the agents) is a potential function. Observe that, if some agent $i$ jumps to an empty node, then it does so because its utility increases from $0$ to $1$. Since $i$ has only neighbors of its own type (so that its utility is $0$) in the initial assignment, the deviation of $i$ does not alter the utility of those agents. In addition, the utility of each of $i$'s new neighbors in the new assignment does not decrease in comparison to how much it was before; in particular, it either remains the same or increases from $0$ to $1$. Hence, the social welfare overall increases by at least $1$ after $i$ jumps, and is thus a potential function, leading to an equilibrium when it is maximized. 
\end{proof}

\begin{theorem} \label{thm:existence:1empty}
The game is potential when the graph contains just one empty node. 
\end{theorem}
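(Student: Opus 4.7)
The plan is to exhibit an ordinal potential function whose value strictly increases on every improving jump, thereby ruling out improving response cycles. With only one empty node $e$, the dynamics is very restricted: any improving jump takes some agent $i$ from its current position $u$ to $e$, and the new empty node becomes $u$. Thus, up to the identity of the hole, each move is a transposition of some agent with the empty node, which I would want to exploit.

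The first thing I would check is whether the social welfare $\SW(\bv)$ suffices, as in the proof of Theorem~\ref{thm:existence:2agents}. One can verify it does not once $k \geq 3$: an agent $i$ at $u$ with many neighbors whose only type-$t(i)$ neighbor is $i$ can strictly improve by jumping to a lower-degree but more diverse $e$, causing a net loss in $\SW$ because several agents in $N(u)$ lose the type $t(i)$ from their neighborhoods. Consequently, I would augment the welfare by a correction term that depends on the current empty node's local structure. Natural candidates are $\tau_{e(\bv)}(\bv)$, $\deg(e(\bv)) - \tau_{e(\bv)}(\bv)$, or $\sum_{v \in N(e(\bv))} \tau_v(\bv)$, with an appropriate sign and coefficient to be tuned.

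For each improving jump of $i$ from $u$ to $e$, I would then decompose the change in the candidate potential into: (i) the mover's own gain $\Delta u_i \geq 1$; (ii) a loss of up to $|A|$ among the agents in $N(u)\setminus N(e)$ whose sole $t(i)$-typed neighbor was $i$ and whose own type differs from $t(i)$; (iii) a symmetric gain of up to $|B|$ among the agents in $N(e)\setminus N(u)$; and (iv) the change of the correction term as the hole moves from $e$ to $u$. The strict improvement of $i$ yields an inequality between $\tau_u(\bv)$ and $\tau_e(\bv)$ (refined by whether $t(i) \in T_u$ or $t(i) \in T_e$), which, together with the trivial bounds $|A| \leq \deg(u)$ and $|B| \leq \deg(e)$, should allow the correction term in (iv) to absorb any negative contribution from (ii).

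The main obstacle is pinning down the right correction term: the two extreme regimes (hole moving from a large-degree, low-diversity vertex to a small-degree, high-diversity vertex, and conversely) push any linear combination of $\SW$ with a function of $(\deg(e),\tau_e)$ in opposite directions, so the correct correction is likely nonlinear or uses structural information about each individual vertex in $N(e(\bv))$ rather than the aggregate $\tau_{e(\bv)}$ alone. As a fallback, if an explicit $\Phi$ turns out to be too delicate to guess, I would instead rule out improving response cycles directly: a length-$m$ cycle corresponds to a word of $m$ transpositions $(i_s,\mathrm{hole})$ whose product is the identity permutation, and an induction on $m$ combined with the strict-improvement requirement should force some pair of consecutive jumps in the cycle to cancel positionally, which then contradicts the fact that each such jump strictly increases the mover's utility.
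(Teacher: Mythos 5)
Your main route does not reach a proof: you correctly observe that $\SW$ alone fails for $k\geq 3$ and propose to repair it with a correction term depending on the hole's local structure, but you then concede that no linear combination of $\SW$ with functions of $(\deg(e),\tau_e)$ works and leave the correction term unspecified. As written, no potential function is exhibited, so this branch proves nothing. Your fallback is closer in spirit to what the paper actually does (the paper never constructs an explicit potential for this theorem either; it rules out improving response cycles directly), but its key step is unjustified and false as a combinatorial claim: in a one-hole jump sequence that returns to the initial assignment, the hole traces a closed walk in $G$, and the requirement that the induced permutation of agents be the identity does not force any pair of consecutive moves to cancel positionally. For instance, the hole can traverse a cycle of the graph with no backtracking, cyclically shifting the agents along it, and repeating this walk returns to the identity without any adjacent cancellation. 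All of the difficulty lies in exploiting the strict-improvement condition, and your proposal gives no mechanism for doing so.

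Concretely, the missing ingredient is a monotone quantity (or family of quantities) that must strictly change along any hypothetical cycle. The paper's proof supplies this via an induction on the type-count of the empty node: it shows that no improving response cycle can contain an assignment in which the hole $u$ satisfies $\tau_u(\bv)=\ell$, for each $\ell\in[k]$. In the base case $\ell=1$, every jump in such a cycle would have to be made by a segregated agent, and each such jump strictly decreases the number of segregated agents; in the induction step, a short case analysis on whether $t(i)$ lies in $T_u(\bv)$ or $T_v(\bv)$ shows that every jump keeps the hole's type-count at $\ell+1$ and strictly decreases the number of monochromatic edges. Either monotone count contradicts returning to the starting assignment. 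Your proposal contains neither the induction on the hole's type-count nor any such strictly decreasing statistic, so the contradiction you need is never established; to complete the argument along your fallback line you would essentially have to rediscover this structure.
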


\begin{proof}
We will prove by induction that, for every $\ell \in [k]$, there is no IRC containing an assignment $\bv$ in which the empty node, say $u$, has agents of $\ell$ different types surrounding it, that is, $\tau_u(\bv)=\ell$. Consequently, no IRC can exist in the Nash dynamics, thus showing that the game is potential.

\medskip
\noindent 
{\bf Base case: $\ell=1$.}
Suppose for the purpose of contradiction that there exists an IRC containing an assignment $\bv$, in which $|T_u(\bv)|=1$ for the empty node $u$. Without loss of generality, suppose all the agents adjacent to $u$ are of type $R$.  Consider the next assignment $\bv'$ in the IRC, and suppose agent $i$ is located at node $u$ in $\bv'$. 
Let $v$ be the node that $i$ occupies in $\bv$; that is, agent $i$ jumps from node $v$ to node $u$, implying that $u_i(\bv') > u_i(\bv)$. Note that $v$ is the empty node in the assignment $\bv'$. 

Agent $i$ cannot be of type $R$ since it gets utility $1$ after jumping to $u$, and it must have had utility $0$ by occupying $v$ in assignment $\bv$. If $i$ had no neighbors in the assignment $\bv$, then the only neighbor of $v$ in the graph is $u$, the empty node, but then $u$ has neighbors of two types, a contradiction. Therefore, agent $i$ must have had only neighbors of its own type when occupying $v$ in $\bv$. So, $i$ was segregated in the assignment $\bv$ and $|T_v(\bv')|=1$. 
Applying the same argument again, it follows that every assignment in the IRC has an empty node in which all neighboring agents are of the same type, and every jump is made by a segregated agent. Notice however, that each such jump reduces the number of segregated agents, leading to a contradiction.   

\medskip
\noindent
{\bf Induction hypothesis:}
Assume there is no IRC containing an assignment where the empty node has neighbors of $p$ different types, for all $p \in [\ell]$ for some $\ell < k$. 

\medskip
\noindent
{\bf Induction step:}
We will show that the hypothesis remains true for $\ell+1$.
Suppose instead that there is an IRC containing an assignment $\bv$ where the empty node $u$ has neighbors of exactly $\ell+1$ types, that is, $\tau_u(\bv) = \ell+1$. Consider the next assignment $\bv'$ in the IRC that is the result of agent $i$ jumping from node $v$ to $u$. Then, the empty node in $\bv'$ is $v$ and, by the inductive hypothesis, $\tau_v(\bv') > \ell$. Consider the following three cases: 
\begin{itemize}
\item $t(i) \in T_u(\bv)$.
Then $u_i(\bv') = \ell$ and $u_i(\bv) \geq \ell$, which contradicts the fact that $i$ is motivated to jump.

\item $t(i) \not\in T_u(\bv)$ and $t(i) \not\in T_v(\bv)$. 
Then $u_i(\bv') = \ell+1$ and $u_i(\bv) > \ell$, which contradicts the fact that $i$ is motivated to jump.

\item $t(i) \not\in T_u(\bv)$ and $t(i) \in T_v(\bv)$. 
Then $u_i(\bv') = \ell+1$, and $u_i(\bv) \geq \ell$.
For $i$ to be motivated to jump, it must be that $u_i(\bv) = \ell$. Since $t(i) \in T_v(\bv)$, this implies that $\tau_v(\bv') = \ell+1$. In other words, the empty node $v$ in assignment $\bv'$ has $\ell+1$ types in its neighborhood. 
\end{itemize}
From the above we see that the {\em number of monochromatic edges} in $\bv'$ is less than that in $\bv$. Applying the same argument repeatedly, it follows that for every assignment in the IRC, (a) the empty node has $\ell+1$ types in its neighborhood, and (b) the agent that deviates has at least one neighbor of its own type in its neighborhood, but no neighbors of its type at the empty node it jumps to, causing the number of monochromatic edges to decrease in successive assignments, which contradictions the existence of the IRC. 
\end{proof}

\begin{theorem} \label{thm:existence:degree2}
The game is potential when the graph is of degree at most $2$. 
\end{theorem}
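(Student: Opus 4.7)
The plan is to show that no improving response cycle (IRC) can exist, which on a finite state space is equivalent to the game being a (generalized) ordinal potential game. Since $G$ is connected with maximum degree $2$, it is either a path or a cycle, so every agent has at most two graph-neighbors and $u_i(\bv) \in \{0, 1, 2\}$ for all $i$ and $\bv$. For an agent $i$, let $w_i(\bv)$ denote the number of $i$'s occupied neighbors whose type differs from $t(i)$. Since only the edges incident to $v$ and to $u$ can change status when $i$ jumps from $v$ to $u$, a short direct calculation yields $\CE(\bv') - \CE(\bv) = w_i(\bv') - w_i(\bv)$. The feasible per-agent pairs are $(u_i, w_i) \in \{(0,0),(1,1),(1,2),(2,2)\}$, and by enumerating the transitions with strictly increasing $u_i$ (namely $(0,0)\to(1,1),(1,2),(2,2)$; $(1,1)\to(2,2)$; $(1,2)\to(2,2)$) one checks that $w_i(\bv') \geq w_i(\bv)$ in every case, with equality only for $(1,2)\to(2,2)$. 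Hence $\CE$ is monotone non-decreasing along any improving jump.

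Suppose towards contradiction that an IRC $\bv^0, \bv^1, \ldots, \bv^T = \bv^0$ exists. Since $\CE(\bv^T) = \CE(\bv^0)$ and $\Delta \CE \geq 0$ at each step, every step must satisfy $\Delta \CE = 0$, which by the case analysis above forces the transition $(1,2)\to(2,2)$ at every step. Thus at every step the moving agent $i$ has its two graph-neighbors at $v$ occupied by agents of the same type $X \neq t(i)$, and its two graph-neighbors at $u$ occupied by agents of two distinct types $Y_1 \neq Y_2$, both different from $t(i)$. In particular $v$ and $u$ each have both of their (at most two) graph-neighbors occupied, so $v$ is not adjacent to $u$ in $G$ (otherwise $u$ would be an empty graph-neighbor of $v$) and, more broadly, neither $v$ nor $u$ is adjacent to any other empty node.

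To finish I plan to use the auxiliary function $\Phi(\bv) = -\sum_{w \in E(\bv)} \tau_w(\bv)$, where $E(\bv)$ denotes the set of empty nodes of $\bv$. A jump $\bv \to \bv'$ satisfies $E(\bv') = (E(\bv) \setminus \{u\}) \cup \{v\}$, and the structural consequence derived above ensures that, for every $w \in E(\bv) \setminus \{u\}$, neither $v$ nor $u$ is a graph-neighbor of $w$, hence $\tau_w(\bv') = \tau_w(\bv)$. Therefore $\Phi(\bv') - \Phi(\bv) = \tau_u(\bv) - \tau_v(\bv') = |\{Y_1, Y_2\}| - |\{X\}| = 1$ at every step of the IRC. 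Summing around the cycle gives $0 = \Phi(\bv^T) - \Phi(\bv^0) \geq T > 0$, the desired contradiction. The main obstacle is exactly the chaining of the two pieces: one must be sure that $\Delta\CE = 0$ pins down the single transition $(1,2)\to(2,2)$ and that this transition in turn forces enough local saturation at both $v$ and $u$ so that the auxiliary potential $\Phi$ is unaffected by any other empty node and is guaranteed to change by exactly $+1$.
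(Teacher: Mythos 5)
Your proof is correct, but it takes a genuinely different route from the paper's. The paper constructs a single explicit ordinal potential, $2\CE(\bv)+c(\bv)$, where $c(\bv)$ counts the empty nodes adjacent to at most one type, and verifies by case analysis on $u_i(\bv)\in\{0,1\}$ that it strictly increases on every improving jump. You instead run a lexicographic no-IRC argument: the identity $\Delta\CE=w_i(\bv')-w_i(\bv)$ together with the enumeration of feasible pairs $(u_i,w_i)$ shows $\CE$ is non-decreasing along improving jumps, so any IRC must consist entirely of $(1,2)\to(2,2)$ moves; the local saturation forced by such moves (both neighbors of $s$ and of $d$ occupied, $s$ and $d$ non-adjacent and not adjacent to any other empty node) then makes the total type-count of empty nodes drop by exactly one per step, which cannot happen around a cycle. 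Both arguments hinge on the same two quantities (colorful edges as the primary term and an empty-node type-count statistic as the tiebreaker), but yours trades the paper's single-function bookkeeping for a cleaner two-stage contradiction, at the cost of invoking the standard fact that absence of improving response cycles in a finite game yields a (generalized ordinal) potential --- a convention the paper itself already relies on in its Theorem~\ref{thm:existence:1empty}. All the individual steps check out: the classification of $(u_i,w_i)$ pairs is exhaustive for degree at most $2$, the edge $sd$ (if present) is never colorful in either assignment so the $\Delta\CE$ identity has no double-counting issue, and the saturation argument correctly isolates $\tau_d(\bv)-\tau_s(\bv')=2-1=1$ as the only change to your auxiliary function.
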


\begin{proof}
For any assignment $\bv$, recall that $\CE(\bv)$ is the number of colorful edges that connect agents of different types. In addition, let $c(\bv)$ be the number of empty nodes that are adjacent to at most one type of agents, that is
$c(\bv) = |\{v \in V: v \text{ empty in } \bv \text{ and } \tau_v(\bv) \leq 1|\}$.
We will show that the function $\Phi(\bv)=2 \cdot \CE(\bv) + c(\bv)$ is a potential one. 

We will focus on two assignments $\bv$ and $\bv'$ that differ on the node occupied by a single red agent $i$. In particular, agent $i$ occupies a source node $s$ in $\bv$ and jumps to a destination node $d$ in $\bv'$. Note that $d$ is empty in $\bv$ and $s$ is empty in $\bv'$. We will show that $\Delta \Phi=\Phi(\bv') - \Phi(\bv) > 0$ when $u_i(\bv') > u_i(\bv)$. For simplicity, further define $\Delta \CE = \CE(\bv')- \CE(\bv)$ and $\Delta c = c(\bv')-c(\bv)$. We consider the following two cases depending on the utility that agent $i$ derives in assignment $\bv$.

\medskip
\noindent 
{\bf Case 1: $u_i(\bv)=0$.}
Since $i$ has only neighbors of its own type or no neighbors at all, the source node $s$ it occupies does not participate in any colorful edges. Since $i$ jumps from $s$ to $d$, $d$ must be adjacent to at least one non-red agent for $i$ to increase its utility. Hence, the number of colorful edges increases by at least $1$ when $i$ jumps, and thus $\Delta \CE \geq 1$. 
In $\bv'$, node $s$ is empty and is surrounded by at most one type of agents. Hence, $\Delta c$ may increase by $1$ due to $s$. If $d$ is surrounded by only one type of agents in $\bv$, then $\Delta c$ may decrease by $1$. 
Since the graph is of degree at most $2$ and $d$ is adjacent to a non-red agent, $d$ may be adjacent to at most one empty node, say $v$, and $\Delta c$ might further decrease by $1$ due to $v$ (when $i$ becomes adjacent to $v$ by jumping to $d$). Overall, $\Delta c \geq -1$, and thus $\Delta \Phi = 2\Delta \CE + \Delta c \geq 1$. 

\medskip
\noindent
{\bf Case 2: $u_i(\bv)=1$.}
In this case, the node $d$ where $i$ jumps to from $s$ must be adjacent to two agents of different types, say blue and green. This further implies that $d$ is not adjacent to $s$. 
\begin{itemize}
\item 
If $i$ has only one non-red neighbor in $\bv$, then there will be one more colorful edge in $\bv'$, and thus $\Delta \CE = 1$. Now observe that $\Delta c$ is not affected due to $d$ since it is adjacent to a blue and a green agent in $\bv$ where it is empty. Due to $s$, $\Delta c$ might increase by $1$ in case $s$ is adjacent to a non-red agent only, or not at all in case $s$ is adjacent to a red and a non-red agent. So, overall, $\Delta c \geq 0$, and $\Delta \Phi = 2\Delta \CE + \Delta c \ge 2$. 

\item
Otherwise, $i$ has two non-red neighbors of the same type in $\bv$. Hence, there is no change in the number of colorful edges when $i$ jumps from $s$ to $d$, that is, $\Delta \CE = 0$. Similarly to before, there is no change in $\Delta c$ due to $d$, but there is an increase of $1$ in $\Delta c$ due to $s$, and thus $\Delta c = 1$. Overall, $\Delta \Phi = 2\Delta \CE + \Delta c = 1$.
\end{itemize}
This completes the proof.
\end{proof}

Before proceeding with our next result, we first prove a lemma that will be useful. 

\begin{lemma} \label{lem:variety:potential:3-degree-1}
Let $\bv$ be a non-equilibrium assignment and suppose an agent $i$ jumps from node $s$ to node $d$ leading to assignment $\bv'$. Then,
\begin{enumerate}
    \item[(1)] $\tau_d(\bv) \geq \tau_s(\bv^\prime)$, and 
    \item[(2)] $\tau_d(\bv)=\tau_s(\bv^\prime)$  only if the jumping agent $i$'s utility increases exactly by 1, the number of monochromatic edges decreases, and no new monochromatic edges are created. 
\end{enumerate}
\end{lemma}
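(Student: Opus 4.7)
The plan is to split into two cases depending on whether the source node $s$ and destination node $d$ are adjacent in $G$, and in each case relate the quantities $\tau_d(\bv)$ and $\tau_s(\bv')$ through the improving-deviation inequality $u_i(\bv') > u_i(\bv)$. The key identity I will use repeatedly is $|T_v(\bv) \setminus \{t(i)\}| = \tau_v(\bv) - \mathbf{1}[t(i) \in T_v(\bv)]$, so that the utility of the jumping agent is exactly the type-count of its occupied node minus one if a same-type neighbor is present, and equal to the type-count otherwise.

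Consider first the case where $s$ and $d$ are not adjacent. Then neither node is a neighbor of the other, so the jump does not alter the set of agents adjacent to either $s$ or $d$; in particular $T_s(\bv') = T_s(\bv)$ and $T_d(\bv') = T_d(\bv)$. Expanding the improving inequality $|T_d(\bv') \setminus \{t(i)\}| > |T_s(\bv) \setminus \{t(i)\}|$ and applying the identity above gives $\tau_d(\bv) - \mathbf{1}[t(i) \in T_d(\bv)] > \tau_s(\bv') - \mathbf{1}[t(i) \in T_s(\bv)]$, which immediately yields $\tau_d(\bv) \geq \tau_s(\bv')$. In the adjacent case, the bookkeeping needs one extra step: agent $i$ at $s$ contributes $t(i)$ to $T_d(\bv)$ so $t(i) \in T_d(\bv)$, and after the jump agent $i$ at $d$ contributes $t(i)$ to $T_s(\bv')$ so $t(i) \in T_s(\bv')$. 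Applying the identity with both indicator terms equal to $1$ makes the improving inequality reduce directly to $\tau_d(\bv) - 1 > \tau_s(\bv') - 1$, i.e.\ $\tau_d(\bv) > \tau_s(\bv')$ strictly. This covers (1), and also shows that equality in (1) forces $s$ and $d$ to be non-adjacent.

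For part (2), assume $\tau_d(\bv) = \tau_s(\bv')$. By the above observation, $s$ and $d$ are non-adjacent, and the inequality $\tau_d(\bv) - \mathbf{1}[t(i) \in T_d(\bv)] > \tau_s(\bv') - \mathbf{1}[t(i) \in T_s(\bv)]$ can hold only if $t(i) \notin T_d(\bv)$ and $t(i) \in T_s(\bv)$. Plugging these back in, the left and right sides of the improving inequality equal $\tau_d(\bv)$ and $\tau_s(\bv')-1 = \tau_d(\bv)-1$ respectively, so $u_i(\bv') - u_i(\bv) = 1$ exactly. Moreover, $t(i) \notin T_d(\bv)$ means $i$ acquires no new same-type neighbor upon jumping, so no new monochromatic edges are created; while $t(i) \in T_s(\bv)$ means at least one monochromatic edge incident to $i$ at $s$ is destroyed; hence the total number of monochromatic edges strictly decreases. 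The main subtlety is the adjacent case of the bookkeeping, where one must remember that $i$'s own type is counted in both $T_d(\bv)$ and $T_s(\bv')$; once this is handled, the rest is a direct application of the type-count identity.
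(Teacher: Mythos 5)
Your proof is correct and uses essentially the same counting argument as the paper: comparing $\tau_d(\bv)$ with $\tau_s(\bv')$ through the jumping agent's utility and the presence or absence of $t(i)$ in the two neighborhoods, and deducing from equality that the utility gain is exactly $1$, that $s$ has a same-type neighbor (an edge that becomes non-monochromatic) while $d$ has none (so no new monochromatic edges appear). The only organizational difference is that you split on whether $s$ and $d$ are adjacent at the outset (obtaining strict inequality in the adjacent case), whereas the paper derives the non-adjacency of $s$ and $d$ as a consequence within part (2); the substance is identical.
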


\begin{proof}
Without loss of generality, let the jumping agent $i$ be red. 
Suppose there are  agents of $x$ non-red types that are neighbors to $s$ in $\bv$.  Then there must be agents of at least $x+1$  non-red types that are neighbors to $d$ in $\bv$, as otherwise $i$ is not motivated to jump. So, we have  
$\tau_d(\bv) \geq x+1$ and 
$\tau_s(\bv^\prime) \le x+1$,  since $s$ could have red neighbors in $\bv'$. Therefore, $\tau_d(\bv) \ge \tau_s(\bv^\prime)$.

To show (2),  suppose $\tau_d(\bv) = \tau_s(\bv^\prime)$. Then they both equal $x+1$. This shows that $u_i(\bv)=x$ and $u_i(\bv') = x+1$, that is, the utility of $i$ increases exactly by 1, as claimed. 
Since $s$ had $x$ non-red types as neighbors in $\bv$, $\tau_s(\bv') = x+1$ implies that $s$ has at least one red neighbor in $\bv'$. Similarly, $\tau_d(\bv) = x+1$ implies that $d$ has no red neighbor in $\bv$. This implies that $s$ cannot be a neighbor of $d$. So, $s$ must have some  neighbor $p$ other than $d$ with a red agent. When agent $i$ jumps, the monochromatic edge between $s$ and $p$ is destroyed, and further, the jump does not create new monochromatic edges incident to $d$, since $d$ had no red neighbors. This proves (2). 
\end{proof}

\begin{theorem} \label{thm:existence:2empty3regular}
The game is potential when the graph has two empty nodes and is $3$-regular. 
\end{theorem}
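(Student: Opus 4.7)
The strategy is to rule out improving response cycles by exhibiting an ordinal potential function. Lemma~\ref{lem:variety:potential:3-degree-1} already establishes a useful weak monotonicity: for any improving jump by an agent $i$ from $s$ to $d$, we have $\tau_d(\bv) \geq \tau_s(\bv')$, with equality only when $i$'s utility grows by exactly one and the number of monochromatic edges $M$ strictly decreases (and no new monochromatic edge is created). With only two empty nodes $u_1, u_2$, a natural candidate is to combine $M(\bv)$ with $\sum_{u \in \{u_1,u_2\}} \tau_u(\bv)$, and I would aim to show that $\Phi(\bv) = C\cdot M(\bv) + \sum_u \tau_u(\bv)$ serves as a potential for an appropriate constant $C$.

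The analysis naturally splits into four cases by the adjacency of $s$ and $d$ to the other (surviving) empty node $e$ in the 3-regular graph. If $e$ is adjacent to neither $s$ nor $d$, or to both, then $\tau_e$ is unchanged; if $e$ is adjacent only to $s$, then $\tau_e$ can only weakly decrease. In each of these three cases, Lemma~\ref{lem:variety:potential:3-degree-1} together with elementary monochromatic-edge bookkeeping suffices to conclude that $\Phi$ strictly decreases. The delicate case is when $e$ is adjacent only to $d$: here $\tau_e$ may increase by one. I would exploit that $d$ being adjacent to the empty node $e$ forces $d$ to have at most two occupied neighbors, so $\tau_d(\bv)\le 2$, and this severely constrains the improving-jump inequality $u_i(\bv') > u_i(\bv)$. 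A short case split on whether $t(i)$ lies in $T_s(\bv)$ and/or in $T_d(\bv)$, together with the pigeonhole that $s$ has either two or three occupied neighbors (since $s$ can be adjacent to at most one empty node in a 3-regular graph with two empty nodes), should show that either the strict inequality $\tau_d(\bv)-\tau_s(\bv')\ge 1$ offsets the $+1$ contribution from $\tau_e$, or the lemma's equality regime applies and $M$ drops by enough for the combination to strictly decrease.

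The main obstacle I anticipate is a handful of borderline sub-configurations in the delicate fourth case, where naively both $M$ and $\sum_u \tau_u$ may each appear unchanged. If these sub-configurations resist being ruled out by type-and-utility bookkeeping alone, I would refine $\Phi$ with an additional structural term—for instance, the count of (empty node, type) pairs such that the type has exactly one representative among the empty node's occupied neighbors; such a term strictly decreases in precisely those sub-configurations where the cruder $M + \sum_u \tau_u$ combination is flat, because the jump replaces a ``uniquely represented'' type at $d$ with a richer type-set at the other empty node $e$. Once $\Phi$ is shown to strictly decrease on every improving jump, no IRC can exist, and hence the game is a (ordinal) potential game and the best-response dynamics converges to an equilibrium.
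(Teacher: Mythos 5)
Your skeleton matches the paper's proof (same four-way case split on the surviving empty node $o$, same use of Lemma~\ref{lem:variety:potential:3-degree-1}, and a candidate potential built from the monochromatic-edge count $M$ and the empty-node type-counts), but the decisive ingredient for the delicate case is missing, and your fallback does not supply it. Concretely, in the case where $o$ is adjacent to $d$ only, there are improving jumps on which \emph{every} function of the form $C\cdot M + \sum_u \tau_u$ is exactly flat: take a red agent $i$ at $s$ whose three neighbours are all blue (utility $1$), and let $d$ be adjacent to a blue agent, a green agent, and the empty node $o$, where $o$ has no red neighbour. Then $u_i$ goes from $1$ to $2$, no monochromatic edge is created or destroyed ($\Delta M=0$), and $\tau_s(\bv')-\tau_d(\bv)=-1$ is exactly cancelled by $\Delta\tau_o=+1$, so ``offsets'' gives only a non-strict inequality, as you anticipated. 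The problem is that your proposed repair term $S$ (the number of pairs of an empty node and a type with exactly one representative in its occupied neighbourhood) does \emph{not} strictly decrease on all such flat configurations. Counterexample: let $s$'s neighbours be one red agent and two blue agents ($u_i(\bv)=1$), let $d$ be adjacent to $o$, a blue and a green agent, and let $o$'s occupied neighbours be a blue and a green agent. The jump is improving ($1\to 2$), $\Delta M=-1$, $\Delta\Tau=+1$, so the unit-weight combination $M+\sum_u\tau_u$ is flat; but $S$ also stays constant ($d$ loses two singleton types, $s$ gains one (red), and $o$ gains one (red)), so the refined function is unchanged across an improving jump and is not an ordinal potential. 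Hence the claim that $S$ ``strictly decreases in precisely those sub-configurations where the cruder combination is flat'' is false, and the theorem is not established by your argument.

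The paper closes exactly this hole with a different tie-breaker: it adds the indicator $B(\bv)$ of whether the two empty nodes are adjacent, and uses $\Phi(\bv)=2\big(\Tau(\bv)+M(\bv)\big)+B(\bv)$. The point is that in your delicate case the two empty nodes are $d$ and $o$, which are adjacent before the jump, while after the jump the empty nodes are $s$ and $o$, which are not adjacent; so $\Delta B=-1$ covers precisely the flat configurations above (in both examples $2(\Delta\Tau+\Delta M)=0$ but $\Delta\Phi=-1$). The factor $2$ is needed because in the symmetric case ($o$ adjacent to $s$ but not $d$) one has $\Delta B=+1$, which must be absorbed by showing $\tau_s(\bv')-\tau_d(\bv)+\Delta M\le -1$ there. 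If you want to salvage your route, you would have to either import this adjacency term or verify, by a full re-examination of all four cases, that some weighting of $M$, $\sum_u\tau_u$, and a correctly chosen structural term strictly decreases on every improving jump; as written, that verification is absent and the natural instantiations fail.
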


\begin{proof}
For any assignment $\bv$, let $\Tau(\bv) = \tau_{v_1}(\bv) + \tau_{v_2}(\bv)$ be the total type-count of the two empty nodes $v_1$ and $v_2$. Also, let $M(\bv)$ be the number of monochromatic edges in $\bv$, and 
$$B(\bv) =
\begin{cases}
1, & \text{if the empty nodes are adjacent} 
\\
0, & \text{otherwise}. \\
\end{cases}$$
We will show that the function 
$$\Phi(\bv)=2(\Tau(\bv)+M(\bv))+B(\bv)$$
is a potential one. 

Let $\bv$ be an arbitrary non-equilibrium assignment, and suppose red agent $i$ jumps from a node $s$ to node $d$  which leads to the next assignment $\bv'$. For any function $f$ of an assignment, let $\Delta f = f(\bv') - f(\bv)$. We will show that
$$\Delta \Phi = \Phi(\bv') - \Phi(\bv) < 0$$

Observe that $d$ is an empty node in $\bv$ and $s$ is an empty node in $\bv'$. Since there are two empty nodes, there is another node, say $o$ that is empty in both $\bv$ and $\bv'$. We consider the following 4 cases for the location of $o$ in the neighborhood of $s$ and $d$. 

\medskip
\noindent
{\bf Case 1: $o$ is neither a neighbor of $s$ nor of $d$.}
In this case, there is no edge between the two empty nodes in either $\bv$ or $\bv'$, that is, $\Delta B = 0$. Also, $o$'s  type-count remains unchanged, that is, $\tau_o(\bv) = \tau_o(\bv')$. So, 
$\Delta \Tau = \tau_s(\bv^\prime) - \tau_d(\bv)$. Therefore, 
$$\Delta \Phi = 2(\Delta \Tau + \Delta M) + \Delta B = 2(\tau_s(\bv^\prime) - \tau_d(\bv) + \Delta M)$$
From Lemma~\ref{lem:variety:potential:3-degree-1}\hspace*{0.01in}(1)  we have $\tau_d(\bv) \ge \tau_s(\bv^\prime)$. Also, from Lemma~\ref{lem:variety:potential:3-degree-1}\hspace*{0.01in}(2), if $\tau_d(\bv) = \tau_s(\bv^\prime)$, then $\Delta M \le -1$ which implies that  $\Delta \Phi \le 2(0 - 1) = -2$. Therefore we assume that  $\tau_d(\bv) > \tau_s(\bv^\prime)$ and will show that $\Delta \Phi \le -1$, by showing that $\tau_s(\bv^\prime) - \tau_d(\bv) + \Delta M \le -1$, for all possible jumps. 

\begin{itemize}
\item
 $u_i(\bv)=0$ and $u_i(\bv') \ge 1$: If $u_i(\bv)=0$, then all the agents surrounding $s$ are of type red. This implies that there are at least 2 red agents adjacent to $s$ (as $d$, which is empty before the jump, could also be a neighbor of  $s$). So, at least 2 monochromatic red edges are destroyed because of the jump. After the jump, the utility of $i$ increases by at least 1, which implies there is at least 1 non-red agent that is a neighbor of $d$. So, at most 2 new monochromatic red edges could be created after the jump. Therefore, $\Delta M \le 0$ and $\tau_s(\bv^\prime) - \tau_d(\bv) + \Delta M \le -1$ is satisfied.

\item
 $u_i(\bv)=1$ and $u_i(\bv') \ge 2$: In this case, there is at least 1 non-red agent that is a neighbor of $s$ and at most 1 red agent that is a neighbor of $d$. $M(\bv')$ is always less than or equal to $ M(\bv)$ except when there are no monochromatic red edges incident on $s$ in $\bv$ and exactly 1 monochromatic red edge incident on $d$ in $\bv'$. Therefore, $\Delta M=1$ implies there are 2 agents of different non-red types and one red agent surrounding $d$, and $s$ is not a neighbor of  $d$. So $\tau_s(\bv^\prime) - \tau_d(\bv) + \Delta M =1-3+1= -1$. When $\Delta M \le 0$, $\tau_s(\bv^\prime) - \tau_d(\bv) + \Delta M \le -1+0= -1$.

\item
 $u_i(\bv)=2$ and $u_i(\bv') = 3$: No new monochromatic red edges will be created since all the neighbors of $d$ must be non-red to give agent $i$  utility 3. So $\Delta M \le 0$ and $\tau_s(\bv^\prime) - \tau_d(\bv) + \Delta M \le -1$.
 \end{itemize}

\noindent
{\bf Case 2: $o$ is a neighbor of $s$ but not $d$.}
Since  $s$ is an empty node in $\bv'$, we have $B(\bv')=1$. As  $d$ is an empty node in $\bv$, we have  $B(\bv)=0$. Hence $\Delta B = 1$. Since $o$ gains no new neighbor and loses a red neighbor in $\bv'$ 
compared to $\bv$, we have  $\tau_o(\bv') \le \tau_o(\bv)$ and $\Delta \tau_o \le 0$. So, $\Delta \Tau = (\tau_s(\bv^\prime) - \tau_d(\bv)) + \Delta \tau_o \le \tau_s(\bv^\prime) - \tau_d(\bv)$. Therefore, 
$$\Delta \Phi \le 2(\tau_s(\bv^\prime) - \tau_d(\bv) + \Delta M) + 1$$
From Lemma~\ref{lem:variety:potential:3-degree-1}\hspace*{0.01in}(1)  we have $\tau_d(\bv) \ge \tau_s(\bv^\prime)$. Also, from Lemma~\ref{lem:variety:potential:3-degree-1}\hspace*{0.01in}(2), if $\tau_d(\bv) = \tau_s(\bv^\prime)$, then $\Delta M \le -1$ which implies that  $\Delta \Phi \le 2(0 - 1)+1 = -1$. Therefore we assume that  $\tau_d(\bv) > \tau_s(\bv^\prime)$, that is, $\tau_s(\bv^\prime) - \tau_d(\bv) \le -1$. We will show that  $\tau_s(\bv^\prime) - \tau_d(\bv) + \Delta M \le -1$, for all possible jumps of the agent $i$. 

\begin{itemize}
\item 
$u_i(\bv)=0$ and $u_i(\bv') \ge 1$: There is at least $1$ non-red agent that is adjacent to $d$. This implies that there can be at most $2$ red agents that are neighbors of $d$. Also, $u_i(\bv)=0$ implies that all the agents surrounding $s$ are red agents. When $d$ is not a neighbor of $s$, there must be $2$ red agents connected to $s$ in $\bv$. So, $2$ monochromatic red edges are destroyed when $i$ jumps, and and at most $2$ monochromatic red edges are created. Hence, $\Delta M \le 0$. When $d$ is a neighbor to $s$, exactly $1$ red agent must be adjacent to $s$ in  $\bv$. After the jump $d$ is connected to the empty $s$ and at least $1$ non-red neighbor. So $1$ monochromatic red edge is destroyed and at most $1$ monochromatic red edge is created after the jump. Hence, $\Delta M \le 0$. Therefore, $\tau_s(\bv^\prime) - \tau_d(\bv) + \Delta M \le -1$.

\item
 $u_i(\bv) \ge 1$: Using the same analysis as in Case 1 we can show that $\tau_s(\bv^\prime) - \tau_d(\bv) + \Delta M \le -1$.
 \end{itemize}

\medskip
\noindent 
{\bf Case 3: $o$ is a neighbor of $d$ but not $s$.}
Since $s$ is an empty node in $\bv'$  and $d$ is an empty node in $\bv$, we have  $\Delta B = -1$. Therefore, $\Delta \Phi = 2(\Delta \Tau + \Delta M) - 1$.
To show that $\Delta \Phi \le -1$, it suffices to show that $\Delta \Tau + \Delta M \le 0$. 

For $i$ to be motivated to jump to $d$, since $d$ is adjacent to the empty node $o$, it must be that $d$ has at least 1 non-red neighbor and therefore at most one red neighbor in $\bv$. If $d$ has no red neighbors in $\bv$, then clearly no new monochromatic red edges will be created when $i$ jumps, so $\Delta M \le 0$. Suppose instead that $d$ has one non-red neighbor and one red neighbor in $\bv$. If $s$ is adjacent to $d$, then that red neighbor is in fact agent $i$, and when $i$ jumps to $d$, then no monochromatic red edges would be created, and $\Delta M \le 0$. If $s$ is not adjacent to $d$, it must be that $u_i(\bv)=0$, since $i$ gets utility 1 after jumping to $d$, and so $s$ must have 3 monochromatic edges incident on it that will be destroyed when $i$ jumps from $s$ to $d$, and only one new monochromatic edge incident on $d$ will be created. 
Therefore, $\Delta M \le 0$.

After the jump, a new red agent becomes a neighbor to $o$, as $o$ is connected to $d$. So, $0 \le \Delta \tau_o \le 1$. From Lemma~\ref{lem:variety:potential:3-degree-1}\hspace*{0.01in}(1) we know that $\tau_d(\bv) \ge \tau_s(\bv^\prime)$. If $\tau_s(\bv^\prime) - \tau_d(\bv) \le -1$, then $\Delta \Tau = (\tau_s(\bv^\prime) - \tau_d(\bv)) + \Delta \tau_o \le -1 + 1 = 0$. So, $\Delta \Tau + \Delta M \le 0$. Also, from Lemma~\ref{lem:variety:potential:3-degree-1}\hspace*{0.01in}(2), if $\tau_s(\bv^\prime) - \tau_d(\bv) = 0$, then $\Delta M \le -1$, and 
$$\Delta \Tau = (\tau_s(\bv^\prime) - \tau_d(\bv)) + \Delta \tau_o \le 0 + 1 = 1.$$ 
Therefore, 
$$\Delta \Tau + \Delta M \le 1 - 1 = 0.$$

\medskip
\noindent 
{\bf Case 4: $o$ is a neighbor of  both $s$ and $d$.}
Here, $\tau_o(\bv')=\tau_o(\bv)$, so $\Delta \tau_o=0$. There is an edge between the 2 empty nodes, $d$ and $o$, in $\bv$ and there is an edge between the 2 empty nodes, $s$ and $o$, in $\bv'$. So, $\Delta B  = 0$. 
Therefore, 
$$\Delta \Phi = 2(\Delta \Tau + \Delta M) + \Delta B = 2(\tau_s(\bv^\prime) - \tau_d(\bv) + \Delta M).$$

From Lemma~\ref{lem:variety:potential:3-degree-1}\hspace*{0.01in}(1)  we have $\tau_d(\bv) \ge \tau_s(\bv^\prime)$. Also, from Lemma~\ref{lem:variety:potential:3-degree-1}\hspace*{0.01in}(2), if $\tau_d(\bv) = \tau_s(\bv^\prime)$, then $\Delta M \le -1$ which implies that  $\Delta \Phi = 2(\tau_s(\bv^\prime) - \tau_d(\bv) + \Delta M) \le 2(0 + (-1)) = -2$. Therefore we assume that $\tau_d(\bv) > \tau_s(\bv^\prime)$ that is, $ \tau_s(\bv^\prime) - \tau_d(\bv) \le -1$ and will show that $\Delta \Phi \le -1$, by showing that $\Delta M \le 0$, for all possible jumps of agent $i$.

\begin{itemize}
\item
$u_i(\bv)=0$ and $u_i(\bv') \ge 1$: If $u_i(\bv)=0$, then either $s$ has two red neighbors and is not adjacent to $d$, or $s$ has one red neighbor and is adjacent to $d$, which is an empty node in $\bv$. In both cases $s$ has at least one red neighbor. As $d$ must have at least 1 non-red neighbor for agent $i$ to be motivated to jump, $d$ can have at most 1 red neighbor. When agent $i$ jumps, at least 1 monochromatic red edge is destroyed and at most 1 monochromatic red edge is created. So $\Delta M \le 0$.

\item 
$u_i(\bv)=1$ and $u_i(\bv') = 2$: In this case, $d$ must have 2 non-red neighbors of two different types, and since $d$ is adjacent to the empty node $o$, no new monochromatic red edges are created, when agent $i$ jumps. This implies that $\Delta M \le 0$. Notice that $u_i(\bv') = 3$ is not possible because the empty node $o$ is a neighbor of $d$.
\end{itemize}
The proof is now complete.
\end{proof}

Next we show that that even when there is an arbitrary number of empty nodes, an equilibrium assignment is guaranteed to exist for some families of graphs, in particular, for trees, cylinder graphs, and tori. For these cases, we explicitly construct an equilibrium rather than showing that the game is potential. In fact, recall that the graph in the game used to show that there is an IRC in the proof of Theorem~\ref{thm:IRC} is a cylinder, and thus constructing a potential function for such graphs is not possible.

The assignment we give will have one of the following two properties:

\begin{description}
    \item[\pOne ]\phantomsection\label{itm:prop1} $u_i(\bv) \geq 1$ for every agent $i$, and $\tau_u(\bv) \leq 1$ for any empty node $u$.
    \item[\po ]\phantomsection\label{itm:prop0} All empty nodes are adjacent only to red agents or other empty nodes, and $u_i(\bv)\geq 1$ for every non-red agent $i$.
\end{description} 

Clearly, an assignment $\bv$ satisfying \hyperref[itm:prop1]{\pOne} is an equilibrium assignment as no agent is motivated to jump. Suppose $\bv$ satisfies \hyperref[itm:prop0]{\po}. Notice that there may be red agents with utility $0$, but they cannot increase their utility by jumping to an empty node. Any non-red agent can only get utility $1$ by jumping to an empty node, since it already has utility $1$, does not have incentive to to jump.

\begin{theorem} \label{thm:equilibrium:trees}
There exists an equilibrium assignment when the graph is a tree.
\end{theorem}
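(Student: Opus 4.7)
The plan is to construct an explicit equilibrium assignment on the tree $T$, aiming to satisfy property \po\ from the preceding discussion: designate one type as \emph{red}, and ensure that every empty node is adjacent only to red agents or other empty nodes, while every non-red agent has at least one neighbor of a different type. An assignment with this property is immediately an equilibrium, as already argued just before the theorem statement.

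I would root $T$ at a leaf and process the nodes in a post-order DFS. The construction maintains a growing \emph{red zone}, namely a connected subtree that will contain all the empty nodes together with a ``shell'' of red agents forming its boundary, and a \emph{non-red zone} consisting of the remaining nodes, to be assigned to the non-red agents. Because $T$ is a tree, any induced subtree is automatically connected, which makes it easy to place the red zone: start from the root leaf and keep adding its children as long as there are empty nodes and red agents still to be placed, so that each empty node ends up with only red or empty neighbors. Each non-red agent is then assigned to a node on the non-red side of the boundary between the two zones, so that it inherits at least one red neighbor from the shell and thus obtains utility at least $1$.

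I expect the main obstacle to be the accounting and the choice of which type to call red. When the number of empties is small, almost any type can play the role of red, but when there are many empty nodes---or when one type is highly dominant and the others are scarce---the red type must be chosen carefully so that there are simultaneously enough red agents to insulate all the empties and enough red agents on the boundary of the red zone to provide every non-red agent with at least one red neighbor. Edge cases, such as when the non-red zone would itself isolate some non-red agent, or when the shell cannot be grown into a fully connected insulating layer, may force a fallback to property \pOne. This fallback is fortunately easy on a tree, because a leaf has only one neighbor and so an empty leaf trivially satisfies $\tau_u \le 1$; placing empties at leaves and distributing types along the remaining tree so that each agent has at least one different-type neighbor should close the remaining cases, handled by induction on the size of $T$ after removing a suitable leaf.
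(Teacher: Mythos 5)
There is a genuine gap here: both pillars of your construction are asserted rather than proved, and one of them is false in general. First, you require every non-red agent to be placed adjacent to the red ``shell'' so that it gets a red neighbor; but the number of non-red agents can vastly exceed the number of nodes bordering the red zone (take a long path, or any tree in which the designated red type has very few agents while many non-red types must be placed deep inside the non-red zone). In such instances non-red agents must obtain utility $1$ from \emph{each other}, i.e., you need an assignment of the non-red agents in which every one of them has a neighbor of a different type. That is precisely the nontrivial combinatorial core of the problem, and your proposal offers no argument for it beyond ``distributing types along the remaining tree \dots should close the remaining cases, handled by induction.'' Second, your fallback to \pOne{} is not worked out: \pOne{} demands utility at least $1$ for \emph{every} agent, including red ones, which is impossible to guarantee in general (e.g., when red agents greatly outnumber everyone else), so the two properties cannot simply be traded off by moving empties to leaves; one must also argue why the leftover red agents with utility $0$ have no profitable jump, which hinges on every empty node seeing only red agents --- exactly the insulation you could not always build.

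For comparison, the paper avoids constructing this assignment from scratch: it prunes leaves until the tree has exactly $n+1$ nodes, invokes the known algorithm of Narayanan et al.\ for trees with a single empty node (which already guarantees every non-red agent has a different-type neighbor, with the unique empty node being a leaf adjacent to a red agent, i.e., \po{} on the pruned tree), and then re-attaches the pruned leaves. The only damage this causes is that some empty leaves may now be adjacent to a non-red agent; the repair is to let red agents of utility $0$ jump onto such nodes, with a two-case count (more utility-$0$ red agents than such empty nodes, or fewer) showing that the result satisfies \po{} or \pOne{} respectively. If you want to salvage your approach, you either need to reprove the single-empty-node tree result (an explicit algorithm giving every non-red agent a different-type neighbor on an arbitrary tree) or cite it as the paper does; the DFS/red-zone picture alone does not yield it.
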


\begin{proof}
We will use a very similar approach to the one used by \citet{Narayanan2023diversity} to derive an equilibrium assignment for a different utility function. First, fix a root node $r$ of degree $1$, and repeatedly remove leaf nodes until we have a tree with exactly $n+1$ nodes. Call this tree $G'=(V',E')$. Now we use the algorithm of \citet{Narayanan2023diversity} for a tree with one empty node, to find an assignment in $G'$, such that (a) the root node $r$ is empty, (b) every non-red agent $i$ has at least one neighbor of a different type, and hence $u_i(\bv) \geq 1$, and (c) a red agent occupies the node $v$ that is the unique neighbor to the empty root $r$. Since we use exactly the same algorithm, we do not repeat it here. We only observe that since the only empty node is $r$ and it is adjacent to a single red agent, $\tau_r(\bv) = 1$ and property \hyperref[itm:prop0]{\po} is satisfied by the assignment. Note that there may be some red agents with utility 0 in this assignment. See Figure~\ref{fig:equilibrium:trees}(a) for an example. 

Next we show how to modify the assignment for $G'$ to an assignment for $G$ that satisfies one of the properties \hyperref[itm:prop0]{\po} and \hyperref[itm:prop1]{\pOne}.  We start with the same placement of agents $\bv$ as in $G'$. Agents in this assignment on $G'$ may have acquired some new neighboring nodes, but since they are all empty nodes, the utility of agents remains the same in the graph $G$. However, while every empty node in $G$ is adjacent to at most one agent, some of them may now be adjacent to non-red agents, violating property \hyperref[itm:prop0]{\po}. See Figure~\ref{fig:equilibrium:trees}(b) for an example. There may be some red agents that have utility $0$ that are motivated to jump to such an empty node.  Let $X$ be the set of red agents with utility $0$ and $V_e$ be the set of empty nodes that are adjacent to non-red agents. 
\begin{itemize}
    \item If $|X| \leq |V_e|$, then let all the agents in $X$ jump to any $|X|$ nodes of $V_e$. Call the new assignment $\bv'$. Consider a red agent $i$ in the set $X$. Let $p$ be a neighbor of $i$ in $\bv$ and let $q$ be a neighbor in $\bv'$. First note that since $u_i(\bv) = 0$, agent $p$ must be red, and since any node in $V'$ is adjacent to a single non-red agent, agent $q$ must be non-red. Therefore, the jump does not decrease the utility of $p$ and may increase the utility of $q$. See Figure~\ref{fig:equilibrium:trees}(c)  for an illustration.  We also have $u_i(\bv') = 1$ for all red agents.  Finally, every empty node $u$ is adjacent to at most one agent, so that $\tau_u(\bv) \leq 1$. Therefore property \hyperref[itm:prop1]{\pOne} is satisfied. 
    
    \item If $|X| > |V_e|$, let an arbitrary set of  $|V_e|$ red agents from the set  $R$ jump to the empty nodes $V'$. As in the previous case, the jumps do not decrease the utility of any agent. Further, all empty nodes are adjacent to red agents. Therefore, property  \hyperref[itm:prop0]{\po} is satisfied. 
\end{itemize}
Since $\bv'$ satisfies either \hyperref[itm:prop0]{\po} or \hyperref[itm:prop1]{\pOne}, it is an equilibrium assignment. 
\end{proof}

\begin{figure}[t]
\centering
\begin{subfigure}{0.3\columnwidth}
\centering
\tikzset{every picture/.style={line width=0.75pt}} 

\begin{tikzpicture}[x=0.75pt,y=0.75pt,yscale=-1,xscale=1]

\draw    (190.77,102.16) -- (176.51,92.42) ;
\draw  [fill={rgb, 255:red, 208; green, 2; blue, 27 }  ,fill opacity=1 ] (165.14,88.52) .. controls (165.14,85.09) and (167.89,82.31) .. (171.29,82.31) .. controls (174.68,82.31) and (177.43,85.09) .. (177.43,88.52) .. controls (177.43,91.95) and (174.68,94.73) .. (171.29,94.73) .. controls (167.89,94.73) and (165.14,91.95) .. (165.14,88.52) -- cycle ;
\draw    (149.94,100.77) -- (166.14,92.42) ;
\draw    (161.61,122.34) -- (153.18,111.9) ;
\draw  [fill={rgb, 255:red, 208; green, 2; blue, 27 }  ,fill opacity=1 ] (143.11,106.61) .. controls (143.11,103.18) and (145.86,100.4) .. (149.25,100.4) .. controls (152.64,100.4) and (155.39,103.18) .. (155.39,106.61) .. controls (155.39,110.04) and (152.64,112.82) .. (149.25,112.82) .. controls (145.86,112.82) and (143.11,110.04) .. (143.11,106.61) -- cycle ;
\draw    (134.39,121.64) -- (145.41,110.51) ;
\draw  [fill={rgb, 255:red, 208; green, 2; blue, 27 }  ,fill opacity=1 ] (158.7,127.16) .. controls (158.7,123.73) and (161.45,120.94) .. (164.85,120.94) .. controls (168.24,120.94) and (170.99,123.73) .. (170.99,127.16) .. controls (170.99,130.59) and (168.24,133.37) .. (164.85,133.37) .. controls (161.45,133.37) and (158.7,130.59) .. (158.7,127.16) -- cycle ;
\draw    (205.68,123.73) -- (197.25,113.29) ;
\draw  [fill={rgb, 255:red, 208; green, 2; blue, 27 }  ,fill opacity=1 ] (187.18,108) .. controls (187.18,104.57) and (189.93,101.79) .. (193.32,101.79) .. controls (196.71,101.79) and (199.46,104.57) .. (199.46,108) .. controls (199.46,111.43) and (196.71,114.21) .. (193.32,114.21) .. controls (189.93,114.21) and (187.18,111.43) .. (187.18,108) -- cycle ;
\draw    (143.46,142.51) -- (135.04,132.08) ;
\draw  [fill={rgb, 255:red, 208; green, 2; blue, 27 }  ,fill opacity=1 ] (124.96,126.79) .. controls (124.96,123.36) and (127.71,120.57) .. (131.11,120.57) .. controls (134.5,120.57) and (137.25,123.36) .. (137.25,126.79) .. controls (137.25,130.22) and (134.5,133) .. (131.11,133) .. controls (127.71,133) and (124.96,130.22) .. (124.96,126.79) -- cycle ;
\draw    (220.58,144.6) -- (212.16,134.16) ;
\draw  [fill={rgb, 255:red, 208; green, 2; blue, 27 }  ,fill opacity=1 ] (202.08,128.87) .. controls (202.08,125.44) and (204.83,122.66) .. (208.22,122.66) .. controls (211.62,122.66) and (214.37,125.44) .. (214.37,128.87) .. controls (214.37,132.3) and (211.62,135.09) .. (208.22,135.09) .. controls (204.83,135.09) and (202.08,132.3) .. (202.08,128.87) -- cycle ;
\draw    (173.92,142.51) -- (170.03,130.68) ;
\draw  [fill={rgb, 255:red, 248; green, 231; blue, 28 }  ,fill opacity=1 ] (137.32,148.73) .. controls (137.32,145.29) and (140.07,142.51) .. (143.46,142.51) .. controls (146.85,142.51) and (149.6,145.29) .. (149.6,148.73) .. controls (149.6,152.16) and (146.85,154.94) .. (143.46,154.94) .. controls (140.07,154.94) and (137.32,152.16) .. (137.32,148.73) -- cycle ;
\draw  [fill={rgb, 255:red, 248; green, 231; blue, 28 }  ,fill opacity=1 ] (167.78,148.73) .. controls (167.78,145.29) and (170.53,142.51) .. (173.92,142.51) .. controls (177.31,142.51) and (180.06,145.29) .. (180.06,148.73) .. controls (180.06,152.16) and (177.31,154.94) .. (173.92,154.94) .. controls (170.53,154.94) and (167.78,152.16) .. (167.78,148.73) -- cycle ;
\draw  [fill={rgb, 255:red, 248; green, 231; blue, 28 }  ,fill opacity=1 ] (218.33,149.42) .. controls (218.33,145.99) and (221.08,143.21) .. (224.47,143.21) .. controls (227.86,143.21) and (230.61,145.99) .. (230.61,149.42) .. controls (230.61,152.85) and (227.86,155.63) .. (224.47,155.63) .. controls (221.08,155.63) and (218.33,152.85) .. (218.33,149.42) -- cycle ;
\draw  [fill={rgb, 255:red, 65; green, 117; blue, 5 }  ,fill opacity=1 ] (120.43,169.92) .. controls (120.43,166.49) and (123.18,163.71) .. (126.57,163.71) .. controls (129.96,163.71) and (132.71,166.49) .. (132.71,169.92) .. controls (132.71,173.35) and (129.96,176.13) .. (126.57,176.13) .. controls (123.18,176.13) and (120.43,173.35) .. (120.43,169.92) -- cycle ;
\draw    (127.26,164.08) -- (132.25,159.03) -- (138.28,152.95) ;
\draw  [fill={rgb, 255:red, 65; green, 117; blue, 5 }  ,fill opacity=1 ] (150.88,169.92) .. controls (150.88,166.49) and (153.64,163.71) .. (157.03,163.71) .. controls (160.42,163.71) and (163.17,166.49) .. (163.17,169.92) .. controls (163.17,173.35) and (160.42,176.13) .. (157.03,176.13) .. controls (153.64,176.13) and (150.88,173.35) .. (150.88,169.92) -- cycle ;
\draw    (157.72,164.08) -- (162.71,159.03) -- (168.74,152.95) ;
\draw  [fill={rgb, 255:red, 74; green, 144; blue, 226 }  ,fill opacity=1 ] (203.42,173.08) .. controls (203.42,169.64) and (206.17,166.86) .. (209.56,166.86) .. controls (212.96,166.86) and (215.71,169.64) .. (215.71,173.08) .. controls (215.71,176.51) and (212.96,179.29) .. (209.56,179.29) .. controls (206.17,179.29) and (203.42,176.51) .. (203.42,173.08) -- cycle ;
\draw    (209.56,166.86) -- (214.56,161.82) -- (221.23,155.04) ;
\draw  [fill={rgb, 255:red, 155; green, 155; blue, 155 }  ,fill opacity=1 ] (165.14,64.51) .. controls (165.14,61.08) and (167.89,58.3) .. (171.29,58.3) .. controls (174.68,58.3) and (177.43,61.08) .. (177.43,64.51) .. controls (177.43,67.95) and (174.68,70.73) .. (171.29,70.73) .. controls (167.89,70.73) and (165.14,67.95) .. (165.14,64.51) -- cycle ;
\draw    (171.29,70.73) -- (171.29,83.51) ;
\draw    (184,171) -- (178,153) ;
\draw  [fill={rgb, 255:red, 74; green, 144; blue, 226 }  ,fill opacity=1 ] (177.86,171) .. controls (177.86,167.57) and (180.61,164.79) .. (184,164.79) .. controls (187.39,164.79) and (190.14,167.57) .. (190.14,171) .. controls (190.14,174.43) and (187.39,177.21) .. (184,177.21) .. controls (180.61,177.21) and (177.86,174.43) .. (177.86,171) -- cycle ;
\draw    (237,174) -- (227.47,155.63) ;
\draw  [fill={rgb, 255:red, 208; green, 2; blue, 27 }  ,fill opacity=1 ] (230.86,174) .. controls (230.86,170.57) and (233.61,167.79) .. (237,167.79) .. controls (240.39,167.79) and (243.14,170.57) .. (243.14,174) .. controls (243.14,177.43) and (240.39,180.21) .. (237,180.21) .. controls (233.61,180.21) and (230.86,177.43) .. (230.86,174) -- cycle ;

\draw (177.66,49.09) node [anchor=north west][inner sep=0.75pt]  [font=\scriptsize]  {$\ r$};
\draw (179.8,77.9) node [anchor=north west][inner sep=0.75pt]  [font=\scriptsize]  {$\ v$};

\end{tikzpicture}

\caption{}
\label{fig:trees-a}
\end{subfigure}
\begin{subfigure}{0.3\columnwidth}
\centering
\tikzset{every picture/.style={line width=0.75pt}} 

\begin{tikzpicture}[x=0.75pt,y=0.75pt,yscale=-1,xscale=1]

\draw    (332.93,103.36) -- (318.67,93.62) ;
\draw  [fill={rgb, 255:red, 208; green, 2; blue, 27 }  ,fill opacity=1 ] (307.3,89.72) .. controls (307.3,86.29) and (310.05,83.51) .. (313.45,83.51) .. controls (316.84,83.51) and (319.59,86.29) .. (319.59,89.72) .. controls (319.59,93.15) and (316.84,95.93) .. (313.45,95.93) .. controls (310.05,95.93) and (307.3,93.15) .. (307.3,89.72) -- cycle ;
\draw    (292.1,101.97) -- (308.31,93.62) ;
\draw    (303.77,123.54) -- (295.34,113.1) ;
\draw  [fill={rgb, 255:red, 208; green, 2; blue, 27 }  ,fill opacity=1 ] (285.27,107.81) .. controls (285.27,104.38) and (288.02,101.6) .. (291.41,101.6) .. controls (294.81,101.6) and (297.56,104.38) .. (297.56,107.81) .. controls (297.56,111.24) and (294.81,114.02) .. (291.41,114.02) .. controls (288.02,114.02) and (285.27,111.24) .. (285.27,107.81) -- cycle ;
\draw    (276.55,122.84) -- (287.57,111.71) ;
\draw  [fill={rgb, 255:red, 155; green, 155; blue, 155 }  ,fill opacity=1 ] (252.26,149.23) .. controls (252.26,145.8) and (255.01,143.02) .. (258.4,143.02) .. controls (261.8,143.02) and (264.55,145.8) .. (264.55,149.23) .. controls (264.55,152.66) and (261.8,155.44) .. (258.4,155.44) .. controls (255.01,155.44) and (252.26,152.66) .. (252.26,149.23) -- cycle ;
\draw  [fill={rgb, 255:red, 208; green, 2; blue, 27 }  ,fill opacity=1 ] (300.87,128.36) .. controls (300.87,124.93) and (303.62,122.15) .. (307.01,122.15) .. controls (310.4,122.15) and (313.15,124.93) .. (313.15,128.36) .. controls (313.15,131.79) and (310.4,134.57) .. (307.01,134.57) .. controls (303.62,134.57) and (300.87,131.79) .. (300.87,128.36) -- cycle ;
\draw    (347.84,124.93) -- (339.41,114.49) ;
\draw  [fill={rgb, 255:red, 208; green, 2; blue, 27 }  ,fill opacity=1 ] (329.34,109.2) .. controls (329.34,105.77) and (332.09,102.99) .. (335.48,102.99) .. controls (338.87,102.99) and (341.62,105.77) .. (341.62,109.2) .. controls (341.62,112.63) and (338.87,115.42) .. (335.48,115.42) .. controls (332.09,115.42) and (329.34,112.63) .. (329.34,109.2) -- cycle ;
\draw    (285.62,143.71) -- (277.2,133.28) ;
\draw  [fill={rgb, 255:red, 208; green, 2; blue, 27 }  ,fill opacity=1 ] (267.12,127.99) .. controls (267.12,124.56) and (269.87,121.77) .. (273.27,121.77) .. controls (276.66,121.77) and (279.41,124.56) .. (279.41,127.99) .. controls (279.41,131.42) and (276.66,134.2) .. (273.27,134.2) .. controls (269.87,134.2) and (267.12,131.42) .. (267.12,127.99) -- cycle ;
\draw    (258.4,143.02) -- (269.42,131.89) ;
\draw    (362.74,145.8) -- (354.32,135.36) ;
\draw  [fill={rgb, 255:red, 208; green, 2; blue, 27 }  ,fill opacity=1 ] (344.24,130.07) .. controls (344.24,126.64) and (346.99,123.86) .. (350.39,123.86) .. controls (353.78,123.86) and (356.53,126.64) .. (356.53,130.07) .. controls (356.53,133.51) and (353.78,136.29) .. (350.39,136.29) .. controls (346.99,136.29) and (344.24,133.51) .. (344.24,130.07) -- cycle ;
\draw    (335.52,145.1) -- (346.54,133.97) ;
\draw    (316.08,143.71) -- (312.19,131.89) ;
\draw  [fill={rgb, 255:red, 248; green, 231; blue, 28 }  ,fill opacity=1 ] (279.48,149.93) .. controls (279.48,146.49) and (282.23,143.71) .. (285.62,143.71) .. controls (289.02,143.71) and (291.77,146.49) .. (291.77,149.93) .. controls (291.77,153.36) and (289.02,156.14) .. (285.62,156.14) .. controls (282.23,156.14) and (279.48,153.36) .. (279.48,149.93) -- cycle ;
\draw  [fill={rgb, 255:red, 248; green, 231; blue, 28 }  ,fill opacity=1 ] (309.94,149.93) .. controls (309.94,146.49) and (312.69,143.71) .. (316.08,143.71) .. controls (319.47,143.71) and (322.23,146.49) .. (322.23,149.93) .. controls (322.23,153.36) and (319.47,156.14) .. (316.08,156.14) .. controls (312.69,156.14) and (309.94,153.36) .. (309.94,149.93) -- cycle ;
\draw  [fill={rgb, 255:red, 155; green, 155; blue, 155 }  ,fill opacity=1 ] (329.38,151.32) .. controls (329.38,147.89) and (332.13,145.1) .. (335.52,145.1) .. controls (338.92,145.1) and (341.67,147.89) .. (341.67,151.32) .. controls (341.67,154.75) and (338.92,157.53) .. (335.52,157.53) .. controls (332.13,157.53) and (329.38,154.75) .. (329.38,151.32) -- cycle ;
\draw  [fill={rgb, 255:red, 248; green, 231; blue, 28 }  ,fill opacity=1 ] (360.49,150.62) .. controls (360.49,147.19) and (363.24,144.41) .. (366.63,144.41) .. controls (370.02,144.41) and (372.77,147.19) .. (372.77,150.62) .. controls (372.77,154.05) and (370.02,156.83) .. (366.63,156.83) .. controls (363.24,156.83) and (360.49,154.05) .. (360.49,150.62) -- cycle ;
\draw  [fill={rgb, 255:red, 155; green, 155; blue, 155 }  ,fill opacity=1 ] (247.72,192.36) .. controls (247.72,188.93) and (250.47,186.15) .. (253.87,186.15) .. controls (257.26,186.15) and (260.01,188.93) .. (260.01,192.36) .. controls (260.01,195.8) and (257.26,198.58) .. (253.87,198.58) .. controls (250.47,198.58) and (247.72,195.8) .. (247.72,192.36) -- cycle ;
\draw    (281.09,186.85) -- (272.66,176.41) ;
\draw  [fill={rgb, 255:red, 65; green, 117; blue, 5 }  ,fill opacity=1 ] (262.59,171.12) .. controls (262.59,167.69) and (265.34,164.91) .. (268.73,164.91) .. controls (272.12,164.91) and (274.87,167.69) .. (274.87,171.12) .. controls (274.87,174.55) and (272.12,177.33) .. (268.73,177.33) .. controls (265.34,177.33) and (262.59,174.55) .. (262.59,171.12) -- cycle ;
\draw    (253.87,186.15) -- (264.88,175.02) ;
\draw  [fill={rgb, 255:red, 155; green, 155; blue, 155 }  ,fill opacity=1 ] (274.94,193.06) .. controls (274.94,189.63) and (277.69,186.85) .. (281.09,186.85) .. controls (284.48,186.85) and (287.23,189.63) .. (287.23,193.06) .. controls (287.23,196.49) and (284.48,199.27) .. (281.09,199.27) .. controls (277.69,199.27) and (274.94,196.49) .. (274.94,193.06) -- cycle ;
\draw    (269.42,165.28) -- (274.41,160.24) -- (280.44,154.15) ;
\draw    (311.55,186.85) -- (303.12,176.41) ;
\draw  [fill={rgb, 255:red, 65; green, 117; blue, 5 }  ,fill opacity=1 ] (293.05,171.12) .. controls (293.05,167.69) and (295.8,164.91) .. (299.19,164.91) .. controls (302.58,164.91) and (305.33,167.69) .. (305.33,171.12) .. controls (305.33,174.55) and (302.58,177.33) .. (299.19,177.33) .. controls (295.8,177.33) and (293.05,174.55) .. (293.05,171.12) -- cycle ;
\draw  [fill={rgb, 255:red, 155; green, 155; blue, 155 }  ,fill opacity=1 ] (305.4,193.06) .. controls (305.4,189.63) and (308.15,186.85) .. (311.55,186.85) .. controls (314.94,186.85) and (317.69,189.63) .. (317.69,193.06) .. controls (317.69,196.49) and (314.94,199.27) .. (311.55,199.27) .. controls (308.15,199.27) and (305.4,196.49) .. (305.4,193.06) -- cycle ;
\draw    (299.88,165.28) -- (304.87,160.24) -- (310.9,154.15) ;
\draw    (363.39,189.63) -- (354.97,179.19) ;
\draw  [fill={rgb, 255:red, 74; green, 144; blue, 226 }  ,fill opacity=1 ] (345.58,174.28) .. controls (345.58,170.84) and (348.33,168.06) .. (351.73,168.06) .. controls (355.12,168.06) and (357.87,170.84) .. (357.87,174.28) .. controls (357.87,177.71) and (355.12,180.49) .. (351.73,180.49) .. controls (348.33,180.49) and (345.58,177.71) .. (345.58,174.28) -- cycle ;
\draw  [fill={rgb, 255:red, 155; green, 155; blue, 155 }  ,fill opacity=1 ] (357.25,195.84) .. controls (357.25,192.41) and (360,189.63) .. (363.39,189.63) .. controls (366.78,189.63) and (369.53,192.41) .. (369.53,195.84) .. controls (369.53,199.27) and (366.78,202.06) .. (363.39,202.06) .. controls (360,202.06) and (357.25,199.27) .. (357.25,195.84) -- cycle ;
\draw    (351.73,168.06) -- (356.72,163.02) -- (363.39,156.24) ;
\draw  [fill={rgb, 255:red, 155; green, 155; blue, 155 }  ,fill opacity=1 ] (273.65,236.89) .. controls (273.65,233.46) and (276.4,230.68) .. (279.79,230.68) .. controls (283.18,230.68) and (285.93,233.46) .. (285.93,236.89) .. controls (285.93,240.32) and (283.18,243.1) .. (279.79,243.1) .. controls (276.4,243.1) and (273.65,240.32) .. (273.65,236.89) -- cycle ;
\draw    (307.01,231.37) -- (298.58,220.94) ;
\draw  [fill={rgb, 255:red, 155; green, 155; blue, 155 }  ,fill opacity=1 ] (288.51,215.65) .. controls (288.51,212.22) and (291.26,209.44) .. (294.65,209.44) .. controls (298.05,209.44) and (300.8,212.22) .. (300.8,215.65) .. controls (300.8,219.08) and (298.05,221.86) .. (294.65,221.86) .. controls (291.26,221.86) and (288.51,219.08) .. (288.51,215.65) -- cycle ;
\draw    (279.79,230.68) -- (290.81,219.55) ;
\draw  [fill={rgb, 255:red, 155; green, 155; blue, 155 }  ,fill opacity=1 ] (300.87,237.59) .. controls (300.87,234.16) and (303.62,231.37) .. (307.01,231.37) .. controls (310.4,231.37) and (313.15,234.16) .. (313.15,237.59) .. controls (313.15,241.02) and (310.4,243.8) .. (307.01,243.8) .. controls (303.62,243.8) and (300.87,241.02) .. (300.87,237.59) -- cycle ;
\draw    (295.99,209.11) -- (307.01,197.98) ;
\draw    (375.7,209.81) -- (367.28,199.37) ;
\draw  [fill={rgb, 255:red, 155; green, 155; blue, 155 }  ,fill opacity=1 ] (369.56,216.02) .. controls (369.56,212.59) and (372.31,209.81) .. (375.7,209.81) .. controls (379.1,209.81) and (381.85,212.59) .. (381.85,216.02) .. controls (381.85,219.45) and (379.1,222.23) .. (375.7,222.23) .. controls (372.31,222.23) and (369.56,219.45) .. (369.56,216.02) -- cycle ;
\draw  [fill={rgb, 255:red, 155; green, 155; blue, 155 }  ,fill opacity=1 ] (307.3,64.51) .. controls (307.3,61.08) and (310.05,58.3) .. (313.45,58.3) .. controls (316.84,58.3) and (319.59,61.08) .. (319.59,64.51) .. controls (319.59,67.95) and (316.84,70.73) .. (313.45,70.73) .. controls (310.05,70.73) and (307.3,67.95) .. (307.3,64.51) -- cycle ;
\draw    (313.45,70.73) -- (313.45,83.51) ;
\draw    (326,172) -- (320,154) ;
\draw  [fill={rgb, 255:red, 74; green, 144; blue, 226 }  ,fill opacity=1 ] (319.86,172) .. controls (319.86,168.57) and (322.61,165.79) .. (326,165.79) .. controls (329.39,165.79) and (332.14,168.57) .. (332.14,172) .. controls (332.14,175.43) and (329.39,178.21) .. (326,178.21) .. controls (322.61,178.21) and (319.86,175.43) .. (319.86,172) -- cycle ;
\draw    (378,174) -- (368.47,155.63) ;
\draw  [fill={rgb, 255:red, 208; green, 2; blue, 27 }  ,fill opacity=1 ] (371.86,174) .. controls (371.86,170.57) and (374.61,167.79) .. (378,167.79) .. controls (381.39,167.79) and (384.14,170.57) .. (384.14,174) .. controls (384.14,177.43) and (381.39,180.21) .. (378,180.21) .. controls (374.61,180.21) and (371.86,177.43) .. (371.86,174) -- cycle ;

\draw (341.51,99.86) node [anchor=north west][inner sep=0.75pt]  [font=\scriptsize]  {$\ v_{2}$};
\draw (342.73,197.04) node [anchor=north west][inner sep=0.75pt]  [font=\scriptsize]  {$\ v_{5}$};
\draw (323.03,53.89) node [anchor=north west][inner sep=0.75pt]  [font=\scriptsize]  {$\ r$};
\draw (323.03,80.3) node [anchor=north west][inner sep=0.75pt]  [font=\scriptsize]  {$\ v$};
\draw (313.55,196.46) node [anchor=north west][inner sep=0.75pt]  [font=\scriptsize]  {$\ v_{4}$};
\draw (267.13,95.32) node [anchor=north west][inner sep=0.75pt]  [font=\scriptsize]  {$\ v_{1}$};
\draw (262.01,197.76) node [anchor=north west][inner sep=0.75pt]  [font=\scriptsize]  {$\ v_{3}$};
\end{tikzpicture}

\caption{}
\label{fig:trees-b}
\end{subfigure}
\begin{subfigure}{0.3\columnwidth}
\centering
\tikzset{every picture/.style={line width=0.75pt}} 

\begin{tikzpicture}[x=0.75pt,y=0.75pt,yscale=-1,xscale=1]

\draw    (486.85,104.56) -- (472.59,94.82) ;
\draw  [fill={rgb, 255:red, 155; green, 155; blue, 155 }  ,fill opacity=1 ] (461.22,90.92) .. controls (461.22,87.49) and (463.97,84.71) .. (467.37,84.71) .. controls (470.76,84.71) and (473.51,87.49) .. (473.51,90.92) .. controls (473.51,94.35) and (470.76,97.14) .. (467.37,97.14) .. controls (463.97,97.14) and (461.22,94.35) .. (461.22,90.92) -- cycle ;
\draw    (446.02,103.17) -- (462.22,94.82) ;
\draw    (457.69,124.74) -- (449.26,114.3) ;
\draw  [fill={rgb, 255:red, 155; green, 155; blue, 155 }  ,fill opacity=1 ] (439.19,109.01) .. controls (439.19,105.58) and (441.94,102.8) .. (445.33,102.8) .. controls (448.72,102.8) and (451.47,105.58) .. (451.47,109.01) .. controls (451.47,112.44) and (448.72,115.22) .. (445.33,115.22) .. controls (441.94,115.22) and (439.19,112.44) .. (439.19,109.01) -- cycle ;
\draw    (430.47,124.04) -- (441.49,112.91) ;
\draw  [fill={rgb, 255:red, 155; green, 155; blue, 155 }  ,fill opacity=1 ] (406.18,150.43) .. controls (406.18,147) and (408.93,144.22) .. (412.32,144.22) .. controls (415.72,144.22) and (418.47,147) .. (418.47,150.43) .. controls (418.47,153.86) and (415.72,156.64) .. (412.32,156.64) .. controls (408.93,156.64) and (406.18,153.86) .. (406.18,150.43) -- cycle ;
\draw  [fill={rgb, 255:red, 208; green, 2; blue, 27 }  ,fill opacity=1 ] (454.79,129.56) .. controls (454.79,126.13) and (457.54,123.35) .. (460.93,123.35) .. controls (464.32,123.35) and (467.07,126.13) .. (467.07,129.56) .. controls (467.07,132.99) and (464.32,135.77) .. (460.93,135.77) .. controls (457.54,135.77) and (454.79,132.99) .. (454.79,129.56) -- cycle ;
\draw    (501.76,126.13) -- (493.33,115.69) ;
\draw  [fill={rgb, 255:red, 155; green, 155; blue, 155 }  ,fill opacity=1 ] (483.26,110.4) .. controls (483.26,106.97) and (486.01,104.19) .. (489.4,104.19) .. controls (492.79,104.19) and (495.54,106.97) .. (495.54,110.4) .. controls (495.54,113.83) and (492.79,116.62) .. (489.4,116.62) .. controls (486.01,116.62) and (483.26,113.83) .. (483.26,110.4) -- cycle ;
\draw    (439.54,144.91) -- (431.12,134.48) ;
\draw  [fill={rgb, 255:red, 208; green, 2; blue, 27 }  ,fill opacity=1 ] (421.04,129.19) .. controls (421.04,125.76) and (423.79,122.97) .. (427.19,122.97) .. controls (430.58,122.97) and (433.33,125.76) .. (433.33,129.19) .. controls (433.33,132.62) and (430.58,135.4) .. (427.19,135.4) .. controls (423.79,135.4) and (421.04,132.62) .. (421.04,129.19) -- cycle ;
\draw    (412.32,144.22) -- (423.34,133.09) ;
\draw    (516.66,147) -- (508.24,136.56) ;
\draw  [fill={rgb, 255:red, 208; green, 2; blue, 27 }  ,fill opacity=1 ] (498.16,131.27) .. controls (498.16,127.84) and (500.91,125.06) .. (504.31,125.06) .. controls (507.7,125.06) and (510.45,127.84) .. (510.45,131.27) .. controls (510.45,134.71) and (507.7,137.49) .. (504.31,137.49) .. controls (500.91,137.49) and (498.16,134.71) .. (498.16,131.27) -- cycle ;
\draw    (489.44,146.3) -- (500.46,135.17) ;
\draw    (470,144.91) -- (466.11,133.09) ;
\draw  [fill={rgb, 255:red, 248; green, 231; blue, 28 }  ,fill opacity=1 ] (433.4,151.13) .. controls (433.4,147.69) and (436.15,144.91) .. (439.54,144.91) .. controls (442.93,144.91) and (445.69,147.69) .. (445.69,151.13) .. controls (445.69,154.56) and (442.93,157.34) .. (439.54,157.34) .. controls (436.15,157.34) and (433.4,154.56) .. (433.4,151.13) -- cycle ;
\draw  [fill={rgb, 255:red, 248; green, 231; blue, 28 }  ,fill opacity=1 ] (463.86,151.13) .. controls (463.86,147.69) and (466.61,144.91) .. (470,144.91) .. controls (473.39,144.91) and (476.14,147.69) .. (476.14,151.13) .. controls (476.14,154.56) and (473.39,157.34) .. (470,157.34) .. controls (466.61,157.34) and (463.86,154.56) .. (463.86,151.13) -- cycle ;
\draw  [fill={rgb, 255:red, 155; green, 155; blue, 155 }  ,fill opacity=1 ] (483.3,152.52) .. controls (483.3,149.09) and (486.05,146.3) .. (489.44,146.3) .. controls (492.84,146.3) and (495.59,149.09) .. (495.59,152.52) .. controls (495.59,155.95) and (492.84,158.73) .. (489.44,158.73) .. controls (486.05,158.73) and (483.3,155.95) .. (483.3,152.52) -- cycle ;
\draw  [fill={rgb, 255:red, 248; green, 231; blue, 28 }  ,fill opacity=1 ] (514.41,151.82) .. controls (514.41,148.39) and (517.16,145.61) .. (520.55,145.61) .. controls (523.94,145.61) and (526.69,148.39) .. (526.69,151.82) .. controls (526.69,155.25) and (523.94,158.03) .. (520.55,158.03) .. controls (517.16,158.03) and (514.41,155.25) .. (514.41,151.82) -- cycle ;
\draw  [fill={rgb, 255:red, 155; green, 155; blue, 155 }  ,fill opacity=1 ] (401.64,193.57) .. controls (401.64,190.13) and (404.39,187.35) .. (407.79,187.35) .. controls (411.18,187.35) and (413.93,190.13) .. (413.93,193.57) .. controls (413.93,197) and (411.18,199.78) .. (407.79,199.78) .. controls (404.39,199.78) and (401.64,197) .. (401.64,193.57) -- cycle ;
\draw    (435.01,188.05) -- (426.58,177.61) ;
\draw  [fill={rgb, 255:red, 65; green, 117; blue, 5 }  ,fill opacity=1 ] (416.51,172.32) .. controls (416.51,168.89) and (419.26,166.11) .. (422.65,166.11) .. controls (426.04,166.11) and (428.79,168.89) .. (428.79,172.32) .. controls (428.79,175.75) and (426.04,178.53) .. (422.65,178.53) .. controls (419.26,178.53) and (416.51,175.75) .. (416.51,172.32) -- cycle ;
\draw    (407.79,187.35) -- (418.8,176.22) ;
\draw  [fill={rgb, 255:red, 208; green, 2; blue, 27 }  ,fill opacity=1 ] (428.86,194.26) .. controls (428.86,190.83) and (431.61,188.05) .. (435.01,188.05) .. controls (438.4,188.05) and (441.15,190.83) .. (441.15,194.26) .. controls (441.15,197.69) and (438.4,200.47) .. (435.01,200.47) .. controls (431.61,200.47) and (428.86,197.69) .. (428.86,194.26) -- cycle ;
\draw    (423.34,166.48) -- (428.33,161.44) -- (434.36,155.35) ;
\draw    (465.46,188.05) -- (457.04,177.61) ;
\draw  [fill={rgb, 255:red, 65; green, 117; blue, 5 }  ,fill opacity=1 ] (446.97,172.32) .. controls (446.97,168.89) and (449.72,166.11) .. (453.11,166.11) .. controls (456.5,166.11) and (459.25,168.89) .. (459.25,172.32) .. controls (459.25,175.75) and (456.5,178.53) .. (453.11,178.53) .. controls (449.72,178.53) and (446.97,175.75) .. (446.97,172.32) -- cycle ;
\draw  [fill={rgb, 255:red, 208; green, 2; blue, 27 }  ,fill opacity=1 ] (459.32,194.26) .. controls (459.32,190.83) and (462.07,188.05) .. (465.46,188.05) .. controls (468.86,188.05) and (471.61,190.83) .. (471.61,194.26) .. controls (471.61,197.69) and (468.86,200.47) .. (465.46,200.47) .. controls (462.07,200.47) and (459.32,197.69) .. (459.32,194.26) -- cycle ;
\draw    (453.8,166.48) -- (458.79,161.44) -- (464.82,155.35) ;
\draw    (517.31,190.83) -- (508.89,180.39) ;
\draw  [fill={rgb, 255:red, 74; green, 144; blue, 226 }  ,fill opacity=1 ] (499.5,175.48) .. controls (499.5,172.04) and (502.25,169.26) .. (505.64,169.26) .. controls (509.04,169.26) and (511.79,172.04) .. (511.79,175.48) .. controls (511.79,178.91) and (509.04,181.69) .. (505.64,181.69) .. controls (502.25,181.69) and (499.5,178.91) .. (499.5,175.48) -- cycle ;
\draw  [fill={rgb, 255:red, 208; green, 2; blue, 27 }  ,fill opacity=1 ] (511.17,197.04) .. controls (511.17,193.61) and (513.92,190.83) .. (517.31,190.83) .. controls (520.7,190.83) and (523.45,193.61) .. (523.45,197.04) .. controls (523.45,200.47) and (520.7,203.26) .. (517.31,203.26) .. controls (513.92,203.26) and (511.17,200.47) .. (511.17,197.04) -- cycle ;
\draw    (505.64,169.26) -- (510.64,164.22) -- (517.31,157.44) ;
\draw  [fill={rgb, 255:red, 155; green, 155; blue, 155 }  ,fill opacity=1 ] (427.57,238.09) .. controls (427.57,234.66) and (430.32,231.88) .. (433.71,231.88) .. controls (437.1,231.88) and (439.85,234.66) .. (439.85,238.09) .. controls (439.85,241.52) and (437.1,244.3) .. (433.71,244.3) .. controls (430.32,244.3) and (427.57,241.52) .. (427.57,238.09) -- cycle ;
\draw    (460.93,232.57) -- (452.5,222.14) ;
\draw  [fill={rgb, 255:red, 155; green, 155; blue, 155 }  ,fill opacity=1 ] (442.43,216.85) .. controls (442.43,213.42) and (445.18,210.64) .. (448.57,210.64) .. controls (451.96,210.64) and (454.72,213.42) .. (454.72,216.85) .. controls (454.72,220.28) and (451.96,223.06) .. (448.57,223.06) .. controls (445.18,223.06) and (442.43,220.28) .. (442.43,216.85) -- cycle ;
\draw    (433.71,231.88) -- (444.73,220.75) ;
\draw  [fill={rgb, 255:red, 155; green, 155; blue, 155 }  ,fill opacity=1 ] (454.79,238.79) .. controls (454.79,235.36) and (457.54,232.57) .. (460.93,232.57) .. controls (464.32,232.57) and (467.07,235.36) .. (467.07,238.79) .. controls (467.07,242.22) and (464.32,245) .. (460.93,245) .. controls (457.54,245) and (454.79,242.22) .. (454.79,238.79) -- cycle ;
\draw    (449.91,210.31) -- (460.93,199.18) ;
\draw    (529.62,211.01) -- (521.2,200.57) ;
\draw  [fill={rgb, 255:red, 155; green, 155; blue, 155 }  ,fill opacity=1 ] (523.48,217.22) .. controls (523.48,213.79) and (526.23,211.01) .. (529.62,211.01) .. controls (533.02,211.01) and (535.77,213.79) .. (535.77,217.22) .. controls (535.77,220.65) and (533.02,223.43) .. (529.62,223.43) .. controls (526.23,223.43) and (523.48,220.65) .. (523.48,217.22) -- cycle ;
\draw  [fill={rgb, 255:red, 155; green, 155; blue, 155 }  ,fill opacity=1 ] (461.22,65.71) .. controls (461.22,62.28) and (463.97,59.5) .. (467.37,59.5) .. controls (470.76,59.5) and (473.51,62.28) .. (473.51,65.71) .. controls (473.51,69.15) and (470.76,71.93) .. (467.37,71.93) .. controls (463.97,71.93) and (461.22,69.15) .. (461.22,65.71) -- cycle ;
\draw    (467.37,71.93) -- (467.37,84.71) ;
\draw    (479,173) -- (473,155) ;
\draw  [fill={rgb, 255:red, 74; green, 144; blue, 226 }  ,fill opacity=1 ] (472.86,173) .. controls (472.86,169.57) and (475.61,166.79) .. (479,166.79) .. controls (482.39,166.79) and (485.14,169.57) .. (485.14,173) .. controls (485.14,176.43) and (482.39,179.21) .. (479,179.21) .. controls (475.61,179.21) and (472.86,176.43) .. (472.86,173) -- cycle ;
\draw    (533,176) -- (523.47,157.63) ;
\draw  [fill={rgb, 255:red, 208; green, 2; blue, 27 }  ,fill opacity=1 ] (526.86,176) .. controls (526.86,172.57) and (529.61,169.79) .. (533,169.79) .. controls (536.39,169.79) and (539.14,172.57) .. (539.14,176) .. controls (539.14,179.43) and (536.39,182.21) .. (533,182.21) .. controls (529.61,182.21) and (526.86,179.43) .. (526.86,176) -- cycle ;

\draw (495.43,101.06) node [anchor=north west][inner sep=0.75pt]  [font=\scriptsize]  {$\ v_{2}$};
\draw (493.65,197.25) node [anchor=north west][inner sep=0.75pt]  [font=\scriptsize]  {$\ v_{5}$};
\draw (475.88,53.89) node [anchor=north west][inner sep=0.75pt]  [font=\scriptsize]  {$\ r$};
\draw (478.02,79.1) node [anchor=north west][inner sep=0.75pt]  [font=\scriptsize]  {$\ v$};
\draw (416.99,102.58) node [anchor=north west][inner sep=0.75pt]  [font=\scriptsize]  {$\ v_{1}$};
\draw (467.46,197.66) node [anchor=north west][inner sep=0.75pt]  [font=\scriptsize]  {$\ v_{4}$};
\draw (416.15,198.66) node [anchor=north west][inner sep=0.75pt]  [font=\scriptsize]  {$\ v_{3}$};
\end{tikzpicture}
\caption{}
\label{fig:trees-c}
\end{subfigure}
\caption{(a) The assignment $\bv$ for the tree $G'$ with a single empty node. Note that all non-red agents have utility at least 1, but some red agents have utility 0. (b) The same assignment for the tree $G$. Note that empty nodes $v_3,v_4$ and $v_5$ are adjacent to non-red agents (c) The assignment $\bv'$:  red agents at $v,v_1$ and $v_2$ with utility 0 in $\bv$ jump to nodes $v_3,v_4$ and $v_5$ to obtain utility 1 in $\bv'$. }
\label{fig:equilibrium:trees}
\end{figure}
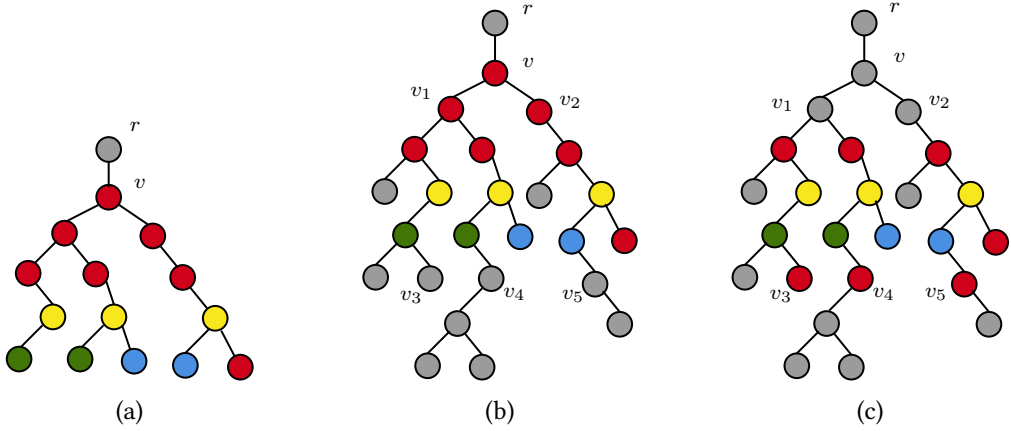

\begin{theorem} \label{thm:equilibrium:cylinders}
There exists an equilibrium assignment when the graph is a cylinder.
\end{theorem}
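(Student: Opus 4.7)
The plan is to construct an explicit assignment on the cylinder $G = C_m \times P_\ell$ satisfying property \hyperref[itm:prop0]{\po}, which by the discussion preceding Theorem~\ref{thm:equilibrium:trees} suffices to give an equilibrium. The case $k=2$ follows immediately from Theorem~\ref{thm:existence:2agents}, so I assume $k\geq 3$ and designate a type of maximum cardinality as red.

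I would traverse the nodes of $G$ in a snake order: left-to-right on odd-indexed rows, right-to-left on even-indexed rows, with column transitions between successive rows. In this order, any two consecutive snake positions are graph-adjacent. I then fill the snake in three contiguous segments: a non-red segment that holds all $n-n_R$ non-red agents, possibly interspersed with a small number of red separators; a pure-red segment arranged so that a full buffer row of $m$ reds sits immediately above the empty segment; and an empty segment of length $|V|-n$ at the bottom of the snake.

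Within the non-red segment, I would apply a greedy scheduling rule: at each position, pick a non-red type different from the one just placed that still has remaining budget; when no such non-red type is available (which can only occur when one non-red type dominates the remaining non-red budget), insert a red separator and continue. Because red is the type of largest cardinality, a rearrangement-style counting argument bounds the number of red separators by $n_R - m$, leaving enough reds to form the full buffer row. Consequently, every non-red agent has its snake predecessor or its snake successor of a different type, giving it utility at least $1$, which is the second clause of \hyperref[itm:prop0]{\po}.

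The main obstacle lies in verifying the first clause of \hyperref[itm:prop0]{\po}, namely that every empty cell is adjacent only to red or empty cells. Snake-adjacent neighbors of empty cells are already other empties or reds in the buffer, so the nontrivial cases are the column edges between the empty segment and the row above it, and the row-cycle wrap-around edges within the rows that contain empty cells. The former is handled by insisting that the pure-red segment contains a full row of reds directly above the empty segment; the latter is handled by sizing the empty segment to occupy at most one full row plus a contiguous prefix of the row above it, so that every wrap-around neighbor of an empty cell is either itself empty or a red agent in the buffer row. With these careful choices, the constructed assignment satisfies property \hyperref[itm:prop0]{\po} and is therefore an equilibrium.
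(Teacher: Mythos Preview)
Your construction hinges on having enough red agents to fill an entire buffer row of width $m$, but this is not guaranteed and in fact fails in easy cases. Take $k\ge 3$ symmetric types with two agents each and $m\ge 3$: then $n_R=2<m$, no red separators are needed in the non-red segment, and you simply do not have $m$ red agents to form the buffer row. The sentence ``a rearrangement-style counting argument bounds the number of red separators by $n_R-m$'' is asserted without proof and is meaningless when $n_R<m$. In such regimes property \hyperref[itm:prop0]{\po} cannot be achieved by your scheme (and perhaps not at all), so a genuinely different construction is required; the paper handles these small-$n_k$ cases separately via explicit placements that target property \hyperref[itm:prop1]{\pOne} or a direct utility-$2$ argument rather than \hyperref[itm:prop0]{\po}.

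There is a second, structural problem. In this paper a ``cylinder'' means the $2\times m$ prism (see the first line of the paper's proof, and note that the IRC example is $3$-regular). With only two rows your three contiguous snake segments --- non-red, then a \emph{full red row}, then empties --- cannot coexist: if the red buffer is row~1 and the empties are in row~2, the non-red segment precedes the buffer in snake order and therefore also sits in row~1, a contradiction. Equivalently, for any placement of the empty block in row~2, its column neighbours land at the \emph{start} of the snake in row~1, exactly where you put the non-red agents, so \hyperref[itm:prop0]{\po} is violated. The wrap-around fix you describe (``at most one full row plus a contiguous prefix'') presupposes $\ell\ge 3$ rows and does not rescue the two-row case that is actually at stake.
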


\begin{proof}
Let the dimensions of the cylinder be $2 \times m$, that is, there are $m$ columns and 2 rows for a total of $2m$ nodes. Since we know  from Theorems \ref{thm:existence:1empty},  \ref{thm:existence:2empty3regular}, and\ref{thm:existence:2agents} that the game is potential if there are at most 2 empty nodes or only 2 types of agents, we assume that there are at least three empty nodes, and the number of types  $k$ is at least  3. Recall that the $n$ agents are partitioned into $k$ types $T_1,T_2,\ldots,T_k$ such that $|T_j|=n_j$ for $j \in [k]$. We assume that $n_1 \leq n_2 \leq ...\leq n_k$.
Since it's simple to construct an equilibrium assignment if $n \leq 3$, we assume below that $n \ge 4$. We show explicitly how to construct an equilibrium assignment; there are three main cases, based on the value of $n_k$ the largest number of agents of a single type. 

\begin{figure}[t]
\centering

\tikzset{every picture/.style={line width=0.75pt}} 

\begin{tikzpicture}[x=0.75pt,y=0.75pt,yscale=-0.8,xscale=0.8]

\draw    (84.45,174.47) -- (109.31,174.47) ;
\draw  [fill={rgb, 255:red, 126; green, 211; blue, 33 }  ,fill opacity=1 ] (67.87,174.47) .. controls (67.87,170.04) and (71.58,166.45) .. (76.16,166.45) .. controls (80.74,166.45) and (84.45,170.04) .. (84.45,174.47) .. controls (84.45,178.9) and (80.74,182.49) .. (76.16,182.49) .. controls (71.58,182.49) and (67.87,178.9) .. (67.87,174.47) -- cycle ;
\draw  [fill={rgb, 255:red, 65; green, 117; blue, 5 }  ,fill opacity=1 ] (109.31,174.47) .. controls (109.31,170.04) and (113.02,166.45) .. (117.6,166.45) .. controls (122.18,166.45) and (125.89,170.04) .. (125.89,174.47) .. controls (125.89,178.9) and (122.18,182.49) .. (117.6,182.49) .. controls (113.02,182.49) and (109.31,178.9) .. (109.31,174.47) -- cycle ;
\draw  [fill={rgb, 255:red, 189; green, 16; blue, 224 }  ,fill opacity=1 ] (150.75,174.47) .. controls (150.75,170.04) and (154.46,166.45) .. (159.04,166.45) .. controls (163.62,166.45) and (167.33,170.04) .. (167.33,174.47) .. controls (167.33,178.9) and (163.62,182.49) .. (159.04,182.49) .. controls (154.46,182.49) and (150.75,178.9) .. (150.75,174.47) -- cycle ;
\draw    (125.89,174.47) -- (150.75,174.47) ;
\draw    (270.85,174.47) -- (295.71,174.47) ;
\draw  [fill={rgb, 255:red, 74; green, 144; blue, 226 }  ,fill opacity=1 ] (254.27,174.47) .. controls (254.27,170.04) and (257.98,166.45) .. (262.56,166.45) .. controls (267.13,166.45) and (270.85,170.04) .. (270.85,174.47) .. controls (270.85,178.9) and (267.13,182.49) .. (262.56,182.49) .. controls (257.98,182.49) and (254.27,178.9) .. (254.27,174.47) -- cycle ;
\draw  [fill={rgb, 255:red, 208; green, 2; blue, 27 }  ,fill opacity=1 ] (295.71,174.47) .. controls (295.71,170.04) and (299.42,166.45) .. (304,166.45) .. controls (308.57,166.45) and (312.28,170.04) .. (312.28,174.47) .. controls (312.28,178.9) and (308.57,182.49) .. (304,182.49) .. controls (299.42,182.49) and (295.71,178.9) .. (295.71,174.47) -- cycle ;
\draw  [fill={rgb, 255:red, 245; green, 166; blue, 35 }  ,fill opacity=1 ] (337.15,174.47) .. controls (337.15,170.04) and (340.86,166.45) .. (345.44,166.45) .. controls (350.01,166.45) and (353.72,170.04) .. (353.72,174.47) .. controls (353.72,178.9) and (350.01,182.49) .. (345.44,182.49) .. controls (340.86,182.49) and (337.15,178.9) .. (337.15,174.47) -- cycle ;
\draw    (312.28,174.47) -- (337.15,174.47) ;
\draw  [fill={rgb, 255:red, 139; green, 87; blue, 42 }  ,fill opacity=1 ] (67.87,212.47) .. controls (67.87,208.04) and (71.58,204.45) .. (76.16,204.45) .. controls (80.74,204.45) and (84.45,208.04) .. (84.45,212.47) .. controls (84.45,216.9) and (80.74,220.49) .. (76.16,220.49) .. controls (71.58,220.49) and (67.87,216.9) .. (67.87,212.47) -- cycle ;
\draw  [fill={rgb, 255:red, 155; green, 155; blue, 155 }  ,fill opacity=1 ] (150.75,211.57) .. controls (150.75,207.14) and (154.46,203.56) .. (159.04,203.56) .. controls (163.62,203.56) and (167.33,207.14) .. (167.33,211.57) .. controls (167.33,216) and (163.62,219.59) .. (159.04,219.59) .. controls (154.46,219.59) and (150.75,216) .. (150.75,211.57) -- cycle ;
\draw    (84.45,174.47) -- (109.31,174.47) ;
\draw    (159.04,182.49) -- (159.04,203.56) ;
\draw  [fill={rgb, 255:red, 184; green, 233; blue, 134 }  ,fill opacity=1 ] (295.71,211.57) .. controls (295.71,207.14) and (299.42,203.56) .. (304,203.56) .. controls (308.57,203.56) and (312.28,207.14) .. (312.28,211.57) .. controls (312.28,216) and (308.57,219.59) .. (304,219.59) .. controls (299.42,219.59) and (295.71,216) .. (295.71,211.57) -- cycle ;
\draw    (76.16,204.45) -- (76.16,182.49) ;
\draw    (117.6,182.49) -- (117.6,203.87) ;
\draw    (262.56,182.49) -- (262.56,203.87) ;
\draw  [fill={rgb, 255:red, 126; green, 211; blue, 33 }  ,fill opacity=1 ] (109.31,211.88) .. controls (109.31,207.46) and (113.02,203.87) .. (117.6,203.87) .. controls (122.18,203.87) and (125.89,207.46) .. (125.89,211.88) .. controls (125.89,216.31) and (122.18,219.9) .. (117.6,219.9) .. controls (113.02,219.9) and (109.31,216.31) .. (109.31,211.88) -- cycle ;
\draw    (125.89,211.88) -- (150.75,211.57) ;
\draw    (345.44,182.49) -- (345.44,203.87) ;
\draw    (304,182.49) -- (304,203.56) ;
\draw  [fill={rgb, 255:red, 80; green, 227; blue, 194 }  ,fill opacity=1 ] (254.27,211.88) .. controls (254.27,207.46) and (257.98,203.87) .. (262.56,203.87) .. controls (267.13,203.87) and (270.85,207.46) .. (270.85,211.88) .. controls (270.85,216.31) and (267.13,219.9) .. (262.56,219.9) .. controls (257.98,219.9) and (254.27,216.31) .. (254.27,211.88) -- cycle ;
\draw  [fill={rgb, 255:red, 208; green, 2; blue, 27 }  ,fill opacity=1 ] (337.15,211.88) .. controls (337.15,207.46) and (340.86,203.87) .. (345.44,203.87) .. controls (350.01,203.87) and (353.72,207.46) .. (353.72,211.88) .. controls (353.72,216.31) and (350.01,219.9) .. (345.44,219.9) .. controls (340.86,219.9) and (337.15,216.31) .. (337.15,211.88) -- cycle ;
\draw    (270.85,211.88) -- (295.71,211.57) ;
\draw    (76.16,166.45) .. controls (96.44,131.19) and (408.8,131.19) .. (428.71,166.45) ;
\draw    (428.71,219.9) .. controls (403.55,252.38) and (98.19,251.49) .. (76.16,220.49) ;
\draw    (84.45,212.47) -- (109.31,211.88) ;
\draw    (312.28,211.57) -- (337.15,211.88) ;
\draw   (174.65,193.69) .. controls (174.65,193.09) and (175.12,192.61) .. (175.7,192.61) .. controls (176.28,192.61) and (176.75,193.09) .. (176.75,193.69) .. controls (176.75,194.28) and (176.28,194.76) .. (175.7,194.76) .. controls (175.12,194.76) and (174.65,194.28) .. (174.65,193.69) -- cycle ;
\draw   (184.27,193.69) .. controls (184.27,193.09) and (184.74,192.61) .. (185.32,192.61) .. controls (185.9,192.61) and (186.37,193.09) .. (186.37,193.69) .. controls (186.37,194.28) and (185.9,194.76) .. (185.32,194.76) .. controls (184.74,194.76) and (184.27,194.28) .. (184.27,193.69) -- cycle ;
\draw   (194.46,193.93) .. controls (194.46,193.33) and (194.93,192.85) .. (195.51,192.85) .. controls (196.09,192.85) and (196.56,193.33) .. (196.56,193.93) .. controls (196.56,194.52) and (196.09,195.01) .. (195.51,195.01) .. controls (194.93,195.01) and (194.46,194.52) .. (194.46,193.93) -- cycle ;
\draw  [fill={rgb, 255:red, 144; green, 19; blue, 254 }  ,fill opacity=1 ] (211.77,174.47) .. controls (211.77,170.04) and (215.48,166.45) .. (220.06,166.45) .. controls (224.64,166.45) and (228.35,170.04) .. (228.35,174.47) .. controls (228.35,178.9) and (224.64,182.49) .. (220.06,182.49) .. controls (215.48,182.49) and (211.77,178.9) .. (211.77,174.47) -- cycle ;
\draw    (228.35,174.47) -- (253.21,174.47) ;
\draw  [fill={rgb, 255:red, 155; green, 155; blue, 155 }  ,fill opacity=1 ] (211.77,211.57) .. controls (211.77,207.14) and (215.48,203.56) .. (220.06,203.56) .. controls (224.64,203.56) and (228.35,207.14) .. (228.35,211.57) .. controls (228.35,216) and (224.64,219.59) .. (220.06,219.59) .. controls (215.48,219.59) and (211.77,216) .. (211.77,211.57) -- cycle ;
\draw    (220.06,182.49) -- (220.06,203.56) ;
\draw    (228.35,211.57) -- (253.21,211.88) ;
\draw   (203.51,193.69) .. controls (203.51,193.09) and (203.98,192.61) .. (204.56,192.61) .. controls (205.13,192.61) and (205.6,193.09) .. (205.6,193.69) .. controls (205.6,194.28) and (205.13,194.76) .. (204.56,194.76) .. controls (203.98,194.76) and (203.51,194.28) .. (203.51,193.69) -- cycle ;
\draw  [fill={rgb, 255:red, 248; green, 231; blue, 28 }  ,fill opacity=1 ] (379.12,174.47) .. controls (379.12,170.04) and (382.83,166.45) .. (387.4,166.45) .. controls (391.98,166.45) and (395.69,170.04) .. (395.69,174.47) .. controls (395.69,178.9) and (391.98,182.49) .. (387.4,182.49) .. controls (382.83,182.49) and (379.12,178.9) .. (379.12,174.47) -- cycle ;
\draw    (354.25,174.47) -- (379.12,174.47) ;
\draw    (387.4,182.49) -- (387.4,203.87) ;
\draw  [fill={rgb, 255:red, 245; green, 166; blue, 35 }  ,fill opacity=1 ] (379.12,211.88) .. controls (379.12,207.46) and (382.83,203.87) .. (387.4,203.87) .. controls (391.98,203.87) and (395.69,207.46) .. (395.69,211.88) .. controls (395.69,216.31) and (391.98,219.9) .. (387.4,219.9) .. controls (382.83,219.9) and (379.12,216.31) .. (379.12,211.88) -- cycle ;
\draw    (354.25,211.57) -- (379.12,211.88) ;
\draw  [fill={rgb, 255:red, 139; green, 87; blue, 42 }  ,fill opacity=1 ] (420.42,174.47) .. controls (420.42,170.04) and (424.13,166.45) .. (428.71,166.45) .. controls (433.29,166.45) and (437,170.04) .. (437,174.47) .. controls (437,178.9) and (433.29,182.49) .. (428.71,182.49) .. controls (424.13,182.49) and (420.42,178.9) .. (420.42,174.47) -- cycle ;
\draw    (428.71,182.49) -- (428.71,203.87) ;
\draw  [fill={rgb, 255:red, 248; green, 231; blue, 28 }  ,fill opacity=1 ] (420.42,211.88) .. controls (420.42,207.46) and (424.13,203.87) .. (428.71,203.87) .. controls (433.29,203.87) and (437,207.46) .. (437,211.88) .. controls (437,216.31) and (433.29,219.9) .. (428.71,219.9) .. controls (424.13,219.9) and (420.42,216.31) .. (420.42,211.88) -- cycle ;
\draw    (395.69,174.47) -- (420.56,174.47) ;
\draw    (395.69,211.88) -- (420.56,211.88) ;

\draw (309.2,160.4) node [anchor=north west][inner sep=0.75pt]  [font=\scriptsize]  {$\ v_{1}$};
\draw (267.2,160.4) node [anchor=north west][inner sep=0.75pt]  [font=\scriptsize]  {$\ v_{2}$};
\draw (265.56,216.28) node [anchor=north west][inner sep=0.75pt]  [font=\scriptsize]  {$\ v_{3}$};
\draw (120.6,216.28) node [anchor=north west][inner sep=0.75pt]  [font=\scriptsize]  {$\ v_{4}$};

\end{tikzpicture}

\caption{An example of an equilibrium assignment in a cylinder graph, when $n_k=2$.}
\label{fig:equilibrium:cylindercase2}
\end{figure}
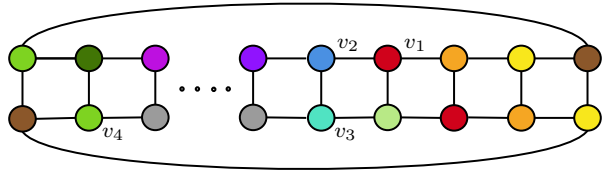

\medskip
\noindent
{\bf Case 1: $\mathbf{n_k=1} $.}
This implies $n_1 = n_2 = ...= n_k=1$, since $n_k \ge n_j$ for all $j < k$. 
If $n < m$, place all agents in the top row at consecutive locations. Notice that every agent has utility 2 except the agents at the ends of the row, which have utility 1. When $n < m-1$, any empty node offers a utility of at most 1, so no agent wants to jump. When $n = m-1$, the sole empty node in the top row offers utility 2, and other empty nodes have utility 1. However, only the agents adjacent to this empty node have utility 1, and they would not improve their utility by jumping to the empty node. Thus the assignment is an equilibrium. 
Otherwise, $n \geq m$ and we place $m$ agents in the top row and the remaining $n-m$ agents at consecutive locations in the bottom row. Every agent has utility at least 2 and no empty node offers utility $>2$, so we have an equilibrium assignment. 

\medskip
\noindent
{\bf Case 2: $\mathbf{n_k=2}$.}
Place the agents in a sequence such that the types of the agents in the sequence are as follows: 
$T_k, T_{k-1},\ldots,T_1, T_k,T_{k-1},\ldots,T_z$, where $z$ is the smallest index with $|T_z| = 2$. 

\medskip
\noindent 
{\bf Case 2~(a):} If $n \le m$, then all agents fit in one row of the cylinder. Place the sequence of agents in the top row of the cylinder. Notice that every agent has a utility of 2 except the agents at both the ends of the sequence. All the empty nodes in the bottom row can offer utility of at most 1 to any agent. If there is only one empty node in the top row, it offers utility at most 2, but only utility 1 to the agents adjacent to it. Therefore no agent wants to jump. If there is more than one empty node in the top row, then all such nodes offer utility at most 1 to any agent, and therefore, no agent is motivated to jump. In both cases, we have an equilibrium. 

\medskip
\noindent 
{\bf Case 2~(b):} Otherwise, we have $m<n$. Place the first $m$ agents of the sequence in the top row of the cylinder in anticlockwise direction starting from any node, say $v_1$. Let $v_2$ be the last node of the top row to which an agent is assigned and it is adjacent to $v_1$. Let $v_3$ be the node in the bottom row that is adjacent to $v_2$. Start assigning agents to the bottom row from $v_3$ and assign the remaining agents of the sequence in anticlockwise direction, with the last agent being assigned to node $v_4$ (see Figure~\ref{fig:equilibrium:cylindercase2}).  This leaves the consecutive nodes from $v_4$ going counterclockwise to $v_3$  empty. Since there are at least three empty nodes, every empty node is adjacent to at least one empty node, and offers utility at most 2. We will show that all agents have utility at least 2; in some cases we need to adjust the above assignment of agents. 

\begin{itemize}

\item If $k > m+1$, then every agent in the top row, as well as every agent in the bottom row except those at $v_3$ and $v_4$ has utility of at least 2 since it has agents of 2 different types adjacent to it in the same row. 
Also the agents at $v_3$ and $v_4$ have utility 2 as they have one neighbor of a different type in the bottom row, and one in the top row.

\item If $k=m+1$, then every agent starting from the one that is next to $v_3$, in the bottom row, will have a neighbor of the same type in the top row. So, move every agent in the bottom row by one node in clockwise direction. Now notice that every agent in the bottom row will also have a utility of at least 2. 

\item If $k=m$, then $v_2$ will be occupied by an agent of type $T_1$ and $v_3$ will be occupied by an agent of type $T_k$. Observe that the agent in $v_4$ has a utility of just 1. Move every agent in the bottom row by one node in the clockwise direction. Now notice that every agent in the bottom row will also have a utility of at least 2.

\item If $k < m$, every agent in the top row has utility at least $2$ because of neighbors with in the top row or neighbors in both the top and bottom rows, and every agent in the bottom row has utility at least 2. 
\end{itemize}

\begin{figure}[t]
\centering

\tikzset{every picture/.style={line width=0.75pt}} 

\begin{tikzpicture}[x=0.75pt,y=0.75pt,yscale=-0.8,xscale=0.8]

\draw    (84.45,174.47) -- (109.31,174.47) ;
\draw  [fill={rgb, 255:red, 208; green, 2; blue, 27 }  ,fill opacity=1 ] (67.87,174.47) .. controls (67.87,170.04) and (71.58,166.45) .. (76.16,166.45) .. controls (80.74,166.45) and (84.45,170.04) .. (84.45,174.47) .. controls (84.45,178.9) and (80.74,182.49) .. (76.16,182.49) .. controls (71.58,182.49) and (67.87,178.9) .. (67.87,174.47) -- cycle ;
\draw  [fill={rgb, 255:red, 74; green, 144; blue, 226 }  ,fill opacity=1 ] (109.31,174.47) .. controls (109.31,170.04) and (113.02,166.45) .. (117.6,166.45) .. controls (122.18,166.45) and (125.89,170.04) .. (125.89,174.47) .. controls (125.89,178.9) and (122.18,182.49) .. (117.6,182.49) .. controls (113.02,182.49) and (109.31,178.9) .. (109.31,174.47) -- cycle ;
\draw  [fill={rgb, 255:red, 248; green, 231; blue, 28 }  ,fill opacity=1 ] (150.75,174.47) .. controls (150.75,170.04) and (154.46,166.45) .. (159.04,166.45) .. controls (163.62,166.45) and (167.33,170.04) .. (167.33,174.47) .. controls (167.33,178.9) and (163.62,182.49) .. (159.04,182.49) .. controls (154.46,182.49) and (150.75,178.9) .. (150.75,174.47) -- cycle ;
\draw    (125.89,174.47) -- (150.75,174.47) ;
\draw    (270.85,174.47) -- (295.71,174.47) ;
\draw  [fill={rgb, 255:red, 65; green, 117; blue, 5 }  ,fill opacity=1 ] (254.27,174.47) .. controls (254.27,170.04) and (257.98,166.45) .. (262.56,166.45) .. controls (267.13,166.45) and (270.85,170.04) .. (270.85,174.47) .. controls (270.85,178.9) and (267.13,182.49) .. (262.56,182.49) .. controls (257.98,182.49) and (254.27,178.9) .. (254.27,174.47) -- cycle ;
\draw  [fill={rgb, 255:red, 208; green, 2; blue, 27 }  ,fill opacity=1 ] (295.71,174.47) .. controls (295.71,170.04) and (299.42,166.45) .. (304,166.45) .. controls (308.57,166.45) and (312.28,170.04) .. (312.28,174.47) .. controls (312.28,178.9) and (308.57,182.49) .. (304,182.49) .. controls (299.42,182.49) and (295.71,178.9) .. (295.71,174.47) -- cycle ;
\draw  [fill={rgb, 255:red, 155; green, 155; blue, 155 }  ,fill opacity=1 ] (337.15,174.47) .. controls (337.15,170.04) and (340.86,166.45) .. (345.44,166.45) .. controls (350.01,166.45) and (353.72,170.04) .. (353.72,174.47) .. controls (353.72,178.9) and (350.01,182.49) .. (345.44,182.49) .. controls (340.86,182.49) and (337.15,178.9) .. (337.15,174.47) -- cycle ;
\draw    (312.28,174.47) -- (337.15,174.47) ;
\draw  [fill={rgb, 255:red, 208; green, 2; blue, 27 }  ,fill opacity=1 ] (67.87,212.47) .. controls (67.87,208.04) and (71.58,204.45) .. (76.16,204.45) .. controls (80.74,204.45) and (84.45,208.04) .. (84.45,212.47) .. controls (84.45,216.9) and (80.74,220.49) .. (76.16,220.49) .. controls (71.58,220.49) and (67.87,216.9) .. (67.87,212.47) -- cycle ;
\draw  [fill={rgb, 255:red, 189; green, 16; blue, 224 }  ,fill opacity=1 ] (150.75,211.57) .. controls (150.75,207.14) and (154.46,203.56) .. (159.04,203.56) .. controls (163.62,203.56) and (167.33,207.14) .. (167.33,211.57) .. controls (167.33,216) and (163.62,219.59) .. (159.04,219.59) .. controls (154.46,219.59) and (150.75,216) .. (150.75,211.57) -- cycle ;
\draw    (84.45,174.47) -- (109.31,174.47) ;
\draw    (159.04,182.49) -- (159.04,203.56) ;
\draw  [fill={rgb, 255:red, 208; green, 2; blue, 27 }  ,fill opacity=1 ] (295.71,211.57) .. controls (295.71,207.14) and (299.42,203.56) .. (304,203.56) .. controls (308.57,203.56) and (312.28,207.14) .. (312.28,211.57) .. controls (312.28,216) and (308.57,219.59) .. (304,219.59) .. controls (299.42,219.59) and (295.71,216) .. (295.71,211.57) -- cycle ;
\draw    (76.16,204.45) -- (76.16,182.49) ;
\draw    (117.6,182.49) -- (117.6,203.87) ;
\draw    (262.56,182.49) -- (262.56,203.87) ;
\draw  [fill={rgb, 255:red, 248; green, 231; blue, 28 }  ,fill opacity=1 ] (109.31,211.88) .. controls (109.31,207.46) and (113.02,203.87) .. (117.6,203.87) .. controls (122.18,203.87) and (125.89,207.46) .. (125.89,211.88) .. controls (125.89,216.31) and (122.18,219.9) .. (117.6,219.9) .. controls (113.02,219.9) and (109.31,216.31) .. (109.31,211.88) -- cycle ;
\draw    (125.89,211.88) -- (150.75,211.57) ;
\draw    (345.44,182.49) -- (345.44,203.87) ;
\draw    (304,182.49) -- (304,203.56) ;
\draw  [fill={rgb, 255:red, 208; green, 2; blue, 27 }  ,fill opacity=1 ] (254.27,211.88) .. controls (254.27,207.46) and (257.98,203.87) .. (262.56,203.87) .. controls (267.13,203.87) and (270.85,207.46) .. (270.85,211.88) .. controls (270.85,216.31) and (267.13,219.9) .. (262.56,219.9) .. controls (257.98,219.9) and (254.27,216.31) .. (254.27,211.88) -- cycle ;
\draw  [fill={rgb, 255:red, 208; green, 2; blue, 27 }  ,fill opacity=1 ] (337.15,211.88) .. controls (337.15,207.46) and (340.86,203.87) .. (345.44,203.87) .. controls (350.01,203.87) and (353.72,207.46) .. (353.72,211.88) .. controls (353.72,216.31) and (350.01,219.9) .. (345.44,219.9) .. controls (340.86,219.9) and (337.15,216.31) .. (337.15,211.88) -- cycle ;
\draw    (270.85,211.88) -- (295.71,211.57) ;
\draw    (76.16,166.45) .. controls (96.44,131.19) and (408.8,131.19) .. (428.71,166.45) ;
\draw    (428.71,219.9) .. controls (403.55,252.38) and (98.19,251.49) .. (76.16,220.49) ;
\draw    (84.45,212.47) -- (109.31,211.88) ;
\draw    (312.28,211.57) -- (337.15,211.88) ;
\draw   (174.65,193.69) .. controls (174.65,193.09) and (175.12,192.61) .. (175.7,192.61) .. controls (176.28,192.61) and (176.75,193.09) .. (176.75,193.69) .. controls (176.75,194.28) and (176.28,194.76) .. (175.7,194.76) .. controls (175.12,194.76) and (174.65,194.28) .. (174.65,193.69) -- cycle ;
\draw   (184.27,193.69) .. controls (184.27,193.09) and (184.74,192.61) .. (185.32,192.61) .. controls (185.9,192.61) and (186.37,193.09) .. (186.37,193.69) .. controls (186.37,194.28) and (185.9,194.76) .. (185.32,194.76) .. controls (184.74,194.76) and (184.27,194.28) .. (184.27,193.69) -- cycle ;
\draw   (194.46,193.93) .. controls (194.46,193.33) and (194.93,192.85) .. (195.51,192.85) .. controls (196.09,192.85) and (196.56,193.33) .. (196.56,193.93) .. controls (196.56,194.52) and (196.09,195.01) .. (195.51,195.01) .. controls (194.93,195.01) and (194.46,194.52) .. (194.46,193.93) -- cycle ;
\draw  [fill={rgb, 255:red, 189; green, 16; blue, 224 }  ,fill opacity=1 ] (211.77,174.47) .. controls (211.77,170.04) and (215.48,166.45) .. (220.06,166.45) .. controls (224.64,166.45) and (228.35,170.04) .. (228.35,174.47) .. controls (228.35,178.9) and (224.64,182.49) .. (220.06,182.49) .. controls (215.48,182.49) and (211.77,178.9) .. (211.77,174.47) -- cycle ;
\draw    (228.35,174.47) -- (253.21,174.47) ;
\draw  [fill={rgb, 255:red, 65; green, 117; blue, 5 }  ,fill opacity=1 ] (211.77,211.57) .. controls (211.77,207.14) and (215.48,203.56) .. (220.06,203.56) .. controls (224.64,203.56) and (228.35,207.14) .. (228.35,211.57) .. controls (228.35,216) and (224.64,219.59) .. (220.06,219.59) .. controls (215.48,219.59) and (211.77,216) .. (211.77,211.57) -- cycle ;
\draw    (220.06,182.49) -- (220.06,203.56) ;
\draw    (228.35,211.57) -- (253.21,211.88) ;
\draw   (203.51,193.69) .. controls (203.51,193.09) and (203.98,192.61) .. (204.56,192.61) .. controls (205.13,192.61) and (205.6,193.09) .. (205.6,193.69) .. controls (205.6,194.28) and (205.13,194.76) .. (204.56,194.76) .. controls (203.98,194.76) and (203.51,194.28) .. (203.51,193.69) -- cycle ;
\draw  [fill={rgb, 255:red, 155; green, 155; blue, 155 }  ,fill opacity=1 ] (379.12,174.47) .. controls (379.12,170.04) and (382.83,166.45) .. (387.4,166.45) .. controls (391.98,166.45) and (395.69,170.04) .. (395.69,174.47) .. controls (395.69,178.9) and (391.98,182.49) .. (387.4,182.49) .. controls (382.83,182.49) and (379.12,178.9) .. (379.12,174.47) -- cycle ;
\draw    (354.25,174.47) -- (379.12,174.47) ;
\draw    (387.4,182.49) -- (387.4,203.87) ;
\draw  [fill={rgb, 255:red, 155; green, 155; blue, 155 }  ,fill opacity=1 ] (379.12,211.88) .. controls (379.12,207.46) and (382.83,203.87) .. (387.4,203.87) .. controls (391.98,203.87) and (395.69,207.46) .. (395.69,211.88) .. controls (395.69,216.31) and (391.98,219.9) .. (387.4,219.9) .. controls (382.83,219.9) and (379.12,216.31) .. (379.12,211.88) -- cycle ;
\draw    (354.25,211.57) -- (379.12,211.88) ;
\draw  [fill={rgb, 255:red, 155; green, 155; blue, 155 }  ,fill opacity=1 ] (420.42,174.47) .. controls (420.42,170.04) and (424.13,166.45) .. (428.71,166.45) .. controls (433.29,166.45) and (437,170.04) .. (437,174.47) .. controls (437,178.9) and (433.29,182.49) .. (428.71,182.49) .. controls (424.13,182.49) and (420.42,178.9) .. (420.42,174.47) -- cycle ;
\draw    (428.71,182.49) -- (428.71,203.87) ;
\draw  [fill={rgb, 255:red, 155; green, 155; blue, 155 }  ,fill opacity=1 ] (420.42,211.88) .. controls (420.42,207.46) and (424.13,203.87) .. (428.71,203.87) .. controls (433.29,203.87) and (437,207.46) .. (437,211.88) .. controls (437,216.31) and (433.29,219.9) .. (428.71,219.9) .. controls (424.13,219.9) and (420.42,216.31) .. (420.42,211.88) -- cycle ;
\draw    (395.69,174.47) -- (420.56,174.47) ;
\draw    (395.69,211.88) -- (420.56,211.88) ;

\draw (350.2,159.4) node [anchor=north west][inner sep=0.75pt]  [font=\scriptsize]  {$\ e_{1}$};
\draw (394.2,214.4) node [anchor=north west][inner sep=0.75pt]  [font=\scriptsize]  {$\ e_{2}$};
\draw (433.2,160.4) node [anchor=north west][inner sep=0.75pt]  [font=\scriptsize]  {$\ e_{3}$};
\draw (434.71,212.89) node [anchor=north west][inner sep=0.75pt]  [font=\scriptsize]  {$\ e_{4}$};
\end{tikzpicture}
\caption{An example of an equilibrium assignment in a cylinder graph, when $n_k\ge 3$, with 5 empty nodes.}
\label{fig:equilibrium:cylinder}
\end{figure}
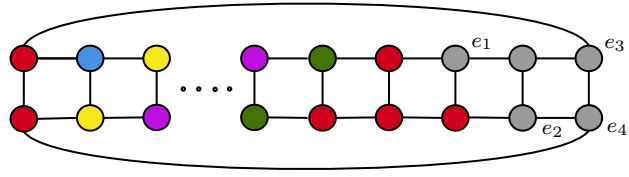

\medskip
\noindent
{\bf Case 3: $\mathbf{ n_k\geq 3}$.}  
First we select $\lfloor (|V|-n)/2 \rfloor$ consecutive columns of the graph; if $n$ is odd, we select an  additional node in the top row of an adjacent column: these nodes are denoted by $V_e$ and will be empty in our assignment (see the gray nodes in Figure~\ref{fig:equilibrium:cylinder}). Notice that at most four nodes in $V_e$  are adjacent to nodes outside $V_e$. Let $e_1$, $e_2$, $e_3$, and $e_4$ be the empty nodes of $V_e$ that are adjacent to the nodes of $V-V_e$, where, $e_1$, $e_2$ are in the top and bottom rows on one side of $V_e$ (not in the same column if $n$ is odd), and $e_3$, $e_4$ are in the top and bottom rows on the other side. Note that when $|V_e|=3$, $e_2=e_4$.

Create a sequence of ordered pairs of agents  as follows:  Add to the sequence  $n_1$ pairs from $T_1$ and $T_{2}$, followed by $n_2 - n_1$ pairs from the remaining agents in $T_{2}$ with $T_{3}$, and so on.
Stop the process when all unpaired agents are from $T_k$.

If $n$ is even, then the four empty nodes $e_1,e_2,e_3$ and $e_4$ will have exactly one non-empty node as a neighbor in any assignment, and therefore can only offer utility at most 1 to any agent. Since $n$ is even, the number of unpaired agents is also even. Create new ordered pairs from the unpaired agents of type $T_k$ and add these pairs to the end of the sequence of ordered pairs. Starting with the second next column from one end of $V_e$, place $n/2-1$ ordered pairs in the sequence until you reach the other end of $V_e$, and then place the last ordered pair in the last available column of $V-V_e$. Notice that this is guaranteed to be an equilibrium assignment, because either every agent has a utility at least 1, or every agent other than type $T_k$ has a utility at least 1 and all empty nodes are adjacent only to the agents of type $T_k$. 

Therefore, we assume that $n$ is odd and we are left with an odd number of unpaired agents. When $|V_e|>3$, $e_1$ is adjacent to 2 nodes of $V-V_e$, and $e_2,e_3$ and $e_4$ are adjacent to 1 node outside $V_e$. When $|V_e|=3$, $e_1$ and $e_2=e_4$ have 2 neighbors outside $V_e$, and $e_3$ has 1 neighbor outside $V_e$. Our assignments, depending on the cases below, will  ensure that  $e_1$, $e_2$ and $e_4$ are adjacent only to agents of type $T_k$. Now, we describe how to place the unpaired agents of type $T_k$ as well as the agents in the ordered pairs, depending on the cases below:

\medskip
\noindent 
{\bf Case 3~(a): 1 unpaired agents of type $T_k$.} 
Place the unpaired agent at the node below $e_1$. Since $n_k \geq 3$, the last 2 ordered pairs must contain an agent of type $T_k$; remove them from the sequence, and place each pair in the columns beside $e_1$ and $e_3$ such there $e_1$ and $e_4$ are adjacent only to agents of type $T_k$. Then, place all other ordered pairs of the sequence in the remaining columns of $V-V_e$. Notice that every agent has utility at least 1 and every node in $V_e$ can offer utility at most 1. 

\medskip
\noindent 
{\bf Case 3~(b): 3 unpaired agents of type $T_k$.} 
Place 2 unpaired agents of type $T_k$ as the two neighbors of  $e_1$; place the remaining unpaired agent beside $e_4$. Remove the first ordered pair from the sequence (both agents of this ordered pair are not of type $T_k$, since $k \ge 3$), split the agents in the ordered pair, and place them in the columns of $V-V_e$ that are half filled. Place all other ordered pairs in the remaining columns. Similar to the previous case, every agent has utility at least 1 and every node in $V_e$ can offer utility at most 1. 

\medskip
\noindent 
{\bf Case 3~(c): $5$ or more unpaired agents of type $T_k$.} 
Place 5 unpaired agents of type $T_k$ such that one is in $e_1$'s column and the other four in the two columns adjacent to $V_e$. Notice that all empty nodes are adjacent only to agents of type $T_k$. Now, place the agents of each ordered pair in different columns, and finally any remaining unpaired agents of type $T_k$ in any remaining positions in an arbitrary manner. Notice that every agent of type other than $T_k$ has utility at least 1 and none of the agents wants to jump to nodes of $V_e$. There may be agents of type $T_k$ that have utility 0, but they are not motivated to jump to empty nodes either.

In every case, we exhibited an equilibrium assignment, completing the proof. 
\end{proof}

Next we show that tori admit equilibrium assignments. To simplify the presentation, we assume that the torus has at least 9 rows and that the largest type has at least 8 agents. 

\begin{theorem} \label{thm:equilibrium:tori}
There exists an equilibrium assignment when the graph is an $m_1 \times m_2$ torus, where $m_1 \geq m_2 \geq 9$, and there are $n < m_1 m_2$ agents of $k \geq 3$ types, with the largest type containing $n_k \geq 8$ agents. 
\end{theorem}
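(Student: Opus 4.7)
The plan is to construct an explicit assignment satisfying property \hyperref[itm:prop1]{\pOne}, which by the discussion preceding the theorem is automatically an equilibrium. The construction generalizes the consecutive-columns idea from Theorem~\ref{thm:equilibrium:cylinders} to two dimensions, using the torus's double wrap-around so that the empty region $V_e$ has a small exposed boundary and only a handful of $T_k$ agents are needed as a ``corner buffer''.

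Let $e = m_1 m_2 - n$. The primary case is $e \geq 2 m_1$: write $e = q m_1 + r$ with $q \geq 2$ and $0 \leq r < m_1$, and set $V_e$ to consist of columns $1,\ldots,q$ (fully empty) together with rows $1,\ldots,r$ of column $q+1$. Because $q \geq 2$, every empty node in one of the $q$ full columns has exactly one non-empty neighbor (either in column $m_2$ or in column $q+1$), so $\tau_v \leq 1$ is free. The partial column produces at most two corner empties whose two non-empty neighbors must match in type; placing $T_k$ agents at the (at most four) corresponding external positions enforces $\tau_v \leq 1$ there too. For the remaining case $e < 2 m_1$, take $V_e$ to be a $2 \times b$ block with $b = \lceil e/2 \rceil$ (occupying the extra block cell by a $T_k$ when $e$ is odd). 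The block has at most four corners, each with two non-empty external neighbors that must match in type, so fixing the at most eight corresponding positions to $T_k$ enforces $\tau_v \leq 1$ globally. In both cases the corner buffer uses at most eight $T_k$ agents, which fits within the budget $n_k \geq 8$.

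After placing $V_e$ and the corner buffer, fill the remaining nodes using the ordered-pair strategy from Theorem~\ref{thm:equilibrium:cylinders}: form pairs $(T_1,T_2),\ldots$ until $T_1$ is exhausted, then continue with $(T_2,T_3),\ldots$, and so on, with any leftover agents being of type $T_k$. Place each pair on two adjacent nodes so that both of its members obtain a different-type neighbor, hence utility at least $1$. Each corner $T_k$ buffer agent is given a pair agent as a neighbor so that it too acquires utility at least $1$; the bound $m_2 \geq 9$ provides ample room for this local arrangement.

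The main obstacle will be the careful bookkeeping at the interface between the corner buffer and the paired region, together with parity resolution when $e$ is odd or the non-$T_k$ agent count does not split evenly into pairs. Short local adjustments (for instance, replacing a pair by an unpaired $T_k$ triple in one row, or swapping two adjacent positions) will be needed to rescue any stray non-$T_k$ agent that would otherwise have utility $0$. I expect a brief case analysis on the residues $e \bmod m_1$ and $n \bmod 2$, with the bounds $m_1 \geq m_2 \geq 9$ and $n_k \geq 8$ precisely matched so that every adjustment fits without overrunning the majority type.
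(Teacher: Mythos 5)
Your overall scaffolding is the same as the paper's (carve out one contiguous empty region, note that only its few ``corner'' empties have more than one occupied neighbor, buffer those corners with at most eight $T_k$ agents using $n_k\geq 8$, and fill the rest with a type-alternating paired sequence as in Theorem~\ref{thm:equilibrium:cylinders}). The genuine gap is that you commit exclusively to property \hyperref[itm:prop1]{\pOne}. When $T_k$ is a large majority this property is simply unattainable: on a degree-$4$ torus at most $4(n-n_k)$ agents can be adjacent to a non-$T_k$ agent, so if $n_k>4(n-n_k)$ some $T_k$ agents necessarily have utility $0$. Your ``rescue'' adjustments only address stray non-$T_k$ agents, so these surplus $T_k$ agents remain, and in your construction the occupied columns bordering the empty region are filled with paired, mostly non-$T_k$ agents; hence there are empty nodes adjacent to a non-$T_k$ agent, a utility-$0$ $T_k$ agent profitably jumps there, and the assignment is not an equilibrium. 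This is not a corner case you can absorb into ``parity bookkeeping'': it is exactly the regime the paper treats separately via property \hyperref[itm:prop0]{\po}, by enclosing the empty area $A$ entirely with $T_k$ agents when $n_k\geq 2m_2$ (Cases 2(a) and 3), or, in the sparse Case 1, by placing leftover $T_k$ agents in arbitrary empty nodes only after every empty node is adjacent solely to $T_k$ agents. Your proposal needs an analogous branch, together with an argument that $n_k$ is large enough to cover the roughly $2m_1$ occupied nodes bordering your column-shaped empty region (or a different region shape) whenever that branch is invoked.

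A secondary, smaller weakness: even in the regime where \hyperref[itm:prop1]{\pOne} is feasible, you only assert that each corner-buffer $T_k$ agent and each leftover unpaired $T_k$ agent can be handed a non-$T_k$ neighbor. This is precisely the bookkeeping the paper's sequence $L$ is engineered to deliver (it interleaves the surplus $T_k$ agents between consecutive different-type agents, and treats the row/column ends explicitly so no empty node of type-count $2$ is created); without some such mechanism the claim that ``$m_2\geq 9$ provides ample room'' does not by itself guarantee that the leftover $T_k$ agents, which can be almost half of all agents, all find different-type neighbors while keeping every empty node at type-count at most $1$.
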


\begin{proof} Consider an $m_1 \times m_2$ torus with $m_1 \ge m_2 \ge 9$. There are $k$ types, $T_1,T_2\ldots, T_k$. Recall that $n_j=|T_j|$ for $1 \le j \le k$, and we assume that $n_1 \leq n_2 \leq \ldots \leq n_k$. For simplicity, we also assume that $n_k\ge 8$.
We start by creating a sequence $L$ of agents of length $n$ as follows: Starting with $j=1$, alternate an agent from $T_j$ and one from $T_{j+1}$ until the agents in $T_j$ are finished; next alternate agents from $T_{j+2}$ and $T_{j+1}$, and so on (for example: $T_1,T_2,T_1,T_2,T_3,T_2,T_3,T_2,T_3,T_4,\ldots$), without using any agents of type $T_k$. Once the sequence is created, some agents of type $T_{k-1}$ might be remaining: we denote this set of agents as $X$. The sequence $L$ has the property that every agent is adjacent to agents of a different type; thus if the sequence is placed on the torus in such a way that every agent is a neighbor on the torus to at least one of its adjacent  agents in the sequence,  then every agent has utility at least 1. 

It remains to describe how to complete the sequence of agents with agents in $X$, as well as the agents of type $T_k$, and describe the assignment of agents to locations on the torus;  this depends on the number of agents $n$ as described below. The assignments we describe will satisfy either property \hyperref[itm:prop1]{\pOne} or \hyperref[itm:prop0]{\po}, which were also used in Theorem \ref{thm:equilibrium:trees}.

\medskip
\noindent
{\bf Case 1: $n \leq 5(m_2-2)$.}
Here, because the number of agents can fit in five rows of the torus, there are a lot of  empty nodes.  To complete the sequence $L$, we add agents in $X$ to the sequence $L$, and then  add agents of type $T_k$ between consecutive agents, starting from the end of the sequence. Since $n_k \geq n_{k-1}$, the neighbors of every agent in the sequence are of a type different from it,  and the last agent in $L$ is of type $T_k$.

We now partition the sequence $L$ as follows:  Let $i = |L| \mbox{ div } (m_2-2)$. We divide $L$ into a  subsequence $\hat{L}$ of length $i m_2$, and let $L'$ be the remaining subsequence of length $|L| \bmod m_2$. We now describe the placement of agents in $\hat{L}$, $L'$, and any remaining agents of type $T_k$ that were not in $L$.

\begin{itemize}
    \item Place the agents in $\hat{L}$ in the same $m_2-2$ columns in consecutive rows say $[r\ldots (r+i-1)]$, and place the agents in $L'$  in row $r + i + 2$. Since there are at least 9 rows, if $L'$ is empty, there are at least 4 empty rows between the last row of agents in $L'$ and the first row. If $L'$ is not empty, since agents in $\hat{L}$ occupy at most 4 rows, and there are at least 9 rows, there are at least two empty rows between the agents in $\hat{L}$ and agents in $L'$. 

\item If $n_k \leq  n_{k-1} + n_{k-2} +\ldots + n_1$, we are done, as the property \hyperref[itm:prop0]{\po} is satisfied. Otherwise, there are some remaining agents of type $T_k$ that were not in the sequence $L$. Place them one by one in empty nodes adjacent to agents in $L$ that are of a type different from $T_k$, where they can get a utility of 1. Start by placing them in the $i$ rows in which agents from $\hat{L}$ were placed, ensuring that when the first and last agent from $\hat{L}$ in a row are of the {\em same} type (that is different from $T_k$), we place agents of type $T_k$ as neighbors of both of these agents. (This is necessary, as otherwise, we would create an empty node which could offer utility 2.) If this is not possible, then it can only be because we have run out of agents of type $T_k$ to place, or we have only one such agent. In the latter case, we simply place that last agent of type $T_k$ in a row adjacent to the rows in which agents of type $\hat{L}$ are placed.

If we still have agents of type $T_k$ left to place, and there are no empty nodes adjacent to agents of other types,  place the remaining agents in any available empty nodes on the torus; they will have utility 0, but since all empty nodes are adjacent only to agents of type $T_k$, they are not  motivated to jump. So, \hyperref[itm:prop1]{\pOne} or \hyperref[itm:prop0]{\po} is satisfied.
\end{itemize}

\medskip
\noindent
{\bf Case 2: $5(m_2-2) < n \leq m_1m_2 - 2m_2$ }
We create a quasi-rectangular area $A$  of empty nodes, with $\lfloor(m_1m_2-n)/m_2\rfloor$ complete rows and a row that is not complete with $(m_1m_2-n)\mod m_2$ consecutive nodes. No agent will be assigned to the nodes of $A$. We call the nodes in $A$ that are adjacent to at least 2 nodes outside $A$ as \emph{corners} (see Figure~\ref{fig:equilibrium:torus}).  Notice that in $A$ there can be at most 2 \emph{corners} and they will be adjacent to at most 4 agents. 

\medskip
\noindent 
{\bf Case 2~(a):} If $n_k \ge 2m_2$, then first use $2m_2$ agents of type $T_k$ to enclose $A$. Next we describe how to complete the sequence $L$. Append to the sequence $L$ an alternating sequence of agents of type $T_{k}$ and $T_{k-1}$ until one of the types is exhausted. Denote the remaining agents by the set $Y$; these agents are either all of type $T_{k-1}$ or all of type $T_k$. 


\begin{itemize}
\item
If $Y$ has agents of type $T_{k-1}$, then we will first assign these agents. Notice that in this case $|Y| \leq 2m_2$, since $n_{k-1} \leq n_{k}$. We will assign all these agents \emph{consecutively} adjacent to the agents of type $T_k$ already assigned to the torus, and bordering the empty nodes. Thus all these agents in $Y$ will have a utility of at least 1. Then we assign the agents of $L$ to the available empty nodes of the torus in a snake-like pattern, starting with a row that has the least number of available empty nodes. Notice that property \hyperref[itm:prop1]{\pOne} is satisfied.
\item
If instead $Y$ has agents of type $T_k$, we first assign the agents of $L$ to the available empty nodes in a snake-like pattern, and then place all the agents of type $T_k$ that are not in $L$ in the remaining unassigned nodes. Here, property \hyperref[itm:prop0]{\po} is satisfied.
\end{itemize}

\medskip
\noindent 
{\bf Case 2~(b):} Otherwise, $n_k < 2m_2$, and we cannot ensure that empty nodes are adjacent to a single type $T_k$. So, we first place 4 agents of type $T_k$ adjacent to the two corners (recall that $n_k \geq 8$).  Then we put other agents of type $T_k$ consecutively on the border. Next we place agents in the set $X$ (of type $T_{k-1}$) adjacent to the agents of type $T_k$ that were just placed on the torus, ensuring that they all obtain utility 1. Finally, we place the sequence $L$ on the torus in snake-like order. See Figure \ref{fig:equilibrium:torus} for an example assignment. Notice that property \hyperref[itm:prop1]{\pOne} is satisfied in this case.


\medskip
\noindent
{\bf Case 3: $n > m_1m_2-2m_2$.} In this case, when the number of empty nodes are relatively lower, create a quasi-rectangular area $A$ that is located within 2 columns and the length of $A$ in the two columns differs by at most one. 

If agents of type $T_k$ can enclose the area $A$, then use them to enclose $A$. After that, follow the approach used in Case 2~(a) to complete the sequence $L$, and assign the remaining agents.

Otherwise, we first place agents of type $T_k$ next to the 4 corners; this is possible, since at most 8 agents are required, and $n_k \geq 8$. Next, follow  the approach used in Case 2~(b) to assign the remaining agents.

\noindent
In every case, we showed that there is an equilibrium assignment, completing the proof. 
\end{proof}

\begin{figure}[t]
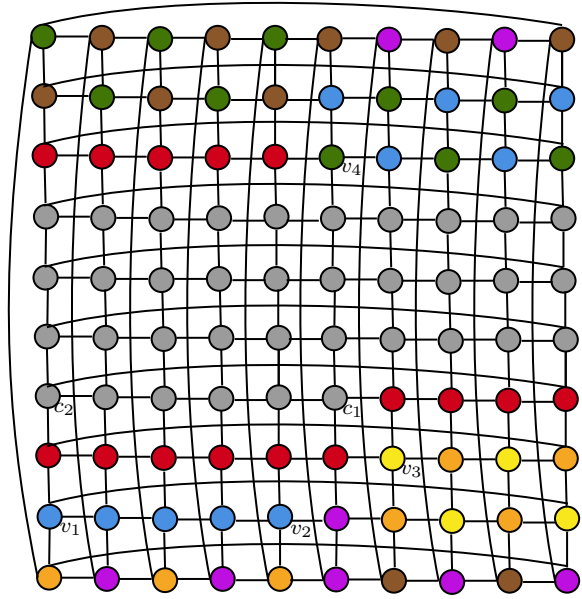

\centering

\tikzset{every picture/.style={line width=0.75pt}} 



\caption{An example of an equilibrium assignment for {\bf Case 2~(b)}. First the corners, $c_1$ and $c_2$, are covered with red agents, and the boundary is partially covered with the remaining red agents. Some blue agents are remaining after creating $L$, so they are placed adjacent to the red agents on the boundary, from nodes $v_1$ to $v_2$. Then, the agents of $L$ are then placed in a snake-like pattern starting at $v_3$ and ending at $v_4$.}
\label{fig:equilibrium:torus}
\end{figure}

\section{Quality of Equilibria}
In this section we focus on the diversity quality of equilibria as measured by the social welfare (the total utility of all agents) and the number of colorful edges between agents of different types. In particular, we show tight bounds on the price of anarchy with respect to both of these objective functions, and a lower bound on the price of stability. 

\subsection{Price of Anarchy: Social Welfare}

We start by showing an upper bound on the price of anarchy which holds for any class of graphs and is a function of the number of agents $n$ and the maximum cardinality over all types. 

\begin{theorem} \label{thm:poa:sw:general}
The price of anarchy with respect to the social welfare is exactly $\frac{n(k-1)}{n-\max_T n_T +1}$.
\end{theorem}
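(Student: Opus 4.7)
The plan is to prove $\PoA_{\SW} \le \frac{n(k-1)}{n - M + 1}$ with $M = \max_T n_T$, and then exhibit an instance matching this ratio.

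For the upper bound, the numerator is immediate: every agent has at most $k-1$ types different from its own in its neighborhood, so $\OPT \le n(k-1)$. The real work is to show that any equilibrium $\bv$ satisfies $\SW(\bv) \ge n - M + 1$. Let $A = \{i : u_i(\bv) = 0\}$. I would argue in two steps. \emph{Step 1:} all agents of $A$ share a common type $T^0$. Indeed, if $i \in A$ has type $T$, then $i$'s unwillingness to jump to any empty node $u$ forces $T_u(\bv) \subseteq \{T\}$: the neighborhood of $u$ either is empty or contains only type-$T$ agents. Two members of $A$ of distinct types would force every empty node to be isolated, contradicting connectedness of $G$ (using $|V| > n \ge 2$). \emph{Step 2:} the extremal case $|A| = n_{T^0} = M$ is impossible. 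If it held, every $T^0$-agent would have only $T^0$-neighbors and, by Step 1, so would every empty node; the set $S$ of nodes that are empty or $T^0$-occupied would then have no outgoing edges, yet $V \setminus S \ne \emptyset$ (since $k \ge 2$ implies a non-$T^0$-agent exists) and $S \ne \emptyset$ (since $|V| > n$ guarantees an empty node), contradicting connectedness. Combining the two steps, $|A| \le M - 1$, and hence $\SW(\bv) \ge n - |A| \ge n - M + 1$.

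For the matching lower bound, a star-based construction suffices. For $k = 2$, $n_1 = 1$, $n_2 = n-1$, the star $K_{1,n}$ (with one empty leaf) admits an optimum of $\SW = n$ obtained by placing the type-1 agent at the center and type-2 agents on the leaves, and a segregated equilibrium of $\SW = 2 = n - M + 1$ obtained by placing the type-1 agent on a leaf and the type-2 agents at the center and the remaining leaves. For general $k$ and $M$, a similar construction combines a small dense core (where each non-$T^*$ agent can be placed adjacent to all other types, so that $\OPT = n(k-1)$) with peripheral positions whose only access to non-$T^*$ agents is through a single gateway, allowing a segregated equilibrium that confines $M - 1$ type-$T^*$ agents to the periphery with only same-type or empty neighbors, leaving exactly one type-$T^*$ agent adjacent to the non-$T^*$ cluster for utility $1$.

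The main obstacle is the $+1$ in the denominator: a naive count based solely on Step 1 only delivers $\SW(\bv) \ge n - M$, and the extra unit is earned precisely by Step 2's connectedness argument ruling out the totally segregated extremal configuration. A secondary technical point is calibrating the core/periphery construction so that the diverse optimum and the segregated equilibrium coexist for arbitrary $(n, k, M)$.
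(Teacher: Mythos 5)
Your upper bound is correct, and it is essentially the paper's argument in different packaging: the paper fixes an empty node $v$ adjacent to an agent of some type $R$, observes that every agent of type other than $R$ must have utility at least one (or it would jump to $v$), and uses connectedness a second time to find one type-$R$ agent with a foreign neighbor, giving $\SW(\bv)\ge n-n_R+1\ge n-\max_T n_T+1$. Your Step~1/Step~2 decomposition via the set $A$ of zero-utility agents uses exactly the same two ingredients (the jump condition at empty nodes, plus connectedness between the empty-or-segregated part and the rest) and produces the same count, with the ``$+1$'' coming from the same place; both versions are fine.

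The genuine gap is in the lower bound for general parameters. Your star example is correct but only covers $k=2$ with $\max_T n_T=n-1$, while the theorem claims the exact value as a function of $(n,k,\max_T n_T)$ (and the corollary built on it needs, e.g., symmetric types with $n_T=n/k$ and $k\ge 3$). The sketched core/periphery construction does not yet deliver the two required equalities simultaneously. First, $\OPT=n(k-1)$ forces \emph{every} agent, including the $\max_T n_T$ agents of the largest type $T^*$, to see all $k-1$ other types, so the dense region must be able to host all $n$ agents (the paper simply uses a clique $K_n$); guaranteeing full adjacency only for the non-$T^*$ agents, as your parenthesis states, is not enough. Second, the bad equilibrium must have welfare exactly $n-\max_T n_T+1$, i.e.\ each of the $n-\max_T n_T$ non-$T^*$ agents, spread over $k-1\ge 2$ different types, must have utility exactly one; if at equilibrium they are clustered inside a dense core they will generically see several types and the welfare overshoots, and if they sit elsewhere you must exhibit a sparse arrangement in which each sees exactly one foreign type, no empty node offers utility two to anyone, and the dense part stays empty without tempting any agent. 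This is precisely what the paper's gadget engineers: the clique $K_n$ is empty at equilibrium and touches the occupied part at a single node adjacent to one $T^*$ agent, while the non-$T^*$ agents occupy a chain of lines $\ell_{T_2},\dots,\ell_{T_k}$ in which each agent's only foreign neighbor is the last agent of the preceding line. Without an explicit construction of this kind, the ``calibration'' you defer to is not a secondary technical point but the actual content of the tightness proof.
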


\begin{proof}
For the upper bound, consider an arbitrary equilibrium assignment $\bv$ and let $v$ be an empty node that is adjacent to an agent of some type $R$. All agents of type different than $R$ must have utility at least $1$ to not have incentive to jump to $v$. In addition, at least one agent of type $R$ must be adjacent to an agent of a different type; otherwise, since the graph is connected, there would be an empty node (possibly $v$) that is adjacent to an agent of type different than $R$ where agents of type $R$ would have incentive to jump. Therefore, 
\begin{align*}
    \SW(\bv) \geq 1 + \sum_{T \neq R} n_T  = n-n_R+1 \geq n-\max_T n_T +1.
\end{align*}
Since the optimal social welfare is at most $n(k-1)$, the price of anarchy is at most $\frac{n(k-1)}{n-\max_T n_T +1}$. 

For the lower bound, consider a game with $k \geq 2$ types $\{T_1, \ldots, T_k\}$. Let $R:=T_1$ be the type with the most agents.
The graph consists of the following components:
\begin{itemize}
\item A clique $K_n$ of size $n$;
\item A line $\ell_T$ of size $n_{T}$ for each type $T$.
\end{itemize}
The components are connected as follows: 
A node of $K_n$ is connected to a node of $\ell_R$. 
For each $t \in \{2,\ldots,k-1\}$, all but the last node of $\ell_{T_t}$ are connected to the last node of $\ell_{T_{t-1}}$.
All the nodes of $\ell_{T_k}$ are connected to the last node of $\ell_{T_{k-1}}$; See Figure~\ref{fig:poa:sw:general:lower} for an example for $k=3$. 

The optimal assignment is for the agents to occupy all the nodes of the $K_n$ component, which leads to a utility of $k-1$ for each of them, and a social welfare of $n(k-1)$. A different equilibrium assignment is the following: For each type $T$, the agents of type $T$ occupy the nodes of line $\ell_T$. Since all agents of type $T \neq R$ have utility $1$, none of them wants to jump to any of the empty nodes; in particular, there is only one node that is adjacent to an agent occupied by a single agent of type $R$. The social welfare of this equilibrium is $n-n_R+1$ (note that the agent of type $R$ that occupies the last node of $\ell_R$ also has utility $1$), and thus the price of anarchy is at least
$\frac{n(k-1)}{n-n_R +1}$.
\end{proof}

\begin{figure}[t]
\centering
\tikzset{every picture/.style={line width=0.75pt}} 

\begin{tikzpicture}[x=0.75pt,y=0.75pt,yscale=-1,xscale=1]

\draw   (75,77) -- (147,77) -- (147,149) -- (75,149) -- cycle ;
\draw    (110.5,139.5) -- (110.5,162) ;
\draw    (120.5,171) -- (133.5,171) ;
\draw    (153.5,171) -- (166.5,171) ;
\draw    (242.5,220) -- (275.5,242) ;
\draw    (242.5,220) -- (308.5,242) ;
\draw    (209.5,202) -- (176.5,180) ;
\draw  [fill={rgb, 255:red, 208; green, 2; blue, 27 }  ,fill opacity=1 ] (100.5,171) .. controls (100.5,166.03) and (104.98,162) .. (110.5,162) .. controls (116.02,162) and (120.5,166.03) .. (120.5,171) .. controls (120.5,175.97) and (116.02,180) .. (110.5,180) .. controls (104.98,180) and (100.5,175.97) .. (100.5,171) -- cycle ;
\draw  [fill={rgb, 255:red, 208; green, 2; blue, 27 }  ,fill opacity=1 ] (133.5,171) .. controls (133.5,166.03) and (137.98,162) .. (143.5,162) .. controls (149.02,162) and (153.5,166.03) .. (153.5,171) .. controls (153.5,175.97) and (149.02,180) .. (143.5,180) .. controls (137.98,180) and (133.5,175.97) .. (133.5,171) -- cycle ;
\draw  [fill={rgb, 255:red, 208; green, 2; blue, 27 }  ,fill opacity=1 ] (166.5,171) .. controls (166.5,166.03) and (170.98,162) .. (176.5,162) .. controls (182.02,162) and (186.5,166.03) .. (186.5,171) .. controls (186.5,175.97) and (182.02,180) .. (176.5,180) .. controls (170.98,180) and (166.5,175.97) .. (166.5,171) -- cycle ;
\draw  [fill={rgb, 255:red, 155; green, 155; blue, 155 }  ,fill opacity=1 ] (100.5,130.5) .. controls (100.5,125.53) and (104.98,121.5) .. (110.5,121.5) .. controls (116.02,121.5) and (120.5,125.53) .. (120.5,130.5) .. controls (120.5,135.47) and (116.02,139.5) .. (110.5,139.5) .. controls (104.98,139.5) and (100.5,135.47) .. (100.5,130.5) -- cycle ;
\draw    (176.5,180) -- (176.5,202) ;
\draw    (186.5,211) -- (199.5,211) ;
\draw    (219.5,211) -- (232.5,211) ;
\draw  [fill={rgb, 255:red, 74; green, 144; blue, 226 }  ,fill opacity=1 ] (166.5,211) .. controls (166.5,206.03) and (170.98,202) .. (176.5,202) .. controls (182.02,202) and (186.5,206.03) .. (186.5,211) .. controls (186.5,215.97) and (182.02,220) .. (176.5,220) .. controls (170.98,220) and (166.5,215.97) .. (166.5,211) -- cycle ;
\draw  [fill={rgb, 255:red, 74; green, 144; blue, 226 }  ,fill opacity=1 ] (199.5,211) .. controls (199.5,206.03) and (203.98,202) .. (209.5,202) .. controls (215.02,202) and (219.5,206.03) .. (219.5,211) .. controls (219.5,215.97) and (215.02,220) .. (209.5,220) .. controls (203.98,220) and (199.5,215.97) .. (199.5,211) -- cycle ;
\draw  [fill={rgb, 255:red, 74; green, 144; blue, 226 }  ,fill opacity=1 ] (232.5,211) .. controls (232.5,206.03) and (236.98,202) .. (242.5,202) .. controls (248.02,202) and (252.5,206.03) .. (252.5,211) .. controls (252.5,215.97) and (248.02,220) .. (242.5,220) .. controls (236.98,220) and (232.5,215.97) .. (232.5,211) -- cycle ;
\draw    (242.5,219.5) -- (242.5,242) ;
\draw    (252.5,251) -- (265.5,251) ;
\draw    (285.5,251) -- (298.5,251) ;
\draw  [fill={rgb, 255:red, 65; green, 117; blue, 5 }  ,fill opacity=1 ] (232.5,251) .. controls (232.5,246.03) and (236.98,242) .. (242.5,242) .. controls (248.02,242) and (252.5,246.03) .. (252.5,251) .. controls (252.5,255.97) and (248.02,260) .. (242.5,260) .. controls (236.98,260) and (232.5,255.97) .. (232.5,251) -- cycle ;
\draw  [fill={rgb, 255:red, 65; green, 117; blue, 5 }  ,fill opacity=1 ] (265.5,251) .. controls (265.5,246.03) and (269.98,242) .. (275.5,242) .. controls (281.02,242) and (285.5,246.03) .. (285.5,251) .. controls (285.5,255.97) and (281.02,260) .. (275.5,260) .. controls (269.98,260) and (265.5,255.97) .. (265.5,251) -- cycle ;
\draw  [fill={rgb, 255:red, 65; green, 117; blue, 5 }  ,fill opacity=1 ] (298.5,251) .. controls (298.5,246.03) and (302.98,242) .. (308.5,242) .. controls (314.02,242) and (318.5,246.03) .. (318.5,251) .. controls (318.5,255.97) and (314.02,260) .. (308.5,260) .. controls (302.98,260) and (298.5,255.97) .. (298.5,251) -- cycle ;

\draw (99,89.4) node [anchor=north west][inner sep=0.75pt]    {$K_{n} \ $};
\draw (72,162.4) node [anchor=north west][inner sep=0.75pt]    {$\ell _{R}$};
\draw (139,202.4) node [anchor=north west][inner sep=0.75pt]    {$\ell _{B}$};
\draw (205,242.4) node [anchor=north west][inner sep=0.75pt]    {$\ell _{G}$};
\end{tikzpicture}
\caption{The graph of the game considered in the proof of the lower bound in Theorem~\ref{thm:poa:sw:general} for $k=3$ types $R, B, G$. The depicted assignment (according to which the agents occupy the nodes of the lines $\ell_R, \ell_B$ and $\ell_G$) is the equilibrium with minimum social welfare that leads to the desired price of anarchy lower bound.}
\label{fig:poa:sw:general:lower}
\end{figure}
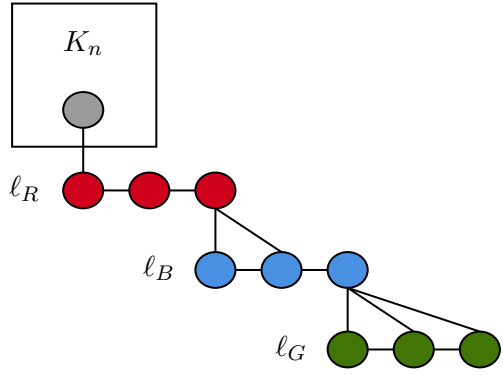

Using Theorem~\ref{thm:poa:sw:general}, we can derive many asymptotically tight bounds on the price of anarchy with respect to the social welfare, for example by considering games in which the types might be asymmetric or symmetric.  

\begin{corollary}
The price of anarchy with respect to the social welfare is $\Theta(n)$ in general, and $\Theta(k)$ for symmetric types. 
\end{corollary}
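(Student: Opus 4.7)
The proof of the corollary is a direct computational consequence of the exact formula
$\PoA_{\SW} = \frac{n(k-1)}{n-\max_T n_T +1}$
established in Theorem~\ref{thm:poa:sw:general}. The plan is to carry out the two asymptotic estimates by plugging in the extremal values of $\max_T n_T$ allowed in each setting: the general case constrained only by $n_T \geq 1$ for every type, and the symmetric case forcing $n_T = n/k$.

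For the general case, I will first derive the upper bound. Since each of the $k$ types has at least one agent, it holds that $\max_T n_T \leq n-(k-1)$, so that $n-\max_T n_T +1 \geq k$. Substituting into Theorem~\ref{thm:poa:sw:general} yields
$\PoA_{\SW} \leq \frac{n(k-1)}{k} < n$.
For the matching lower bound, I would invoke the construction in the proof of Theorem~\ref{thm:poa:sw:general} with $k=2$ and type cardinalities $n_R = n-1$, $n_B = 1$. Then the formula gives $\PoA_{\SW} \geq \frac{n\cdot 1}{n-(n-1)+1} = n/2$, so $\PoA_{\SW} = \Theta(n)$.

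For the symmetric case, setting $\max_T n_T = n/k$ in the exact expression gives
\begin{align*}
    \PoA_{\SW} = \frac{n(k-1)}{n - n/k + 1} = \frac{kn(k-1)}{n(k-1)+k}.
\end{align*}
This is clearly strictly less than $k$, giving the upper bound. For the lower bound, since $n \geq k \geq 2$ (each type must be nonempty so $n\geq k$), we have $n(k-1) \geq k(k-1) \geq k$, hence $n(k-1)+k \leq 2n(k-1)$, which yields $\PoA_{\SW} \geq k/2$. The symmetric construction from Theorem~\ref{thm:poa:sw:general} (with all $n_T$ equal) realizes this bound, so $\PoA_{\SW} = \Theta(k)$ for symmetric types.

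There is no real obstacle here: the corollary is essentially a rewriting of Theorem~\ref{thm:poa:sw:general} under two natural parameter regimes, and both the upper and lower bounds are already implicit in the statement and proof of that theorem. The only mild subtlety worth flagging is verifying that the lower-bound instances satisfy the basic nonemptiness assumptions ($n_T \geq 1$ for all $T$, and $n \geq k$), which they do in both cases.
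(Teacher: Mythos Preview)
Your proposal is correct and follows essentially the same approach as the paper: both derive the upper bounds by substituting the extremal values of $\max_T n_T$ into the exact formula of Theorem~\ref{thm:poa:sw:general}, and both obtain the lower bounds by instantiating that theorem's construction with appropriate type sizes. The only cosmetic difference is that for the general lower bound the paper keeps $k$ arbitrary with $n_R = n-k+1$ and $n_T = 1$ for $T\neq R$, whereas you specialize to $k=2$; either choice yields a $\Theta(n)$ bound.
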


\begin{proof}
For asymmetric types, since each of the $k$ types has cardinality at least $1$, $\max_T n_T \leq n-k+1$, and thus the price of anarchy is at most 
$$\frac{n(k-1)}{n-\max_T n_T+1} \leq n \cdot \frac{k-1}{k} \leq n.$$
By setting the $n_R=n-k+1$ and $n_T=1$ for $T\neq R$ in the lower bound construction of Theorem~\ref{thm:poa:sw:general}, we get an asymptotically tight lower bound that is linear in the number of agents.

For symmetric types, $n_T = n/k$ for each type $T$, and thus the price of anarchy is at most
$$\frac{n(k-1)}{n(1-1/k) +1} = k\cdot \frac{n(k-1)}{n(k-1) + k} \leq k.$$
By setting $n_T = n/k$ for each type $T$ in the lower bound construction of Theorem~\ref{thm:poa:sw:general}, we get an asymptotically tight lower bound that is linear in the number of types.
\end{proof}

The previous statements show that the price of anarchy can be high in general graphs, even when the types are symmetric.
However, in simple graphs, such as lines and cycles, the price of anarchy is much smaller. 

\begin{theorem}\label{thm:poa:sw:cycle}
When the types are symmetric and the graph is of degree at most $2$, the price of anarchy with respect to the social welfare is $4/3$ for $k=2$ and $\frac{2k}{k-1}$ for $k \geq 3$.
\end{theorem}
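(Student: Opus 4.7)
The proof splits on whether $k=2$ or $k \ge 3$, because the maximum per-agent utility $\min(\deg,k-1)$ equals $1$ in the first case and $2$ in the second. Correspondingly $\OPT \le n$ and $\OPT \le 2n$, and both bounds are tight up to additive constants via alternating-pattern assignments on a long enough path or cycle.

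\textbf{Upper bound.}  For $k \ge 3$, I would directly invoke the lower bound $\SW \ge n - \max_T n_T + 1$ established in the proof of Theorem~\ref{thm:poa:sw:general}; under symmetry this becomes $\SW \ge n(k-1)/k + 1$. Dividing $\OPT \le 2n$ by this yields the ratio $2nk/(n(k-1)+k)$, whose supremum over $n$ is exactly $2k/(k-1)$. For $k=2$, the general bound is too weak, so I would refine it using the degree-$2$ structure. After disposing of the easy cases in which either some empty node has both a red and a blue neighbor (then every agent must have utility $1$ and $\SW = n$) or no empty is adjacent to any agent (impossible on a connected graph with $|V|>n$), we may assume WLOG that every empty node with an agent neighbor is adjacent only to red agents. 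Then every blue agent must already be adjacent to a red (else it would jump to such an empty for utility $1$), contributing $n/2$ to $\SW$. The key refinement is a double-counting step: since each of the $n/2$ blues contributes at least one red-blue edge and each red has degree at most $2$, at least $n/4$ reds are adjacent to some blue and have utility $1$. Hence $\SW \ge 3n/4$ and $\PoA \le 4/3$.

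\textbf{Lower bound.}  I would construct matching equilibria on long paths. For $k=2$, take a pure-red cluster of $n/4-1$ reds together with a mixed cluster of the form $R\,BB\,R\,BB\,\cdots\,R\,BB\,R$ (containing $n/4+1$ reds and $n/2$ blues), separated and flanked by empty nodes placed so every empty is adjacent only to reds. A direct check shows that all blues and all mixed-cluster reds have utility $1$, the pure-cluster reds have utility $0$, no jump is profitable, and $\SW = 3n/4 + O(1)$, driving the ratio to $4/3$. For $k \ge 3$ I would use the analogous construction: $n/k$ reds in a block with empties only on their outer side, followed on the path by a sequence of length-$2$ same-type blocks over $T_2,T_3,\ldots,T_k$ with consecutive blocks of different types, together with a singleton at one or both boundaries of the non-red region so that the non-red agent at the path endpoint has a different-type neighbor and per-type counts remain $n/k$. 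Every non-red has utility $1$ (with at most two $O(1)$ exceptions), one red has utility $1$, the remaining reds have utility $0$, and no jump is profitable, giving $\SW = n(k-1)/k + O(1)$ against $\OPT = 2n - O(1)$.

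\textbf{Main obstacle.}  The most non-routine ingredients are (i) the $k=2$ upper bound, where the edge double-counting must be combined with a clean case analysis of empty-node type counts to improve from the general $\SW \ge n/2+1$ to $\SW \ge 3n/4$, and (ii) verifying that the lower-bound construction for $k \ge 3$ is a legitimate equilibrium with exactly $n/k$ agents of each non-red type. The latter requires a parity case analysis on $n(k-1)/k$: depending on whether it is odd or even, the non-red region must be taken as $T_a T_a T_b T_b \cdots T_c$ (one singleton at the far end) or $T_a T_b T_b \cdots T_d T_d T_e$ (singletons at both boundaries), and the precise assignment of types to the length-$2$ blocks must balance all $k-1$ non-red cardinalities while keeping consecutive blocks of different types.
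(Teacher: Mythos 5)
Your proposal is correct and follows essentially the same route as the paper: the upper bounds use the same argument (an empty node adjacent to a red agent forces all non-red agents to utility $1$, refined for $k=2$ by the same degree-$2$ double-counting giving $\SW \geq 3n/4$), and your lower-bound assignments (repeated $R\,BB$ triplets for $k=2$, consecutive same-type pairs for $k\geq 3$, leftover reds segregated with every empty node adjacent only to reds) are the paper's constructions up to cosmetic bookkeeping — the paper places everything on a line or cycle with a single empty node, while you use extra empties and explicit singleton/parity adjustments.
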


\begin{proof}
We start with the case of $k=2$ types, red and blue. For the upper bound, consider an arbitrary equilibrium assignment $\bv$ and let $v$ be an empty node that is adjacent to a red agent. Then, all blue agents must have utility $1$ to not have incentive to jump to $v$. Since each node has degree at most $2$, this further means that for every two blue agents, there is at least one red agent with utility $1$ as well. Hence, the social welfare of $\bv$ is at least $n/2 + n/4 = 3n/4$. Since the maximum possible social welfare is $n$, the price of anarchy is at most $4/3$. 

For the lower bound, consider a cycle or a line with $n+1$ nodes. Clearly, the assignment according to which agents of alternating types occupy adjacent nodes is the one with optimal social welfare of $n$. A different equilibrium assignment has the following pattern: There are $n/4$ triplets consisting of a red and two blue agents (in this order) that occupy consecutive nodes. The remaining red agents occupy the nodes after the ones occupied by all the agents in these triplets. See Figure~\ref{fig:poa:sw:cycle:2-types} for an example of this assignment with a few agents. 
Hence, there are exactly $3n/4 + 1$ agents with utility $1$, leading to a price of anarchy lower bound of approximately $4/3$ for a sufficiently large $n$. 

We now consider the case $k \geq 3$. For the upper bound, if at equilibrium there is an empty node $v$ that is adjacent to a red agent, then all non-red agents must have utility $1$. Hence, the social welfare at equilibrium is at least $n - n/k = n(k-1)/k$. Since the nodes of the graph have degree (at most) $2$, the maximum social welfare is $2n$, leading to a price of anarchy of at most $2k/(k-1)$. 

For the lower bound, again consider a cycle or a line with $n+1$ nodes. The assignment according to which agents of alternating types (in a round-robin manner) occupy adjacent nodes is the one with optimal social welfare of $2n$. A different equilibrium assignment has the following pattern: Consider any ordering of the non-red types. The first node is occupied by a red agent. Then, two agents of the same type are followed by two agents of the type that comes next according to the ordering repeatedly. The remaining red agents occupy the nodes after all the non-red agents. See Figure~\ref{fig:poa:sw:cycle:ktypes} for an example of this assignment with a few agents. Hence, there are $n-n/k+2 = n(k-1)/k+2$ agents with utility $1$, leading to a price of anarchy lower bound that approaches $2k/(k-1)$ for a sufficiently large $n$. 
\end{proof}

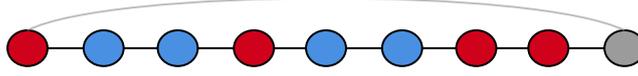
\begin{figure}[t]
\centering
\tikzset{every picture/.style={line width=0.75pt}} 
\begin{tikzpicture}[x=0.75pt,y=0.75pt,yscale=-1,xscale=1]

\draw  [fill={rgb, 255:red, 208; green, 2; blue, 27 }  ,fill opacity=1 ] (126,94) .. controls (126,89.03) and (130.48,85) .. (136,85) .. controls (141.52,85) and (146,89.03) .. (146,94) .. controls (146,98.97) and (141.52,103) .. (136,103) .. controls (130.48,103) and (126,98.97) .. (126,94) -- cycle ;
\draw  [fill={rgb, 255:red, 74; green, 144; blue, 226 }  ,fill opacity=1 ] (164,94) .. controls (164,89.03) and (168.48,85) .. (174,85) .. controls (179.52,85) and (184,89.03) .. (184,94) .. controls (184,98.97) and (179.52,103) .. (174,103) .. controls (168.48,103) and (164,98.97) .. (164,94) -- cycle ;
\draw    (146,94) -- (164,94) ;
\draw  [fill={rgb, 255:red, 74; green, 144; blue, 226 }  ,fill opacity=1 ] (201,94) .. controls (201,89.03) and (205.48,85) .. (211,85) .. controls (216.52,85) and (221,89.03) .. (221,94) .. controls (221,98.97) and (216.52,103) .. (211,103) .. controls (205.48,103) and (201,98.97) .. (201,94) -- cycle ;
\draw  [fill={rgb, 255:red, 208; green, 2; blue, 27 }  ,fill opacity=1 ] (239,94) .. controls (239,89.03) and (243.48,85) .. (249,85) .. controls (254.52,85) and (259,89.03) .. (259,94) .. controls (259,98.97) and (254.52,103) .. (249,103) .. controls (243.48,103) and (239,98.97) .. (239,94) -- cycle ;
\draw    (221,94) -- (239,94) ;
\draw    (184,94) -- (201,94) ;
\draw  [fill={rgb, 255:red, 74; green, 144; blue, 226 }  ,fill opacity=1 ] (275,94) .. controls (275,89.03) and (279.48,85) .. (285,85) .. controls (290.52,85) and (295,89.03) .. (295,94) .. controls (295,98.97) and (290.52,103) .. (285,103) .. controls (279.48,103) and (275,98.97) .. (275,94) -- cycle ;
\draw  [fill={rgb, 255:red, 74; green, 144; blue, 226 }  ,fill opacity=1 ] (313,94) .. controls (313,89.03) and (317.48,85) .. (323,85) .. controls (328.52,85) and (333,89.03) .. (333,94) .. controls (333,98.97) and (328.52,103) .. (323,103) .. controls (317.48,103) and (313,98.97) .. (313,94) -- cycle ;
\draw    (295,94) -- (313,94) ;
\draw    (259,94) -- (275,94) ;
\draw  [fill={rgb, 255:red, 208; green, 2; blue, 27 }  ,fill opacity=1 ] (350,94) .. controls (350,89.03) and (354.48,85) .. (360,85) .. controls (365.52,85) and (370,89.03) .. (370,94) .. controls (370,98.97) and (365.52,103) .. (360,103) .. controls (354.48,103) and (350,98.97) .. (350,94) -- cycle ;
\draw    (333,94) -- (350,94) ;
\draw  [fill={rgb, 255:red, 208; green, 2; blue, 27 }  ,fill opacity=1 ] (386,94) .. controls (386,89.03) and (390.48,85) .. (396,85) .. controls (401.52,85) and (406,89.03) .. (406,94) .. controls (406,98.97) and (401.52,103) .. (396,103) .. controls (390.48,103) and (386,98.97) .. (386,94) -- cycle ;
\draw  [fill={rgb, 255:red, 155; green, 155; blue, 155 }  ,fill opacity=1 ] (424,94) .. controls (424,89.03) and (428.48,85) .. (434,85) .. controls (439.52,85) and (444,89.03) .. (444,94) .. controls (444,98.97) and (439.52,103) .. (434,103) .. controls (428.48,103) and (424,98.97) .. (424,94) -- cycle ;
\draw    (406,94) -- (424,94) ;
\draw    (370,94) -- (386,94) ;
\draw [color={rgb, 255:red, 155; green, 155; blue, 155 }  ,draw opacity=0.65 ]   (136,85) .. controls (177.33,64) and (388.33,63) .. (434,85) ;
\end{tikzpicture}
\caption{An example of the assignment that gives the price of anarchy lower bound of $4/3$ with respect to the social welfare for $k=2$ on lines and cycles.}
\label{fig:poa:sw:cycle:2-types}
\end{figure}

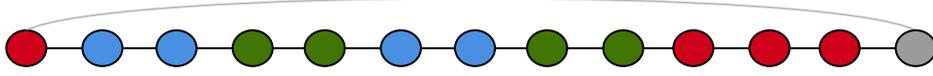
\begin{figure}[t]
    \centering

\tikzset{every picture/.style={line width=0.75pt}} 

\begin{tikzpicture}[x=0.75pt,y=0.75pt,yscale=-1,xscale=1]

\draw  [fill={rgb, 255:red, 208; green, 2; blue, 27 }  ,fill opacity=1 ] (55,94) .. controls (55,89.03) and (59.48,85) .. (65,85) .. controls (70.52,85) and (75,89.03) .. (75,94) .. controls (75,98.97) and (70.52,103) .. (65,103) .. controls (59.48,103) and (55,98.97) .. (55,94) -- cycle ;
\draw  [fill={rgb, 255:red, 74; green, 144; blue, 226 }  ,fill opacity=1 ] (93,94) .. controls (93,89.03) and (97.48,85) .. (103,85) .. controls (108.52,85) and (113,89.03) .. (113,94) .. controls (113,98.97) and (108.52,103) .. (103,103) .. controls (97.48,103) and (93,98.97) .. (93,94) -- cycle ;
\draw    (75,94) -- (93,94) ;
\draw  [fill={rgb, 255:red, 74; green, 144; blue, 226 }  ,fill opacity=1 ] (130,94) .. controls (130,89.03) and (134.48,85) .. (140,85) .. controls (145.52,85) and (150,89.03) .. (150,94) .. controls (150,98.97) and (145.52,103) .. (140,103) .. controls (134.48,103) and (130,98.97) .. (130,94) -- cycle ;
\draw  [fill={rgb, 255:red, 65; green, 117; blue, 5 }  ,fill opacity=1 ] (168,94) .. controls (168,89.03) and (172.48,85) .. (178,85) .. controls (183.52,85) and (188,89.03) .. (188,94) .. controls (188,98.97) and (183.52,103) .. (178,103) .. controls (172.48,103) and (168,98.97) .. (168,94) -- cycle ;
\draw    (150,94) -- (168,94) ;
\draw    (113,94) -- (130,94) ;
\draw  [fill={rgb, 255:red, 65; green, 117; blue, 5 }  ,fill opacity=1 ] (204,94) .. controls (204,89.03) and (208.48,85) .. (214,85) .. controls (219.52,85) and (224,89.03) .. (224,94) .. controls (224,98.97) and (219.52,103) .. (214,103) .. controls (208.48,103) and (204,98.97) .. (204,94) -- cycle ;
\draw  [fill={rgb, 255:red, 74; green, 144; blue, 226 }  ,fill opacity=1 ] (242,94) .. controls (242,89.03) and (246.48,85) .. (252,85) .. controls (257.52,85) and (262,89.03) .. (262,94) .. controls (262,98.97) and (257.52,103) .. (252,103) .. controls (246.48,103) and (242,98.97) .. (242,94) -- cycle ;
\draw    (224,94) -- (242,94) ;
\draw    (188,94) -- (204,94) ;
\draw  [fill={rgb, 255:red, 74; green, 144; blue, 226 }  ,fill opacity=1 ] (279,94) .. controls (279,89.03) and (283.48,85) .. (289,85) .. controls (294.52,85) and (299,89.03) .. (299,94) .. controls (299,98.97) and (294.52,103) .. (289,103) .. controls (283.48,103) and (279,98.97) .. (279,94) -- cycle ;
\draw    (262,94) -- (279,94) ;
\draw  [fill={rgb, 255:red, 65; green, 117; blue, 5 }  ,fill opacity=1 ] (315,94) .. controls (315,89.03) and (319.48,85) .. (325,85) .. controls (330.52,85) and (335,89.03) .. (335,94) .. controls (335,98.97) and (330.52,103) .. (325,103) .. controls (319.48,103) and (315,98.97) .. (315,94) -- cycle ;
\draw  [fill={rgb, 255:red, 65; green, 117; blue, 5 }  ,fill opacity=1 ] (353,94) .. controls (353,89.03) and (357.48,85) .. (363,85) .. controls (368.52,85) and (373,89.03) .. (373,94) .. controls (373,98.97) and (368.52,103) .. (363,103) .. controls (357.48,103) and (353,98.97) .. (353,94) -- cycle ;
\draw    (335,94) -- (353,94) ;
\draw    (299,94) -- (315,94) ;
\draw [color={rgb, 255:red, 155; green, 155; blue, 155 }  ,draw opacity=0.65 ]   (65,85) .. controls (106.33,64) and (463.33,63) .. (509,85) ;
\draw  [fill={rgb, 255:red, 208; green, 2; blue, 27 }  ,fill opacity=1 ] (388,94) .. controls (388,89.03) and (392.48,85) .. (398,85) .. controls (403.52,85) and (408,89.03) .. (408,94) .. controls (408,98.97) and (403.52,103) .. (398,103) .. controls (392.48,103) and (388,98.97) .. (388,94) -- cycle ;
\draw  [fill={rgb, 255:red, 208; green, 2; blue, 27 }  ,fill opacity=1 ] (426,94) .. controls (426,89.03) and (430.48,85) .. (436,85) .. controls (441.52,85) and (446,89.03) .. (446,94) .. controls (446,98.97) and (441.52,103) .. (436,103) .. controls (430.48,103) and (426,98.97) .. (426,94) -- cycle ;
\draw    (408,94) -- (426,94) ;
\draw    (373,94) -- (388,94) ;
\draw  [fill={rgb, 255:red, 208; green, 2; blue, 27 }  ,fill opacity=1 ] (461,94) .. controls (461,89.03) and (465.48,85) .. (471,85) .. controls (476.52,85) and (481,89.03) .. (481,94) .. controls (481,98.97) and (476.52,103) .. (471,103) .. controls (465.48,103) and (461,98.97) .. (461,94) -- cycle ;
\draw  [fill={rgb, 255:red, 155; green, 155; blue, 155 }  ,fill opacity=1 ] (499,94) .. controls (499,89.03) and (503.48,85) .. (509,85) .. controls (514.52,85) and (519,89.03) .. (519,94) .. controls (519,98.97) and (514.52,103) .. (509,103) .. controls (503.48,103) and (499,98.97) .. (499,94) -- cycle ;
\draw    (481,94) -- (499,94) ;
\draw    (446,94) -- (461,94) ;
\end{tikzpicture}

    \caption{An example of the assignment that gives the price of anarchy lower bound of $2k/(k-1)$ with respect to the social welfare for $k\geq 3$ on lines and cycles.}
    \label{fig:poa:sw:cycle:ktypes}
\end{figure}

\subsection{Price of Anarchy: Colorful Edges}
We now turn our attention to the second objective, the number of colorful edges. We first show a lower bound on the number of colorful edges achieved in any equilibrium; this lemma will be used in several proofs in the following for showing upper bounds on the price of anarchy. 

\begin{lemma} \label{lem:poa:colorful:equilibrium}
The number of colorful edges in any equilibrium is at least $(n-\max_T n_T)/2$.
\end{lemma}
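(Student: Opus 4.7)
The plan is to deduce the bound on $\CE(\bv)$ from the social-welfare lower bound that was already established inside the proof of Theorem~\ref{thm:poa:sw:general}, using the fact that the variety-seeking utility is dominated (up to a factor of two when summed) by the number of colorful edges.

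First, I would record the elementary comparison between per-agent utility and incident colorful edges. For any agent $i$, the utility $u_i(\bv)$ is the number of distinct types different from $t(i)$ that appear in $N_i(\bv)$; picking one neighbor for each such type yields an injection from these types to the colorful edges incident to $v_i$. Hence
\begin{equation*}
u_i(\bv) \;\le\; \bigl|\{\,j \in N_i(\bv) : t(j)\neq t(i)\,\}\bigr|.
\end{equation*}
Summing over all agents and noting that each colorful edge is counted exactly twice (once from each endpoint) gives $\SW(\bv) \le 2\,\CE(\bv)$.

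Next I would reuse the equilibrium argument from the upper-bound portion of Theorem~\ref{thm:poa:sw:general}. Since $|V|>n$ and $G$ is connected with at least one agent, some empty node $v$ is adjacent to an agent, of some type $R$. At equilibrium no agent of a type different from $R$ has utility $0$ (otherwise such an agent could jump to $v$ and increase its utility to at least $1$), and by connectedness at least one agent of type $R$ has utility at least $1$. This yields $\SW(\bv) \ge (n-n_R) + 1 \ge n - \max_T n_T + 1$.

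Combining the two inequalities gives
\begin{equation*}
\CE(\bv) \;\ge\; \frac{\SW(\bv)}{2} \;\ge\; \frac{n - \max_T n_T + 1}{2} \;\ge\; \frac{n - \max_T n_T}{2},
\end{equation*}
which is the desired bound. There is no real obstacle here; the only thing to be careful about is the soft point that some empty node must touch an occupied node, which is immediate from connectedness of $G$ together with the existence of both empty and occupied nodes.
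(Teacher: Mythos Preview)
Your proof is correct and uses essentially the same equilibrium argument as the paper: pick an empty node adjacent to some type $R$ and conclude that every agent of a different type has utility at least $1$. The only difference is that you route the counting through the inequality $\SW(\bv)\le 2\,\CE(\bv)$ and the social-welfare bound from Theorem~\ref{thm:poa:sw:general}, whereas the paper observes more directly that each non-$R$ agent is incident to at least one colorful edge and divides by two; both yield the same bound (yours in fact picks up an extra $+1$ in the numerator).
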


\begin{proof}
Consider an arbitrary equilibrium assignment $\bv$ and let $v$ be an empty node that is adjacent to an agent of some type $R$. All agents of type different than $R$ must have at least one neighbor of different type than their own to not have incentive to jump to $v$. Since a colorful edge connects two agents of different type, the number of colorful edges in $\bv$ is at least $(n-n_R)/2 \geq (n-\max_T n_T)/2$. 
\end{proof}

Now we are ready to show that the price of anarchy with respect to the number of colorful edges is linear in the number of agents, even when the types have the same exact cardinality.

\begin{theorem} \label{thm:poa:colorful:general}
The price of anarchy with respect to the number of colorful edges is $\Theta(n)$, even when the types are symmetric. 
\end{theorem}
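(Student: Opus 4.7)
The plan is to obtain matching $O(n)$ upper and lower bounds on the price of anarchy with respect to colorful edges, with the lower bound realized even for symmetric types.

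For the upper bound, I would combine Lemma~\ref{lem:poa:colorful:equilibrium} with the trivial observation that the number of colorful edges in any assignment is at most the number of edges whose endpoints are both occupied, which in turn is at most $\binom{n}{2}$. When the types are symmetric, $\max_T n_T = n/k \leq n/2$, so Lemma~\ref{lem:poa:colorful:equilibrium} gives that every equilibrium has at least $(n - n/k)/2 \geq n/4$ colorful edges. Dividing yields $\PoA_{\CE} \leq \frac{n(n-1)/2}{n/4} = 2(n-1) = O(n)$. A slightly careful statement for asymmetric types likewise yields $O(n)$ since the denominator $(n - \max_T n_T)/2$ is at least a constant as long as at least one agent of each type exists (or in the degenerate case one can handle it trivially).

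For the lower bound, I would re-use the graph family constructed in the proof of Theorem~\ref{thm:poa:sw:general}, specialized to $k=2$ symmetric types (red and blue, each of size $n/2$): a clique $K_n$, a line $\ell_R$ of length $n/2$, a line $\ell_B$ of length $n/2$, with one node of $K_n$ connected to an endpoint of $\ell_R$, and every node of $\ell_B$ connected to the last node of $\ell_R$. The ``separated'' assignment places all red agents on $\ell_R$ and all blue agents on $\ell_B$, leaving $K_n$ empty. I claim this is an equilibrium: every blue agent has utility $1$ (via its edge to the last node of $\ell_R$), so no blue agent is motivated to jump; every red agent has utility $0$ or $1$, but the only empty nodes are in $K_n$, and these nodes have no occupied neighbor except possibly one red agent on $\ell_R$, so no red agent can strictly improve by jumping. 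The number of colorful edges in this equilibrium equals $n/2$ (the $n/2$ edges between $\ell_B$ and the last node of $\ell_R$). On the other hand, the assignment that places all $n$ agents in the clique $K_n$ (alternating types) achieves $(n/2)\cdot(n/2) = n^2/4$ colorful edges. Hence $\PoA_{\CE} \geq \frac{n^2/4}{n/2} = n/2 = \Omega(n)$, which matches the upper bound up to constants.

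The main technical care, I expect, will be in verifying that the separated assignment is indeed an equilibrium: one must check that no agent can strictly increase its utility by jumping to an empty node of $K_n$, using that $K_n$ is entirely empty in this assignment, so a red agent jumping there sees at most one non-empty neighbor (which happens to be red), and any blue agent already achieves utility $1$ which it cannot strictly exceed. Everything else is a direct substitution into Lemma~\ref{lem:poa:colorful:equilibrium} or a counting argument on $K_n$.
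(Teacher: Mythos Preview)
Your lower bound is correct and takes a different route from the paper: the paper uses a clique $K_n$ attached to a cycle $c_n$ and places the equilibrium on the cycle with agents arranged in same-type pairs, while you reuse the lines-and-star construction from Theorem~\ref{thm:poa:sw:general}. Both yield $n/2$ colorful edges at equilibrium versus $n^2/4$ at the optimum, so both give the $\Omega(n)$ bound for symmetric types. Your equilibrium verification is fine: every blue agent already has utility $1$, and every empty node in $K_n$ offers utility at most $1$ to blue agents (and $0$ to red agents), so no one strictly improves.

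The gap is in your upper bound for asymmetric types. You write that the denominator $(n-\max_T n_T)/2$ is ``at least a constant'' and that this ``likewise yields $O(n)$''. It does not: combining the trivial numerator bound $\binom{n}{2}$ with a constant lower bound on the denominator only gives $O(n^2)$. Concretely, take $k=2$ with $n_R = n-1$ and $n_B = 1$; then $(n-\max_T n_T)/2 = 1/2$, and your ratio is $\Theta(n^2)$. Since the theorem asserts $\PoA_{\CE} = \Theta(n)$ for the general class of instances (the ``even when symmetric'' clause strengthens the \emph{lower} bound, not the upper one), you need $O(n)$ for arbitrary type sizes. The fix is to bound the optimum more carefully so that a factor of $(n-\max_T n_T)$ appears in the numerator as well and cancels: observe that
\[
\CE(\bv^*) \;\le\; \tfrac{1}{2}\Big(n^2 - \sum_T n_T^2\Big) \;\le\; \tfrac{1}{2}\big(n^2 - (\max_T n_T)^2\big) \;=\; \tfrac{1}{2}(n-\max_T n_T)(n+\max_T n_T),
\]
which together with Lemma~\ref{lem:poa:colorful:equilibrium} gives $\PoA_{\CE} \le n + \max_T n_T \le 2n$. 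This is exactly the paper's argument; once you insert it, the rest of your plan goes through.
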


\begin{proof}
For the upper bound, observe that the maximum number of colorful edges would be achieved if all agents were to be connected to all agents of different type than their own. Hence, the optimal number of colorful edges is at most the number of edges between all agents minus the number of edges between all agents of the same type. Since $n = \sum_T n_T > \max_T n_T$, we have the following upper bound:
\begin{align*}
    \frac12 \bigg( n(n-1) - \sum_T n_T(n_T-1) \bigg) 
    &= \frac12 \bigg( n^2 - \sum_T n_T^2 \bigg) \\
    & \leq \frac12 ( n^2 - (\max_T n_T)^2 )\\
    &= \frac12 (n-\max_T n_T)(n+ \max_T n_T).
\end{align*}
So, by Lemma~\ref{lem:poa:colorful:equilibrium}, the price of anarchy is at most $n+\max_T n_T \leq 2n$. 

For the lower bound, consider a game with $k \geq 2$ types $\{T_1, \dots, T_k\}$ consisting of $n/k$ agents each, and assume that $n/k$ is an even number. The graph consists of the following components:
\begin{itemize}
    \item A clique $K_n$ of size $n$;
    \item A cycle $c_n$ of size $n$.
\end{itemize}
The components are connected as follows: A node of $K_n$ is connected to a node of $c_n$. See Figure~\ref{fig:poa:colorful:general:lower} for an example for $k=3$.

The optimal assignment is for the agents to occupy all the nodes of the $K_n$ component, which leads to the maximum possible number of colorful edges equal to 
\begin{align*}
    \frac{1}{2}\bigg( n(n-1) - k \cdot \frac{n}{k}\left(\frac{n}{k}-1 \right) \bigg) = \frac{n^2}{2}\cdot \frac{k-1}{k} \geq \frac{n^2}{4}.
\end{align*}
Now consider the following equilibrium assignment according to which the agents occupy the nodes of the $c_n$ component such that each agent has a neighbor of the same type and a neighbors of different type. As all agents have utility $1$, no agent has incentive to jump to the empty node of $K_n$ that is adjacent to a node of $c_n$. The number of colorful edges of this equilibrium is exactly equal to $n/2$, leading to a price of anarchy lower bound of $n/2$.
\end{proof}

\begin{figure}[t]
    \centering

\tikzset{every picture/.style={line width=0.75pt}} 

\begin{tikzpicture}[x=0.75pt,y=0.75pt,yscale=-1,xscale=1]

\draw   (100,72) -- (172,72) -- (172,142) -- (100,142) -- cycle ;
\draw  [fill={rgb, 255:red, 155; green, 155; blue, 155 }  ,fill opacity=1 ] (126,127) .. controls (126,122.03) and (130.48,118) .. (136,118) .. controls (141.52,118) and (146,122.03) .. (146,127) .. controls (146,131.97) and (141.52,136) .. (136,136) .. controls (130.48,136) and (126,131.97) .. (126,127) -- cycle ;
\draw  [fill={rgb, 255:red, 208; green, 2; blue, 27 }  ,fill opacity=1 ] (126,163) .. controls (126,158.03) and (130.48,154) .. (136,154) .. controls (141.52,154) and (146,158.03) .. (146,163) .. controls (146,167.97) and (141.52,172) .. (136,172) .. controls (130.48,172) and (126,167.97) .. (126,163) -- cycle ;
\draw  [fill={rgb, 255:red, 208; green, 2; blue, 27 }  ,fill opacity=1 ] (164,163) .. controls (164,158.03) and (168.48,154) .. (174,154) .. controls (179.52,154) and (184,158.03) .. (184,163) .. controls (184,167.97) and (179.52,172) .. (174,172) .. controls (168.48,172) and (164,167.97) .. (164,163) -- cycle ;
\draw  [fill={rgb, 255:red, 74; green, 144; blue, 226 }  ,fill opacity=1 ] (164,217) .. controls (164,212.03) and (168.48,208) .. (174,208) .. controls (179.52,208) and (184,212.03) .. (184,217) .. controls (184,221.97) and (179.52,226) .. (174,226) .. controls (168.48,226) and (164,221.97) .. (164,217) -- cycle ;
\draw  [fill={rgb, 255:red, 74; green, 144; blue, 226 }  ,fill opacity=1 ] (187,191) .. controls (187,186.03) and (191.48,182) .. (197,182) .. controls (202.52,182) and (207,186.03) .. (207,191) .. controls (207,195.97) and (202.52,200) .. (197,200) .. controls (191.48,200) and (187,195.97) .. (187,191) -- cycle ;
\draw  [fill={rgb, 255:red, 65; green, 117; blue, 5 }  ,fill opacity=1 ] (103,191) .. controls (103,186.03) and (107.48,182) .. (113,182) .. controls (118.52,182) and (123,186.03) .. (123,191) .. controls (123,195.97) and (118.52,200) .. (113,200) .. controls (107.48,200) and (103,195.97) .. (103,191) -- cycle ;
\draw  [fill={rgb, 255:red, 65; green, 117; blue, 5 }  ,fill opacity=1 ] (126,217) .. controls (126,212.03) and (130.48,208) .. (136,208) .. controls (141.52,208) and (146,212.03) .. (146,217) .. controls (146,221.97) and (141.52,226) .. (136,226) .. controls (130.48,226) and (126,221.97) .. (126,217) -- cycle ;
\draw    (146,163) -- (164,163) ;
\draw    (146,217) -- (164,217) ;
\draw    (126,163) -- (113,182) ;
\draw    (184,163) -- (197,182) ;
\draw    (113,200) -- (126,217) ;
\draw    (197,200) -- (184,217) ;
\draw    (136,136) -- (136,154) ;

\draw (125,80.4) node [anchor=north west][inner sep=0.75pt]    {$K_{n}$};
\draw (217,182.4) node [anchor=north west][inner sep=0.75pt]    {$c_{n}$};
\end{tikzpicture}

    \caption{The graph of the game considered in the proof of the lower bound in Theorem~\ref{thm:poa:colorful:general} for $k=3$ types (red, green, and blue) consisting of $2$ agents each.}
    \label{fig:poa:colorful:general:lower}
\end{figure}
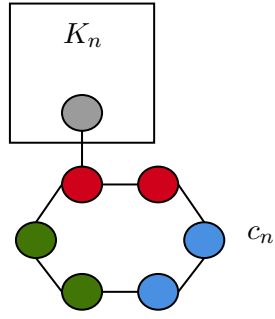

We next focus again on the case of symmetric types and specific classes of graphs including lines, cycles, and also regular graphs. 

\begin{theorem} \label{thm:poa:colorful:cycles}
When the types are symmetric and the graph is of degree at most $2$, the price of anarchy with respect to the number of colorful edges is $2$ when $k=2$ and $\frac{2k}{k-1}$ when $k \geq 3$.
\end{theorem}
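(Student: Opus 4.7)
The plan is to prove matching upper and lower bounds, handling $k=2$ and $k \geq 3$ separately but with essentially the same ideas. The upper bounds refine Lemma~\ref{lem:poa:colorful:equilibrium} using the fact that in a degree-at-most-$2$ graph any placement produces at most $n$ colorful edges (the induced subgraph on the $n$ agents has at most $n$ edges). The lower bounds simply recycle the equilibrium constructions from Theorem~\ref{thm:poa:sw:cycle} and recount the colorful edges they generate.

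For the upper bound, I would fix an equilibrium $\bv$ and let $v$ be an empty node adjacent to an agent of some type $R$. Every non-$R$ agent must satisfy $u_i(\bv) \geq 1$; otherwise it would strictly improve by jumping to $v$. Equivalently, every non-$R$ agent has at least one incident colorful edge. For $k=2$ each of the $n/2$ blue agents is an endpoint of a colorful edge, and every colorful edge has a unique blue endpoint, so $\CE(\bv) \geq n/2$ and $\PoA \leq n/(n/2) = 2$. For $k \geq 3$ each of the $n(k-1)/k$ non-$R$ agents contributes at least one colorful-edge incidence, while each colorful edge has at most two non-$R$ endpoints, so $\CE(\bv) \geq n(k-1)/(2k)$ and $\PoA \leq 2k/(k-1)$.

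For the lower bound when $k=2$, I would reuse the cycle of $n+1$ nodes with the equilibrium assignment from Theorem~\ref{thm:poa:sw:cycle}: $n/4$ consecutive triplets $(R,B,B)$ followed by $n/4$ tail reds, with the single empty node adjacent only to reds. Each triplet contributes one internal $R$-$B$ colorful edge; the $n/4-1$ junctions between consecutive triplets contribute one $B$-$R$ colorful edge each; the transition from the last $B$ to the first tail red adds one more; all other edges (tail-internal or empty-incident) are non-colorful. This gives $\CE = n/2$, versus an optimum $\CE = n-1$ from the alternating $RBRB\ldots$ pattern, for a ratio tending to $2$. For $k \geq 3$, I would again take the $(n+1)$-cycle with the Theorem~\ref{thm:poa:sw:cycle} construction: one red, then pairs $T_2T_2, T_3T_3, \ldots, T_kT_k, T_2T_2, \ldots$ cyclically until the non-reds are exhausted, then the remaining $n/k-1$ reds. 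Each pair is internally monochromatic; the junctions between consecutive pairs are colorful because consecutive pairs always carry distinct types under the cyclic order; the initial $R$-$T_2$ edge and the non-red-to-red junction at the tail are colorful; nothing else is. The count gives $\CE = (k-1)n/(2k) + 1$, while the optimum (alternating the $k$ types on the path of $n$ agents) has $\CE = n-1$, so the ratio tends to $2k/(k-1)$.

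The main obstacle is purely bookkeeping: making sure each colorful edge in the lower-bound constructions is counted exactly once, especially at the boundary between the pair section and the tail of reds in the $k \geq 3$ case. The verification that both constructions are equilibria is inherited directly from the proof of Theorem~\ref{thm:poa:sw:cycle}: the single empty node is adjacent only to red agents, so jumping to it grants a non-red agent utility at most $1$, which each such agent already enjoys.
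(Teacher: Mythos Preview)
Your proposal is correct and follows essentially the same approach as the paper: the upper bounds come from the ``every non-$R$ agent has a colorful incident edge'' observation (you are in fact more careful than the paper in the $k=2$ case, where you explicitly note that each colorful edge has a \emph{unique} blue endpoint, yielding $n/2$ rather than the $n/4$ that a direct application of Lemma~\ref{lem:poa:colorful:equilibrium} would give), and the lower bounds reuse the equilibrium constructions from Theorem~\ref{thm:poa:sw:cycle} with a recount of colorful edges. Your exact counts ($n/2$ for $k=2$ and $(k-1)n/(2k)+1$ for $k\ge 3$) are slightly sharper than the paper's, but both yield the same limiting ratios.
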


\begin{proof}
We start with the case $k=2$. By Lemma~\ref{lem:poa:colorful:equilibrium}, since $\max_T n_T = n/2$, the number of colorful edges in any equilibrium is at least $n/2$. Since the nodes of the graph have degree (at most) $2$, each agent can at most participate in two colorful edges, and thus the maximum number of colorful edges is $n$. So, the price of anarchy is at most $2$. For the lower bound, consider again the game of Theorem~\ref{thm:poa:sw:cycle} for $k=2$ (see also Figure~\ref{fig:poa:sw:cycle:2-types}). It is not hard to observe that the equilibrium has $n/2$ colorful edges, whereas the optimal number of colorful edges is $n$, and thus the price of anarchy is at least $2$. 

We now turn to the case $k \geq 3$. By Lemma~\ref{lem:poa:colorful:equilibrium}, since $\max_T n_T = n/k$, there are at least $(n-n/k)/2 = n\frac{k-1}{2k}$ colorful edges at any equilibrium. Since the maximum number of colorful edges is $n$, the price of anarchy is then at most $\frac{2k}{k-1}$. For the lower bound, consider again the game of Theorem~\ref{thm:poa:sw:cycle} for $k\geq 3$ (see also Figure~\ref{fig:poa:sw:cycle:ktypes}). Each non-red agent participates in exactly one colorful edge, which implies that there are $n\frac{k-1}{2k}$ colorful edges at equilibrium, and the price of anarchy is at least $2k/(k-1)$. 
\end{proof}

\begin{theorem} \label{thm:poa:colorful:regular}
When the types are symmetric and the graph is $\delta$-regular, the price of anarchy with respect to the number of colorful edges is $\Theta(\delta)$.
\end{theorem}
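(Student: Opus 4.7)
The plan is to derive the $O(\delta)$ upper bound by combining Lemma~\ref{lem:poa:colorful:equilibrium} with a simple edge count, and to match it by exhibiting a $\delta$-regular instance in which some equilibrium has only $\Theta(n)$ colorful edges while the optimum has $\Theta(n\delta)$. For the upper bound I would invoke Lemma~\ref{lem:poa:colorful:equilibrium} with the symmetric-type equality $\max_T n_T = n/k$ to conclude that every equilibrium has at least $n(k-1)/(2k)$ colorful edges, and observe that any assignment has at most $n\delta/2$ colorful edges since every agent has exactly $\delta$ neighbors and each edge is charged to two agents. Dividing gives $\PoA_\CE \leq \frac{k\delta}{k-1} = O(\delta)$.

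For the matching lower bound, I plan to construct a $\delta$-regular graph $G$ (taking $\delta$ even; odd $\delta$ is a routine parity adjustment) as the connected union of $k$ lightly interconnected ``clumps'' $C_1,\ldots,C_k$ of size $\delta+1$. Each clump is obtained from $K_{\delta+1}$ by deleting a perfect matching on the $\delta$ vertices $v^i_1,\ldots,v^i_\delta$, leaving these with intra-clump degree $\delta-1$ and keeping a distinguished vertex $v^i_0$ with intra-clump degree $\delta$. The clumps are then interconnected by a cross-clump perfect matching on the $\delta k$ ``dangling'' endpoints $\{v^i_j: j\geq 1\}$, chosen so that every matching edge joins distinct clumps and the resulting graph is connected. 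The candidate bad equilibrium places the $\delta$ agents of type $T_i$ on $v^i_1,\ldots,v^i_\delta$ and leaves $v^i_0$ empty. Every placed agent has $\delta-1$ same-type intra-clump neighbors plus one cross-clump neighbor of a different type, giving utility $1$; every empty node $v^i_0$ has all $\delta$ neighbors inside $C_i$ and so sees only the single type $T_i$ in its neighborhood, offering utility at most $1$ to any potential deviator. Hence no agent strictly improves by jumping, and the assignment is an equilibrium with exactly $\delta k/2 = n/2$ colorful edges. A round-robin assignment that spreads the $k$ types evenly inside each clump then makes a constant fraction of the $\Theta(n\delta)$ edges of $G$ colorful, so the optimum has $\Omega(n\delta)$ colorful edges and $\PoA_\CE = \Omega(\delta)$.

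The main obstacle I foresee is purely combinatorial: verifying that the $\delta k$ dangling endpoints admit a connected cross-clump perfect matching that preserves $\delta$-regularity. This is a standard degree-sequence realizability argument for even $\delta k$ and $k\geq 2$; odd $\delta k$ or very small $k$ require a small local adjustment (e.g., letting one clump carry one extra dangling endpoint compensated elsewhere, or using an almost-matching that misses one vertex). Once the graph is built, both the equilibrium verification and the $\Omega(n\delta)$ bound on the optimum follow immediately from an inside-the-clump edge count.
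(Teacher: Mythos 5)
Your proof is correct, and while the upper bound is exactly the paper's argument (Lemma~\ref{lem:poa:colorful:equilibrium} with $\max_T n_T=n/k$ against the trivial cap of $n\delta/2$ colorful edges, giving $\frac{k\delta}{k-1}\le 2\delta$), your lower-bound gadget is genuinely different from the paper's. The paper sets $\delta=n/k+1$ and builds $k$ cliques of size $n/k+1$ attached to a $k$-cycle of hub nodes (plus one auxiliary cycle through the two spare nodes of each clique to restore regularity); the bad equilibrium segregates each type into ``its'' clique with the hubs colored cyclically, giving exactly $k(\delta-1)$ colorful edges against $k\delta(\delta-1)/2$ at the optimum. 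Your construction instead takes $k$ copies of $K_{\delta+1}$ minus a perfect matching on the $\delta$ non-hub vertices, joined by a connected cross-clump perfect matching, with the hubs left empty; the verification that this is an equilibrium (every agent has utility exactly $1$, every empty hub sees a single type) and the count of $n/2$ colorful edges versus $\Omega(n\delta)$ at the optimum are immediate, so your gadget is arguably cleaner and more symmetric. What the paper's hub-and-cycle design buys is that it works without any parity condition, whereas your clump needs $\delta$ even, and the odd-$\delta$ fix deserves one careful sentence: the naive repair of deleting a perfect matching on all $\delta+1$ vertices fails, because then each empty node acquires a cross-clump neighbor and offers utility $2$, destroying the equilibrium. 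A correct adjustment is, e.g., to let one non-hub vertex per clump lose two intra-clump edges and receive two cross-clump edges (the deleted subgraph on the $\delta$ non-hub vertices then has even degree sum), which keeps every empty hub's neighborhood monochromatic and only changes the equilibrium's colorful-edge count by $O(k)$; alternatively, proving the $\Omega(\delta)$ bound for even $\delta$ already suffices for the $\Theta(\delta)$ statement up to constants. With that point made explicit, your proof is complete.
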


\begin{proof}
For the upper bound, observe that the maximum number of colorful edges is $n\delta/2$ since each agent can have at most $\delta$ neighbors and can thus participate in at most $\delta$ colorful edges. Also, by Lemma~\ref{lem:poa:colorful:equilibrium}, since $\max_T n_T = n/k$, there are at least $\frac{n}{2}\frac{k-1}{k} \geq n/4$ colorful edges in an equilibrium, thus leading to an upper bound of $2\delta$ on the price of anarchy. 

For the lower bound, consider a game with $k \geq 2$ symmetric types $\{T_1, \ldots, T_k\}$, and $\delta = n/k+1$. The graph consists of the following components:
\begin{itemize}
    \item A cycle $c$ of size $k$ consisting of nodes $\{v_1, \ldots, v_k\}$;
    \item $k$ cliques $(K_\ell)_{\ell\in [k]}$ of size $n/k+1$ each. 
\end{itemize}
The components are connected to each other as follows: For each $\ell \in [k]$, node $v_\ell$ is connected to all but two nodes of clique $K_{(\ell+1) \bmod k}$. Also, the nodes in the cliques that are not connected to nodes of the cycle $c$ form another cycle (so that each of these nodes has one more neighbor in some other clique). Observe that this is indeed a $\delta$-regular graph: Each node $v_\ell$ of the cycle $c$ has degree $2 + n/k-1 = \delta$, and each node of a clique $K_\ell$ has degree $1 + n/k = \delta$. 

An equilibrium is the following: For each $\ell \in [k]$, node $v_\ell$ is occupied by an agent of type $T_\ell$ and the $n/k-1$ nodes of clique $K_{(\ell+1) \bmod k}$ that are adjacent to $v_\ell$ are occupied by agents of type $T_{(\ell+1) \bmod k}$; the remaining two nodes of each clique are left empty. See also Figure~\ref{fig:poa:colorful:regular} for an example with $k=3$. This is an equilibrium assignment as all agents have utility at least $1$ and each empty node is adjacent to only agents of one type. The only colorful edges in this assignment are those connecting the nodes of $c$ to each other and to nodes in the cliques. So, there are $k + k (\delta-2) = k(\delta-1)$ colorful edges in this equilibrium. On the other hand, in the optimal assignment, we can assign to each clique, one agent of every type, so that each agent has exactly $k-1$ neighbors, all of different type, leading to $k \frac{\delta(\delta-1)}{2}$ colorful edges. So, the price of anarchy is $\Omega(\delta)$.
\end{proof}

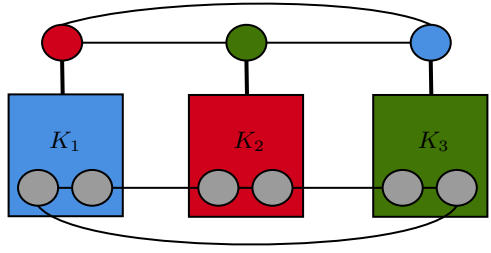
\begin{figure}[t]
    \centering

\tikzset{every picture/.style={line width=0.75pt}} 

\begin{tikzpicture}[x=0.75pt,y=0.75pt,yscale=-1,xscale=1]

\draw  [fill={rgb, 255:red, 74; green, 144; blue, 226 }  ,fill opacity=1 ] (113.29,147) -- (170.29,147) -- (170.29,208.29) -- (113.29,208.29) -- cycle ;
\draw  [fill={rgb, 255:red, 208; green, 2; blue, 27 }  ,fill opacity=1 ] (130,121) .. controls (130,116.03) and (134.48,112) .. (140,112) .. controls (145.52,112) and (150,116.03) .. (150,121) .. controls (150,125.97) and (145.52,130) .. (140,130) .. controls (134.48,130) and (130,125.97) .. (130,121) -- cycle ;
\draw  [fill={rgb, 255:red, 65; green, 117; blue, 5 }  ,fill opacity=1 ] (222,121) .. controls (222,116.03) and (226.48,112) .. (232,112) .. controls (237.52,112) and (242,116.03) .. (242,121) .. controls (242,125.97) and (237.52,130) .. (232,130) .. controls (226.48,130) and (222,125.97) .. (222,121) -- cycle ;
\draw    (150,121) -- (222,121) ;
\draw  [fill={rgb, 255:red, 74; green, 144; blue, 226 }  ,fill opacity=1 ] (314,121) .. controls (314,116.03) and (318.48,112) .. (324,112) .. controls (329.52,112) and (334,116.03) .. (334,121) .. controls (334,125.97) and (329.52,130) .. (324,130) .. controls (318.48,130) and (314,125.97) .. (314,121) -- cycle ;
\draw    (140,112) .. controls (183.29,98.29) and (294.29,97.29) .. (324,112) ;
\draw    (242,121) -- (314,121) ;
\draw [line width=1.5]    (140,130) -- (140.29,147.29) ;
\draw [line width=1.5]    (232,130) -- (232.29,147.29) ;
\draw [line width=1.5]    (324,130) -- (324.29,147.29) ;
\draw  [fill={rgb, 255:red, 208; green, 2; blue, 27 }  ,fill opacity=1 ] (203.29,147.29) -- (260.29,147.29) -- (260.29,208.57) -- (203.29,208.57) -- cycle ;
\draw  [fill={rgb, 255:red, 65; green, 117; blue, 5 }  ,fill opacity=1 ] (295.29,147.29) -- (352.29,147.29) -- (352.29,208.57) -- (295.29,208.57) -- cycle ;
\draw  [fill={rgb, 255:red, 155; green, 155; blue, 155 }  ,fill opacity=1 ] (145,194) .. controls (145,189.03) and (149.48,185) .. (155,185) .. controls (160.52,185) and (165,189.03) .. (165,194) .. controls (165,198.97) and (160.52,203) .. (155,203) .. controls (149.48,203) and (145,198.97) .. (145,194) -- cycle ;
\draw  [fill={rgb, 255:red, 155; green, 155; blue, 155 }  ,fill opacity=1 ] (118,194) .. controls (118,189.03) and (122.48,185) .. (128,185) .. controls (133.52,185) and (138,189.03) .. (138,194) .. controls (138,198.97) and (133.52,203) .. (128,203) .. controls (122.48,203) and (118,198.97) .. (118,194) -- cycle ;
\draw  [fill={rgb, 255:red, 155; green, 155; blue, 155 }  ,fill opacity=1 ] (208,194) .. controls (208,189.03) and (212.48,185) .. (218,185) .. controls (223.52,185) and (228,189.03) .. (228,194) .. controls (228,198.97) and (223.52,203) .. (218,203) .. controls (212.48,203) and (208,198.97) .. (208,194) -- cycle ;
\draw  [fill={rgb, 255:red, 155; green, 155; blue, 155 }  ,fill opacity=1 ] (235,194) .. controls (235,189.03) and (239.48,185) .. (245,185) .. controls (250.52,185) and (255,189.03) .. (255,194) .. controls (255,198.97) and (250.52,203) .. (245,203) .. controls (239.48,203) and (235,198.97) .. (235,194) -- cycle ;
\draw  [fill={rgb, 255:red, 155; green, 155; blue, 155 }  ,fill opacity=1 ] (300,194) .. controls (300,189.03) and (304.48,185) .. (310,185) .. controls (315.52,185) and (320,189.03) .. (320,194) .. controls (320,198.97) and (315.52,203) .. (310,203) .. controls (304.48,203) and (300,198.97) .. (300,194) -- cycle ;
\draw  [fill={rgb, 255:red, 155; green, 155; blue, 155 }  ,fill opacity=1 ] (327,194) .. controls (327,189.03) and (331.48,185) .. (337,185) .. controls (342.52,185) and (347,189.03) .. (347,194) .. controls (347,198.97) and (342.52,203) .. (337,203) .. controls (331.48,203) and (327,198.97) .. (327,194) -- cycle ;
\draw    (165,194) -- (208,194) ;
\draw    (255,194) -- (300,194) ;
\draw    (128,203) .. controls (148.29,229) and (321.29,229) .. (337,203) ;
\draw    (138,194) -- (145,194) ;
\draw    (228,194) -- (235,194) ;
\draw    (320,194) -- (327,194) ;

\draw (132,164.4) node [anchor=north west][inner sep=0.75pt]  [font=\footnotesize]  {$K_{1}$};
\draw (224,164.4) node [anchor=north west][inner sep=0.75pt]  [font=\footnotesize]  {$K_{2}$};
\draw (316,164.4) node [anchor=north west][inner sep=0.75pt]  [font=\footnotesize]  {$K_{3}$};
\end{tikzpicture}
    \caption{The $\delta$-regular graph considered in the proof of the lower bound in Theorem~\ref{thm:poa:colorful:regular} for $k=3$ types (red, green, and blue). The depicted assignment is the equilibrium minimizing the number of colorful edges.}
    \label{fig:poa:colorful:regular}
\end{figure}

\subsection{Price of Stability}
We conclude with a lower bound of approximately $3/2$ on the price of stability for both objectives. This implies that there are games in which the optimal assignment is not an equilibrium (in contrast to the games considered in the lower bounds on the price of anarchy). 

\begin{theorem} \label{thm:pos:sw:lower}
The price of stability with respect to the social welfare and the number of colorful edges is at least $3/2-\varepsilon$, for any $\varepsilon > 0$.
\end{theorem}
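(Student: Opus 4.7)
The plan is to exhibit a family of instances parameterized by $m$, such that the ratio between the optimum value and the best-equilibrium value tends to $3/2$ for both $\SW$ and $\CE$ as $m$ grows. Because any assignment $\bv$ satisfies $\SW(\bv) \le 2\,\CE(\bv)$ (each colorful edge contributes at most $+1$ to each of its two endpoints' utilities), I would aim for a single construction whose gadget pins down the $3/2$ ratio for both measures at once.

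First, I would design a small gadget subgraph $H$ together with a fixed multiset of agents such that: (i) the $\SW$- and $\CE$-maximizing placement of these agents on $H$ assigns maximum individual utility to all but one agent, achieving gadget optima $\SW^*_H = 6$ and $\CE^*_H = 3$ (illustrative constants; what matters is the $3/2$ ratio); (ii) this optimal placement is not a Nash equilibrium, because an empty ``bait'' node in $H$ offers a strict unilateral improvement to one of the occupied agents; and (iii) every Nash equilibrium on $H$ achieves $\SW$ at most $4$ and $\CE$ at most $2$. A promising template is a triangle of three distinct-type agents (yielding $\SW=6$ and $\CE=3$ locally) together with a bait node adjacent to two of the triangle's vertices, arranged so that the agent on the third vertex strictly prefers the bait (forcing instability of the optimum) while any stable completion either leaves the bait occupied and breaks the triangle's colourful edges, or forces an agent into a monochromatic neighborhood.

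Second, I would glue $m$ copies of $H$ by a thin connecting structure (for example, a single shared low-degree empty node between consecutive gadgets) so that the full graph is connected and no cross-gadget jump is ever profitable: this is ensured by making every inter-gadget node have at most one occupied neighbor, so that it offers utility at most $1$ to any potential jumper already enjoying utility $\ge 1$ inside its gadget. Then $\SW$ and $\CE$ decompose across gadgets up to $O(1)$ boundary terms, and the gadget-level ratio propagates to an overall ratio
$$\frac{6m - O(1)}{4m + O(1)},$$
which exceeds $3/2 - \varepsilon$ for all sufficiently large $m$.

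The main obstacle is the design of the gadget itself: it must (a) admit at least one Nash equilibrium so that $\PoS$ is well-defined (invocable using one of Theorems~\ref{thm:existence:1empty}--\ref{thm:existence:2empty3regular} for a small enough gadget); (b) have its social optimum strictly fail to be an equilibrium, which is the key design challenge, since in most ``nice'' small instances the optimum already gives every agent its maximum possible utility and is therefore trivially stable; and (c) lose exactly a $1/3$-fraction of \emph{both} objectives in \emph{every} equilibrium, not merely in some of them. Item~(b) forces the gadget to contain both a clean high-$\SW$ sub-configuration and an empty node whose degree gives it a strictly higher individual utility than some occupied node in the clean sub-configuration achieves; verifying (c) amounts to a finite case analysis over all placements of the gadget's agent multiset together with a Nash-condition check.
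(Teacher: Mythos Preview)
Your high-level intuition—destabilize the social optimum via a bait node, then case-analyze equilibria—is the right one, but the replication-by-gluing scheme has two concrete gaps.

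First, the sketched gadget does not destabilize its optimum. In a triangle of three distinct-type agents each agent already has utility $2$; a bait node adjacent to two of the triangle's vertices also offers exactly utility $2$ to the third agent, so there is no strict improvement and the triangle placement is already an equilibrium. More generally, it is far from clear that any fixed-size gadget achieves a $3/2$ gap between its optimum and its \emph{best} equilibrium for both objectives; you identify this as ``the main obstacle'' but do not resolve it.

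Second, even granting such a gadget, the gluing step is not justified. To bound the price of stability you must upper-bound \emph{every} equilibrium of the glued graph, and equilibria need not respect the gadget decomposition: an agent from one gadget could occupy an empty node in another gadget (for instance its bait node) and be content there, since all neighbors are of different type. Your condition on inter-gadget bridge nodes only rules out jumping \emph{to} bridges, not cross-gadget placements at internal empty nodes. Ruling out high-welfare cross-gadget equilibria would require substantial additional argument.

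The paper sidesteps both issues by scaling within a single instance rather than by gluing copies. It uses four types—$x$ red agents and one each of three other colors—on a graph where $x$ nodes $p_1,\dots,p_x$ are each adjacent to three hubs $q,s,t$; an extra bait node $r$ is adjacent to $q,s,t$, and $s,t$ are adjacent. In the intended optimum (reds on the $p_i$, the other three at $q,s,t$) every red has utility $3$, but the agent at $q$ sees only reds and strictly prefers $r$, so this optimum is unstable. A short case analysis on where the non-red agents and any stray reds sit then shows that in every equilibrium at least $x-O(1)$ red agents have utility at most $2$, giving $\SW \le 2x+O(1)$ and $\CE \le 2x+O(1)$ against optima of $3x+O(1)$. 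The $3/2$ ratio emerges as $x\to\infty$; the growth parameter is the size of one type, not the number of gadget copies, which is what makes the case analysis finite and the decomposition issue disappear.
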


\begin{proof}
We consider the following game with $k=4$ types. Let $x \geq 1$ be an integer. There are $x$ red agents, one blue agent, one green agent, and one yellow agent; so, in total, there are $n=x+3$ agents. The graph is depicted in Figure \ref{fig:pos:sw}, where each node $p_i$, $i \in [x]$ is connected to nodes $q$, $s$ and $t$. 

First consider the assignment according to which the $x$ red agents occupy the nodes $p_1, \ldots, p_x$, the blue agent is at node $q$, the green agent is at node $s$, the yellow agent is at node $t$, and thus node $r$ is left empty. Observe that this assignment is not equilibrium since the blue agent has utility $1$ and can achieve a utility of $2$ by jumping to $r$. The optimal social welfare is at least the social welfare of this assignment, which is $3x+5$. Similarly, the optimal number of colorful edges is at least the number of colorful edges of this assignment, which is $3x+3$. 

We now argue about the social welfare and the number of colorful edges that can be achieves in any equilibrium. The possible assignments can be classified into one of the following three cases:
\begin{itemize}
\item
{\bf Case 1:} All $x$ red agents occupy nodes $p_1,p_2,\ldots,p_x$. The only possible equilibrium is when the blue, green, and yellow agents are in the triangle formed by $r$, $s$ and $t$. The social welfare of this equilibrium assignment is $2x+8$, and the number of colorful edges is $2x+3$. 

\item 
{\bf Case 2:} There is a red agent at $r$ and no red agents at $q$, $s$ and $t$. Then, all remaining $x-1$ red agents have to occupy $x-1$ nodes among $p_1,\ldots,p_x$. For an equilibrium, the three non-red agents must be at $r$, $s$ and the remaining node among $p_1,\ldots,p_x$. Hence, the social welfare of such an equilibrium is $2x+8$, and the number of colorful edges is $2x+3$. 

\noindent
Case 3: There is a red agent in at least one of $q,s$ and $t$. There must be at least $x-4$ red agents in $p_1,\ldots,p_x$. There can be at most $2(x-4)$ colorful edges out of the $3(x-4)$ edges to which these red agents are connected, since one of their 3 neighbors is occupied by a red agent. So, in this case, $\CE \le 2(x-4)+4(3)+4=2x+8$.

\item 
{\bf Case 3:} There is a red agent in at least one of $q$, $s$ and $t$. Then, there must be at least $x-4$ red agents at $p_1,\ldots,p_x$. Since one of their three neighbors is occupied by a red agent, the utility of these agent is at most $2$, and thus the social welfare of any such equilibrium is at most $2(x-4)(2)+7\cdot 3 = 2x+13$. In addition, the number of colorful edges is at most $2(x-4)+4\cdot3 + 4 = 2x+8$. 
\end{itemize}
Overall, the maximum possible social welfare of any equilibrium is at most $2x+13$ and the maximum possible number of colorful edges is $2x+8$. Consequently, the price of stability with respect to the social welfare is at least $\frac{3x+5}{2x+13}$ and the price of stability with respect to the number of colorful edges is at least $\frac{3x+8}{2x+8}$. Both quantities approach $3/2$ as $x$ approaches infinity. 
\end{proof}

\begin{figure}[t]
\centering

\tikzset{every picture/.style={line width=0.75pt}} 

\begin{tikzpicture}[x=0.75pt,y=0.75pt,yscale=-1,xscale=1]

\draw  [fill={rgb, 255:red, 255; green, 255; blue, 255 }  ,fill opacity=1 ] (301.71,522.4) .. controls (301.71,517.62) and (305.78,513.74) .. (310.8,513.74) .. controls (315.82,513.74) and (319.89,517.62) .. (319.89,522.4) .. controls (319.89,527.19) and (315.82,531.06) .. (310.8,531.06) .. controls (305.78,531.06) and (301.71,527.19) .. (301.71,522.4) -- cycle ;
\draw  [fill={rgb, 255:red, 255; green, 255; blue, 255 }  ,fill opacity=1 ] (302.81,560.21) .. controls (302.81,555.43) and (306.88,551.55) .. (311.9,551.55) .. controls (316.92,551.55) and (320.99,555.43) .. (320.99,560.21) .. controls (320.99,564.99) and (316.92,568.87) .. (311.9,568.87) .. controls (306.88,568.87) and (302.81,564.99) .. (302.81,560.21) -- cycle ;
\draw  [fill={rgb, 255:red, 255; green, 255; blue, 255 }  ,fill opacity=1 ] (263.32,478.11) .. controls (263.32,473.33) and (267.39,469.45) .. (272.41,469.45) .. controls (277.43,469.45) and (281.5,473.33) .. (281.5,478.11) .. controls (281.5,482.9) and (277.43,486.77) .. (272.41,486.77) .. controls (267.39,486.77) and (263.32,482.9) .. (263.32,478.11) -- cycle ;
\draw  [fill={rgb, 255:red, 255; green, 255; blue, 255 }  ,fill opacity=1 ] (342.29,478.11) .. controls (342.29,473.33) and (346.36,469.45) .. (351.38,469.45) .. controls (356.4,469.45) and (360.47,473.33) .. (360.47,478.11) .. controls (360.47,482.9) and (356.4,486.77) .. (351.38,486.77) .. controls (346.36,486.77) and (342.29,482.9) .. (342.29,478.11) -- cycle ;
\draw    (272.41,486.77) -- (301.71,522.4) ;
\draw    (351.38,486.77) -- (319.89,522.4) ;
\draw    (281.5,478.11) -- (342.29,478.11) ;
\draw  [fill={rgb, 255:red, 255; green, 255; blue, 255 }  ,fill opacity=1 ] (260.03,631.5) .. controls (260.03,626.72) and (264.1,622.84) .. (269.12,622.84) .. controls (274.14,622.84) and (278.21,626.72) .. (278.21,631.5) .. controls (278.21,636.29) and (274.14,640.16) .. (269.12,640.16) .. controls (264.1,640.16) and (260.03,636.29) .. (260.03,631.5) -- cycle ;
\draw  [fill={rgb, 255:red, 255; green, 255; blue, 255 }  ,fill opacity=1 ] (291.84,631.5) .. controls (291.84,626.72) and (295.91,622.84) .. (300.93,622.84) .. controls (305.95,622.84) and (310.02,626.72) .. (310.02,631.5) .. controls (310.02,636.29) and (305.95,640.16) .. (300.93,640.16) .. controls (295.91,640.16) and (291.84,636.29) .. (291.84,631.5) -- cycle ;
\draw  [fill={rgb, 255:red, 255; green, 255; blue, 255 }  ,fill opacity=1 ] (355.45,633.66) .. controls (355.45,628.88) and (359.52,625) .. (364.54,625) .. controls (369.56,625) and (373.63,628.88) .. (373.63,633.66) .. controls (373.63,638.45) and (369.56,642.32) .. (364.54,642.32) .. controls (359.52,642.32) and (355.45,638.45) .. (355.45,633.66) -- cycle ;
\draw   (319.81,632.82) .. controls (319.81,632.17) and (320.33,631.65) .. (320.96,631.65) .. controls (321.6,631.65) and (322.11,632.17) .. (322.11,632.82) .. controls (322.11,633.46) and (321.6,633.98) .. (320.96,633.98) .. controls (320.33,633.98) and (319.81,633.46) .. (319.81,632.82) -- cycle ;
\draw   (330.36,632.82) .. controls (330.36,632.17) and (330.88,631.65) .. (331.51,631.65) .. controls (332.15,631.65) and (332.66,632.17) .. (332.66,632.82) .. controls (332.66,633.46) and (332.15,633.98) .. (331.51,633.98) .. controls (330.88,633.98) and (330.36,633.46) .. (330.36,632.82) -- cycle ;
\draw   (341.54,633.08) .. controls (341.54,632.44) and (342.06,631.91) .. (342.69,631.91) .. controls (343.33,631.91) and (343.84,632.44) .. (343.84,633.08) .. controls (343.84,633.72) and (343.33,634.24) .. (342.69,634.24) .. controls (342.06,634.24) and (341.54,633.72) .. (341.54,633.08) -- cycle ;
\draw   (250,608.31) -- (386,608.31) -- (386,658) -- (250,658) -- cycle ;
\draw    (311.9,568.87) -- (312.52,608.31) ;
\draw    (310.8,531.06) -- (311.42,551.06) ;
\draw    (272.41,486.77) -- (312.52,608.31) ;
\draw    (351.38,486.77) -- (312.52,608.31) ;

\draw (259.76,626.48) node [anchor=north west][inner sep=0.75pt]  [font=\scriptsize]  {$\ p_{1}$};
\draw (292.67,626.48) node [anchor=north west][inner sep=0.75pt]  [font=\scriptsize]  {$\ p_{2}$};
\draw (355.18,628.64) node [anchor=north west][inner sep=0.75pt]  [font=\scriptsize]  {$\ p_{x}$};
\draw (304.54,554.02) node [anchor=north west][inner sep=0.75pt]  [font=\scriptsize]  {$\ q$};
\draw (304.54,517.29) node [anchor=north west][inner sep=0.75pt]  [font=\scriptsize]  {$\ r$};
\draw (266.15,473.01) node [anchor=north west][inner sep=0.75pt]  [font=\scriptsize]  {$\ s$};
\draw (345.12,474.09) node [anchor=north west][inner sep=0.75pt]  [font=\scriptsize]  {$\ t$};
\end{tikzpicture}
\caption{The graph of the game used to show a lower bound on the price of stability for both the social welfare and the number of colorful edges. Nodes $s$, $t$ and $q$ are connected to each of the nodes $p_1,\ldots,p_x$.}
\label{fig:pos:sw}
\end{figure}
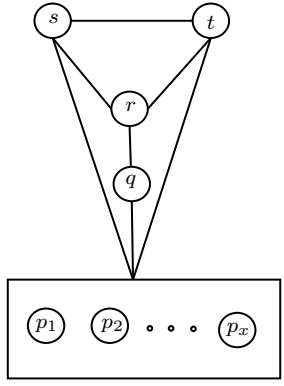

\section{Conclusion and Open Problems}
In this paper we considered a variety-seeking jump game in which agents occupy the nodes of a graph and aim to maximize the number of different types in their neighborhood. We showed the existence of equilibrium assignments for many classes of games depending on the number of agents and their types, as well as the structure of the graph. We also showed tight bounds on the price of anarchy with respect to the social welfare and the number of colorful edges, both of which measure in different ways the achieved diversity of equilibrium assignments.

Our work leaves open several interesting, yet quite challenging questions. While we have shown that equilibrium assignments exist for several classes of games, it remains unknown whether such assignments always exist or if there are games that do not admit any equilibrium. We conjecture that there is always at least one equilibrium and the game is potential when there are only two empty nodes (without any further restrictions on the structure of the underlying graph). Preliminary experiments using random graphs strongly support the latter claim; improving response cycles were found only when there were at least three empty nodes in the graph. A formal proof of this, however, remains elusive. 

In terms of measuring the diversity of equilibria, while we showed tight bounds on the price of anarchy in terms of the social welfare and the number of colorful edges, we were not able to show tight bounds on the price of stability. One could also consider many other objective functions to measure diversity, such as the degree of integration or variations of it \citep{schelling-journal}. 

\bibliographystyle{plainnat}
\bibliography{references}

\end{document}